\DeclareMathOperator{\E}{\mathbbmss{E}}
\newtheorem{theorem}{Theorem}
\newtheorem{lemma}{Lemma}
\theoremstyle{definition}  
\newtheorem{remark}{Remark}
\newtheorem{example}{Example}
\begin{document}
\title{On the Optimality of Treating Inter-Cell Interference as Noise in Uplink Cellular Networks
\footnotetext{This work was partially supported by the U.K. Engineering and Physical Sciences Research Council (EPSRC)
under grant EP/N015312/1. This paper was presented in part at the 2018 IEEE
International Symposium on Information Theory \cite{Joudeh2018b}.}}
\author{Hamdi~Joudeh and Bruno~Clerckx \\
\small
Communications and Signal Processing Group, Department of Electrical and Electronic Engineering \\
\small
Imperial College London, London SW7 2AZ, United Kingdom\\
\small
 Email: \{hamdi.joudeh10, b.clerckx\}@imperial.ac.uk}
\date{}
\maketitle
\begin{abstract}
In this paper, we explore the information-theoretic optimality of treating interference as noise (TIN) in
cellular networks.
We focus on uplink scenarios modeled by the Gaussian interfering multiple access channel (IMAC), comprising $K$ mutually interfering multiple access channels (MACs), each formed by an arbitrary number of transmitters communicating independent messages to one receiver.
We define TIN for this setting as a scheme in which each MAC (or cell) performs a power-controlled version of its capacity-achieving
strategy, with Gaussian codebooks and successive decoding,  while treating interference from all other MACs (i.e. inter-cell interference) as noise.
We characterize the generalized degrees-of-freedom (GDoF) region achieved through the proposed TIN scheme,
and then identify conditions under which this achievable region is convex without the need for time-sharing.
We then tighten these convexity conditions and identify a regime in which the proposed TIN scheme achieves the entire GDoF region of the IMAC
and is within a constant gap of the entire capacity region.
\end{abstract}
\newpage
\section{Introduction}
\label{sec:introduction}
Transmitter power control in conjunction with treating interference as noise (TIN) at receivers
is a key  principle for interference management in wireless networks.
Schemes based on TIN are attractive in practice due to their relative simplicity and robustness.
From a theoretical point of view, TIN received considerable research attention
mainly due to its information-theoretic optimality (and near-optimality) in several settings and regimes.
This is best exemplified by the 2-user Gaussian interference channel (IC), for which the capacity region
remains one of the longest standing open problems in network information theory.
The 2-user IC capacity question, while formidable in its generality, has been settled for few special cases; see for example \cite[Ch. 6]{ElGamal2011}.
One such special case is the \emph{noisy interference regime}, in which interference is sufficiently weak such that a simple
TIN scheme, where each transmitter uses its full power, achieves the exact sum-capacity \cite{Annapureddy2009,Shang2009,Motahari2009}.

Beyond the sum-capacity of the 2-user IC, e.g. for the entire capacity region or $K$-user ICs, the problem of identifying regimes
in which TIN is optimal from an exact capacity viewpoint becomes significantly more difficult.
In such cases, power control and time-sharing play a pivotal role in achieving different rate trade-offs; transmitting at full power is
generally not optimal when TIN is in use, while time-sharing between different power control strategies generally \emph{convexifies} (and enlarges) TIN-achievable rate regions \cite{Charafeddine2012}.
This resource allocation problem is known to be hard in general \cite{Luo2008}, giving
rise to intricate TIN-achievable rate regions which are difficult to analyse \cite{Charafeddine2012}, let alone
characterizing regimes in which such regions coincide with corresponding information-theoretic outer bounds.
Nevertheless, it was shown by Geng \emph{et al.} \cite{Geng2015} that the above  challenges  can be circumvented by taking a step away from the exact capacity and instead, pursuing approximate solutions based on the generalized
degrees-of-freedom (GDoF).

Geng \emph{et al.}'s approach to the $K$-user IC TIN-optimality problem rests on three main cornerstones: 1)
relaxing time-sharing for tractability and relying solely on power control to achieve different GDoF trade-offs\footnote{Note that this is a key step in simplifying the Han-Kobayashi achievable region and establishing the ``capacity to one bit" result for the 2-user IC in \cite{Etkin2008}, which also implicitly includes a TIN-optimal characterization.}, 2) focusing on a convex sub-region of the GDoF region achieved through TIN and power control,
referred to as the \emph{polyhedral TIN-achievable GDoF region}, which is explicitly characterized by
eliminating power control variables with the aid of a combinatorial tool named the \emph{potential graph}, and 3) establishing
optimality of the
polyhedral TIN-achievable GDoF region in the regime of interest by matching it to a genie-aided outer bound.
This approach proved very successful, leading to the characterization of a broad regime of channel parameters
in which TIN achieves the entire GDoF region of the general fully-connected, fully-asymmetric $K$-user IC,  and is within a
constant gap of the entire capacity region \cite{Geng2015}.
The success of this GDoF-based TIN-optimality pursuit called for further
investigation, resulting in extensions and generalizations to other settings including:
channels with general message sets (or $X$ channels) \cite{Geng2015a,Gherekhloo2017},  parallel channels
\cite{Sun2016},  multi-state (or compound) channels \cite{Geng2016}
and multi-state channels with opportunistic decoding capabilities \cite{Yi2018}.
Moreover, Yi and Caire gave a fresh combinatorial perspective on the original $K$-user IC TIN-optimality problem considered in \cite{Geng2015} and identified
a new class of partially connected networks for which TIN achieves the entire GDoF region \cite{Yi2016}.
\subsection{TIN in Cellular Networks}
\begin{figure}
\centering
\includegraphics[width = 0.4\textwidth,trim={9cm 6cm 9cm 6cm},clip]{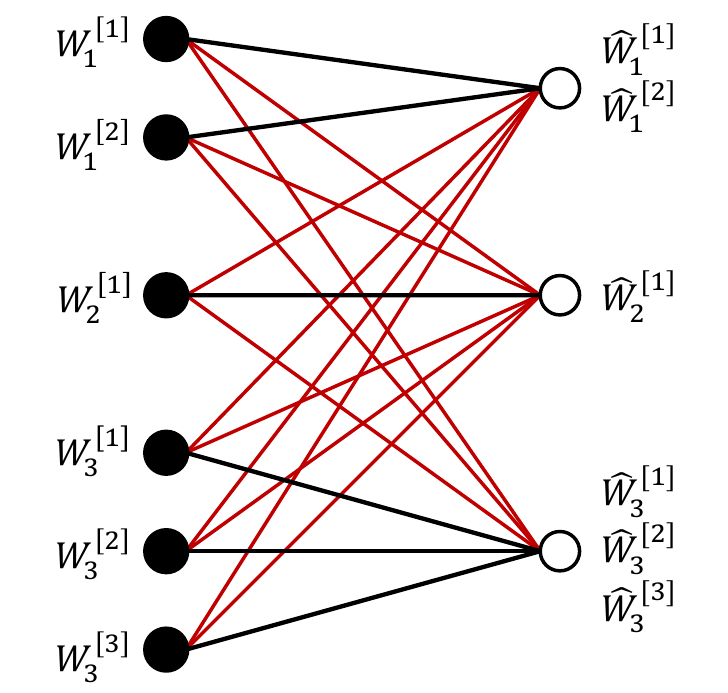}
\caption{\small
An uplink cellular network consisting of  $K = 3$ cells and $L_{1} = 2$, $L_{2} = 1$ and $L_{3} = 3$
users (transmitters) in cells 1, 2 and 3, respectively.
Direct links (between transmitters and their corresponding receivers)
are marked in black and interference links (between transmitters and non-corresponding receivers) are marked in red.
Note that although cell $2$ in the above network is a point-to-point link and not a MAC, we still refer to such network as an IMAC
as the two remaining cells are interfering MACs.}
\label{fig:IMAC_example}
\end{figure}
The optimality of TIN in cellular-like networks has been considered through the lens of the general
$X$ channel  \cite{Geng2015a,Gherekhloo2017}.
In \cite{Geng2015a}, the authors showed that under the TIN-optimality condition identified in \cite{Geng2015},
operating the $X$ channel as a regular IC and treating interference as noise is optimal from a sum-GDoF perspective and achieves the sum-capacity up to a constant gap --- for $M \times N$ channels with arbitrary $M $ and  $N$, some nodes are switched off and a cyclic modification of the condition in \cite{Geng2015} is used.
Building upon this result, the authors of \cite{Gherekhloo2017} considered the sum-GDoF of the $M \times 2$ channel and expanded
the TIN-optimal regime of \cite{Geng2015a} for this special case.
Nevertheless, for the purpose of understanding the optimality of TIN in cellular networks,
the setting and results in \cite{Geng2015a} and \cite{Gherekhloo2017} offer a high degree of generality,  perhaps more than needed, in one aspect and less generality in another.
Specifically, on one hand, the $X$ channel allows each transmitter to communicate an independent message to each receiver,
e.g. a cellular scenario where all users transmit independent steams to all base stations.
On the other hand, restricting the analysis to the sum-GDoF (and sum-capacity) gives limited insights into the different
trade-offs that can be achieved and reveals little about special cases of the $X$ channel that resemble more realistic
settings, e.g. a classical cellular scenario where each user associates with the closest base station.
In this paper, we make progress towards a comprehensive and crystalized understanding of TIN in cellular networks
by constraining the former of the two above aspects and relaxing the latter.

We consider a cellular network in the classical sense, consisting of an arbitrary number of cells, where each cell is formed by one base station
and an arbitrary number of users.
We further focus on uplink scenarios, in which each user wishes to communicate an independent message to the corresponding base station.
Such uplink cellular scenarios are captured by the Gaussian interfering multiple access channel (IMAC) \cite{Suh2008}, as illustrated in
Fig. \ref{fig:IMAC_example}.
Moreover, we seek a general TIN strategy for the IMAC that achieves the entire GDoF region (and capacity region up to a constant gap), as opposed to
the sum-GDoF only, under specific TIN-optimality conditions.
The optimality of TIN for a special case of this channel, named the PIMAC and consisting of
a point-to-point link and a 2-user multiple access channel (MAC) that mutually interfere,
was studied by Gherekhloo \emph{et al.} in \cite{Gherekhloo2016}.
In particular, Gherekhloo \emph{et al.} identified regimes for the PIMAC in which a simple time-sharing-TIN scheme is sum-GDoF optimal and achieves the sum-capacity within a constant gap.
However, the specificity of the results and analysis in \cite{Gherekhloo2016} to the sum-GDoF of this 2-cell, 3-user network hinders their extension to more general IMAC scenarios.

A crucial initial step before commencing the pursuit of TIN-optimality results for the IMAC is establishing
an adequate definition of TIN for such channel.
By viewing the $K$-user IC as a cellular network with one user in each cell,
TIN can be interpreted as the employment of a single-cell, capacity-achieving transmission strategy in
each cell, while  treating all inter-cell interference as noise.
This definition of TIN naturally extends to the cellular setting at hand.
More importantly, a TIN strategy for the IMAC, in accordance with the above definition, satisfies the requirement for
robustness, as capacity-achieving strategies for the MAC do not demand channel knowledge at the transmitters
beyond the coarse level assumed to be available in known TIN schemes.

Next, we move on to presenting
an overview of this work's main results and contributions.
A detailed exposition of such results, with insights and examples, is given in Section \ref{sec:main_results}.
\subsection{Main Results and Contributions}
\subsubsection{TIN-achievable GDoF region for the IMAC}
We propose a TIN scheme for the IMAC in which a MAC-type, capacity-achieving strategy, with Gaussian codebooks and successive decoding,
is employed in each cell while treating all inter-cell interference as noise\footnote{We focus on TIN schemes that employ unstructured random Gaussian codes throughout this work. This excludes schemes that use codes with (some) structure,
e.g. the TIN scheme with mixed inputs in \cite{Dytso2016}.}.
This scheme is complemented with power control to manage inter-cell interference and achieve different GDoF (or rate) trade-offs.
We follow the tradition of disallowing time-sharing for the sake of tractability
\cite{Etkin2008,Geng2015,Geng2015a,Gherekhloo2017,Geng2016,Sun2016,Yi2018,Yi2016}.
The resulting achievable GDoF region is therefore obtained by considering all feasible power control strategies
and successive decoding orders in each cell.
To distinguish this region from different restricted versions that appear throughout the work, we refer to it as the
\emph{general TIN-achievable GDoF region} henceforth.

To obtain an explicit characterization of the general TIN-achievable GDoF region, that does not depend on power control variables, we focus on sub-regions achieved with fixed decoding orders.
We then seek to characterize restricted (smaller) achievable sub-regions, known as  \emph{polyhedral TIN-achievable regions},
through a non-trivial application of the potential graph approach in \cite{Geng2015,Geng2015a}.
Polyhedral TIN-achievable regions are then employed as building blocks in characterizing
the general TIN-achievable GDoF region, which in turn is shown to be a finite union of polyhedra.

One major challenge in applying the potential graph approach compared to \cite{Geng2015,Geng2015a} is identifying and eliminating
redundant directed circuits (and their corresponding GDoF inequalities), which arise due to the special structure of the IMAC under the proposed TIN scheme (see Section \ref{subsec:TIN_region_proof}).
This step proves crucial for matching the achievable region with the outer bound derived later on to
establish the GDoF optimality of the proposed TIN scheme in the regime of interest.
\subsubsection{Conditions for TIN-Convexity}
After establishing a TIN-achievable GDoF region for the IMAC, the natural question to ask next is:
under what conditions is this achievable region optimal?
As an intermediate step towards answering this question,
we consider the closely related issue of determining conditions under
which this TIN-achievable GDoF region is a polyhedron\footnote{By a polyhedron,
we are referring to a convex set given by the intersection
of a finite number of half spaces. Since we are dealing with GDoF regions,
we only encounter bounded polyhedra which are therefore convex polytopes.}, and hence convex, in its own right without requiring time-sharing.
We identify a regime for which this holds that
is characterized by two conditions referred to as the \emph{TIN-convexity conditions} (see Theorem \ref{theorem:polyhedrality} in Section \ref{subsec:TIN_polyhedrality}).
The first of the two conditions guarantees that one successive decoding order for each cell dominates all others in a GDoF sense, such that it is sufficient to only consider this decoding order to achieve the entire general TIN-achievable GDoF region.
The second condition is essentially the TIN-convexity condition of the $K$-user IC, identified by Yi and Caire in \cite{Yi2016},
applied to all possible $K$-user IC subnetworks of the considered IMAC.
This condition guarantees that the TIN region achieved through the dominant decoding order is in itself convex.
\subsubsection{Conditions for TIN-Optimality}
We further strengthen the TIN-convexity conditions and
obtain a set of \emph{TIN-optimality conditions} under which the general TIN-achievable GDoF
region is also optimal (see Theorem \ref{theorem:TIN_optimality} in Section \ref{subsec:TIN_optimality}).
The TIN-optimality conditions are merely stronger versions of the two aforementioned TIN-convexity conditions and include the
TIN-optimality condition of Geng \emph{et al.} \cite{Geng2015}, applied to all possible $K$-user IC subnetworks of the considered
IMAC.

We prove the TIN-optimality result by deriving an outer bound which coincides with the general TIN-achievable GDoF
region in the regime of interest.
We recall that in the converse used to establish the TIN-optimality condition for the $K$-user IC in \cite{Geng2015},
each bound featuring more than one user is obtained by first reducing the channel to a cyclic (sub)network and then
directly applying the Etkin-Tse-Wang (ETW) genie \cite{Etkin2008} (see also \cite{Zhou2013} where the cyclic IC is considered).
We follow in the same general footsteps by first considering cyclic (sub)networks, where cyclicity
is taken with respect to participating cells. We then use a non-trivial genie-aided argument which extends
the ETW genie to cope with the multi-user per-cell setting at hand.
In particularly, the genie signal given for each cell is taken as a noisy linear combination of
in-cell signals, where the weights of such linear combinations (i.e. the genie channels) are carefully designed to yield the desired bounds
in the regime of interest (see Section \ref{subsec:proof_theorem_outer_bound} for details).
This outer bound also directly lends itself to showing that under the identified TIN-optimality conditions, the proposed TIN scheme achieves the entire capacity region of the IMAC up to a constant gap.
\subsection{Notation}
For any positive integers $z_{1}$ and $z_{2}$, where $z_{1} \leq z_{2}$, the sets $\{1,2,\ldots,z_{1}\}$ and $\{z_{1},z_{1}+1,\ldots,z_{2}\}$ are denoted by
$\langle z_{1} \rangle$ and $\langle z_{1}:z_{2}\rangle$ respectively.
For any real number $a$, $(a)^{+} = \max\{0,a\}$.
Bold lowercase symbols  denote tuples, e.g. $\mathbf{a} = (a_{1},\ldots,a_{Z})$.
For $\mathcal{A} = \{\mathbf{a}_{1},\ldots,\mathbf{a}_{K}\}$, $\Sigma(\mathcal{A})$
is the set of all cyclic sequences formed by any number of elements in $\mathcal{A}$ without repetitions,
e.g.
\begin{equation}
\nonumber
\Sigma\big(\{\mathbf{a}_{1},\mathbf{a}_{2},\mathbf{a}_{3}\}\big) =
\big\{(\mathbf{a}_{1}),(\mathbf{a}_{2}),(\mathbf{a}_{3}),(\mathbf{a}_{1},\mathbf{a}_{2}),(\mathbf{a}_{1},\mathbf{a}_{3}),(\mathbf{a}_{2},\mathbf{a}_{3}),
(\mathbf{a}_{1},\mathbf{a}_{2},\mathbf{a}_{3}),(\mathbf{a}_{1},\mathbf{a}_{3},\mathbf{a}_{2})  \big\}.
\end{equation}
The complement of set $\mathcal{A}$ is denoted by $\overline{\mathcal{A}}$. The cardinality of set
$\mathcal{A}$ is denoted by $|\mathcal{A}|$,  where $|\emptyset| = 0$.
The indicator function on set $\mathcal{A}$ is defined as
\begin{equation}
\nonumber
\mathbbm{1}_{\mathcal{A}}( \mathbf{a} ) =
\begin{cases}
1, \ \text{if}  \ \mathbf{a}  \in \mathcal{A}\\
0, \ \text{if}  \ \mathbf{a}  \notin \mathcal{A}.
\end{cases}
\end{equation}
We sometimes use the alternative definition of the indicator function given by
\begin{equation}
\nonumber
\mathbbm{1}( \text{statement} ) =
\begin{cases}
1, \ \text{if statement is true}\\
0, \ \text{otherwise}.
\end{cases}
\end{equation}
\section{System Model and Preliminaries}
\label{sec:system model}
Consider a $K$-receiver Gaussian IMAC in which each receiver $k$, $k \in \langle K \rangle$, is associated with $L_{k}$ transmitters.
The $k$-th receiver is denoted by Rx-$k$ and the $l_{k}$-th transmitter, $l_{k} \in \langle L_{K} \rangle$,
associated with this receiver is denoted by
Tx-$(l_{k},k)$.
Using the terminology of cellular networks, a receiver and its associated
transmitters are referred to as a cell, operating in the uplink mode.
The set of tuples corresponding to transmitters (or users) in cell $k$ is given by
$\mathcal{K}_{k} \triangleq \left\{(l_{k},k) : l_{k} \in \langle L_{K} \rangle  \right\}$, $k \in \langle K \rangle$,
and the set of all users in the network is given by
$\mathcal{K} \triangleq \bigcup_{k \in \langle K \rangle} \mathcal{K}_{k}$.

The input-output relationship at the $t$-th use of the channel, $t \in \mathbb{N}$, is described as
\begin{equation}
\label{eq:system model}
Y_{i}(t) = \sum_{k = 1}^{K} \sum_{l_{k} = 1}^{L_{k}}  h_{ki}^{[l_{k}]} \tilde{X}_{k}^{[l_{k}]}(t)
+ Z_{i}(t), \ \forall i \in
\langle K \rangle,
\end{equation}
where $h_{ki}^{[l_{k}]}$ is the channel coefficient from Tx-$(l_{k},k)$ to Rx-$i$,
$\tilde{X}_{k}^{[l_{k}]}(t)$ is the transmitted symbol of Tx-$(l_{k},k)$ and  $Z_{i}(t) \sim \mathcal{N}_{\mathbb{C}}(0,1)$
is the additive white Gaussian noise (AWGN) at Rx-$i$, which is i.i.d over
channel uses (time).
All symbols are complex and each transmitter $(l_{k},k)$ is subject to the power constraint
\begin{equation}
\frac{1}{n}\sum_{t=1}^{n}\E \Big[\big|\tilde{X}_{k}^{[l_{k}]}(t)\big|^{2}\Big] \leq P_{k}^{[l_{k}]}.
\end{equation}
Note that receivers are indexed by the subscript, transmitters are indexed by the superscript in square parentheses and channel
uses are indexed by the argument in the round parentheses.

Following the standard reformulation in \cite{Geng2015}, the channel model in \eqref{eq:system model} is transformed into
\begin{equation}
\label{eq:system model 2}
Y_{i}(t) = \sum_{k = 1}^{K} \sum_{l_{k} = 1}^{L_{k}}  \sqrt{ P^{\alpha_{ki}^{[l_{k}]}} } e^{j \theta_{ki}^{[l_{k}]}} X_{k}^{[l_{k}]}(t)
+ Z_{i}(t), \ i \in \langle K \rangle,
\end{equation}
where $P>0$ is a nominal power value and $X_{k}^{[l_{k}]}(t) \triangleq {\tilde{X}_{k}^{[l_{k}]}(t)}/{\sqrt{P_{k}^{[l_{k}]}}}$ is the normalized transmit symbol of Tx-$(l_{k},k)$ with power constraint
\begin{equation}
\frac{1}{n}\sum_{t=1}^{n}\E \Big[\big|X_{k}^{[l_{k}]}(t)\big|^{2}\Big] \leq 1.
\end{equation}
In this equivalent channel, $\sqrt{ P^{\alpha_{ki}^{[l_{k}]}} }$ and $\theta_{ki}^{[l_{k}]}$ are the magnitude and phase of the link between
 Tx-$(l_{k},k)$  and Rx-$i$. The exponent $\alpha_{ki}^{[l_{k}]}$, known as the channel strength level, is defined as
\begin{equation}
\alpha_{ki}^{[l_{k}]} \triangleq \frac{\log \left( \max \big\{ 1, | h_{ki}^{[l_{k}]} |^{2} P_{k}^{[l_{k}]} \big\} \right) }{\log P}
, \ \forall (l_{k},k) \in \mathcal{K}, \
i\in \langle K \rangle.
\end{equation}
As shown in \cite{Geng2015}, avoiding negative channel strength levels has no impact on the GDoF or the constant gap results.
Therefore, we focus on the equivalent channel model in \eqref{eq:system model 2}
henceforth.
Furthermore, without loss of generality, we assume the following order of direct link strength levels
\begin{equation}
\label{eq:strength_order}
\alpha_{kk}^{[1]}  \leq  \alpha_{kk}^{[2]} \leq \cdots \leq \alpha_{kk}^{[L_{k}]}, \ \forall k \in \langle K \rangle.
\end{equation}
\subsection{Messages, Rates, Capacity and GDoF}
\label{subsec:msg_rates_capacity_GDoF}
Tx-$(1,k)$$,\ldots,$Tx-$(L_{k},k)$ have the messages $W_{k}^{[1]},\ldots,W_{k}^{[L_{k}]}$, respectively,
intended to Rx-$k$.
All messages are independent and $|W_{k}^{[l_{k}]}|$ denotes the size of the corresponding message set.
For codewords spanning $n$ channel uses, the rates $R_{k}^{[l_{k}]} = \frac{\log |W_{k}^{[l_{k}]}|}{n}$, $\forall (l_{k},k) \in \mathcal{K}$,
are achievable if  all messages can be decoded simultaneously with arbitrarily small error probability as $n$ grows sufficiently large.
A rate tuple is denoted by
$\mathbf{R}= \big(R_{1}^{[1]},\ldots,R_{1}^{[L_{1}]},\ldots,R_{K}^{[1]},\ldots,R_{K}^{[L_{K}]}\big)$
and the channel capacity region $\mathcal{C}$ is the closure of the set of all achievable rate tuples.
A GDoF tuple is denoted by $\mathbf{d} = \big(d_{1}^{[1]},\ldots,d_{1}^{[L_{1}]},\ldots,d_{K}^{[1]},\ldots,d_{K}^{[L_{K}]}\big)$
and the GDoF region is defined as
\begin{equation}
\mathcal{D} \triangleq \left\{ \mathbf{d} : \  d_{k}^{[l_{k}]} = \lim_{P \rightarrow \infty} \frac{R_{k}^{[l_{k}]}}{\log P}, \
\forall (l_{k},k) \in \mathcal{K}, \ \mathbf{R} \in \mathcal{C}  \right\}.
\end{equation}
\subsection{Treating (Inter-cell) Interference as Noise}
\label{subsec:TIN_scheme}
For a single Gaussian MAC, it is well known that the capacity region is a
polyhedron, where the corner points are achieved using independent Gaussian
codebooks with successive decoding at the receiver, while the remaining points are achieved by further incorporating time-sharing
\cite{Cover2012}.
In the TIN scheme proposed for the IMAC, a MAC-type capacity-achieving strategy with Gaussian codebooks and successive decoding
is employed in each cell, while all inter-cell interference is treated as noise.
Furthermore, power control is employed by transmitters to manage inter-cell interference levels and achieve various tradeoffs\footnote{Note that
such power control is not required to achieve the capacity region for a single Gaussian MAC \cite{Cover2012}.}.
Nevertheless, keeping to the tradition followed in
\cite{Etkin2008,Geng2015,Geng2015a,Gherekhloo2017,Geng2016,Sun2016,Yi2018,Yi2016}, we prohibit time-sharing.
Although this restriction is mainly motivated by tractability,
it remarkably has no influence on the results in the regimes of interest as explained in detail further on.

To formalize the above TIN scheme, let $P^{r_{k}^{[l_{k}]}}$  be the (controlled) transmit power of
Tx-$(l_{k},k)$, where $r_{k}^{[l_{k}]} \leq 0$ denotes the transmit power exponent (or power allocation variable).
The tuple of all power allocation variables is given by
$\mathbf{r} = \big(r_{1}^{[1]},\ldots,r_{1}^{[L_{1}]},\ldots,r_{K}^{[1]},\ldots,r_{K}^{[L_{K}]}\big)$.
On the other hand, the order in which Rx-$k$ successively decodes its
in-cell signals  is given by the permutation function $\pi_{k}:\langle L_{k} \rangle \rightarrow \langle L_{k} \rangle$,
such that $X_{k}^{[\pi_{k}(L_{k})]}$ is decoded and cancelled before decoding $X_{k}^{[\pi_{k}(L_{k}-1)]}$ and so on.
The decoding order across the network is given by the tuple
$\bm{\pi} \triangleq \left( \pi_{1},\ldots,  \pi_{K} \right)$, which is drawn from the set $\Pi$ comprising
all possible $\prod_{i=1}^{K}(L_{i}!)$ network decoding orders.

For a decoding order $\bm{\pi}$ and a power allocation $\mathbf{r}$,
Tx-$\big(\pi_{k}(l_{k}),k\big)$ achieves any rate satisfying
\begin{equation}
\label{eq:rate per user}
0 \leq R_{k}^{[\pi_{k}(l_{k})]}  \leq \log \Biggl( 1+ \frac{ P^{ r_{k}^{[\pi_{k}(l_{k})]} + \alpha_{kk}^{[\pi_{k}(l_{k})]}  }  }
{1  + \sum_{l_{k}' = 1}^{l_{k}-1}P^{ r_{k}^{[\pi_{k}(l_{k}')]} + \alpha_{kk}^{[\pi_{k}(l_{k}')]} }  +
\sum_{j\neq k} \sum_{l_{j} = 1}^{L_{j}} P^{ r_{j}^{[l_{j}]} + \alpha_{jk}^{[l_{j}]} }   } \Biggr).
\end{equation}
In the GDoF sense, the achievable rate in \eqref{eq:rate per user} translates to
\begin{multline}
\label{eq:GDoF per user}
0 \leq d_{k}^{[\pi_{k}(l_{k})]}  \leq \\
\max \biggl\{  0, r_{k}^{[\pi_{k}(l_{k})]}  +   \alpha_{kk}^{[\pi_{k}(l_{k})]}
  -  \Bigl(\max \Bigl\{ \max_{l_{k}' < l_{k}} \{ r_{k}^{[\pi_{k}(l_{k}')]}  +  \alpha_{kk}^{[\pi_{k}(l_{k}')]} \}  ,\max_{j \neq k}
  \max_{l_{j}} \{ r_{j}^{[l_{j}]}  +  \alpha_{jk}^{[l_{j}]} \}   \Bigr\} \Bigr)^{+} \biggr\}.
\end{multline}
For a fixed $\bm{\pi} \in \Pi$, the \emph{TIN-achievable GDoF region}, denoted by $\mathcal{P}_{\bm{\pi}}^{\star}$,
is the set of all GDoF tuples $\mathbf{d}$ with components satisfying \eqref{eq:GDoF per user}
for some feasible power allocation vector $\mathbf{r} \leq \mathbf{0}$.
The \emph{general TIN-achievable GDoF region}, denoted by $\mathcal{P}^{\star}$, is obtained by taking the union over all possible decoding
orders in $\Pi$ and is defined as
\begin{equation}
\label{eq:TIN_GDoF_region}
\mathcal{P}^{\star} \triangleq \bigcup_{\bm{\pi} \in \Pi} \mathcal{P}_{\bm{\pi}}^{\star}.
\end{equation}
Note that since time-sharing is not allowed,
each GDoF tuple $\mathbf{d} \in \mathcal{P}^{\star}$ is achieved through a strategy identified by a decoding order and a power
allocation tuple, i.e. $(\bm{\pi},\mathbf{r})$.

Before we proceed, we highlight that we often work with the identity order $\bm{\pi} = \bm{\mathrm{id}}$ in the following sections,
where $\bm{\mathrm{id}} \triangleq \left( \mathrm{id}_{1},\ldots,\mathrm{id}_{K} \right)$ and $\mathrm{id}_{i}(l_{i}) = l_{i}$,
$\forall(l_{i},i) \in \mathcal{K}$.
\subsection{Polyhedral TIN-Achievable GDoF Regions}
\label{subsec:polyhedral_TIN}
In this part we introduce a \emph{polyhedral TIN scheme} for the IMAC from which we obtain
\emph{polyhedral TIN-achievable GDoF regions}, which form the main building blocks of GDoF characterizations obtained in this work.
For any decoding order $\bm{\pi} \in \Pi$, the polyhedral TIN scheme is a restricted
version of the TIN scheme described in Section \ref{subsec:TIN_scheme} in which $\mathbf{r}$ is chosen such that
the second argument of the outmost $\max\{0,\cdot\}$ in \eqref{eq:GDoF per user} is non-negative.
The resulting polyhedral TIN region, denoted by $\mathcal{P}_{\bm{\pi}}$, is hence described by
all GDoF tuples $\mathbf{d}$ that satisfy
\begin{align}
\label{eq:polyhedral_region_1}
r_{k}^{[\pi_{k}(l_{k})]} & \leq 0,  \  \forall (l_{k},k) \in \mathcal{K}  \\
\label{eq:polyhedral_region_2}
d_{k}^{[\pi_{k}(l_{k})]} & \geq 0,  \  \forall (l_{k},k) \in \mathcal{K}  \\
\nonumber
d_{k}^{[\pi_{k}(l_{k})]} & \leq  r_{k}^{[\pi_{k}(l_{k})]}  +   \alpha_{kk}^{[\pi_{k}(l_{k})]}
  -  \Bigl(\max \Bigl\{ \max_{l_{k}' < l_{k}} \{ r_{k}^{[\pi_{k}(l_{k}')]}  +  \alpha_{kk}^{[\pi_{k}(l_{k}')]} \}  ,\max_{j \neq k}
  \max_{l_{j}} \{ r_{j}^{[l_{j}]}  +  \alpha_{jk}^{[l_{j}]} \}  \Bigr\} \Bigr)^{+}, \\
\label{eq:polyhedral_region_3}
  & \quad    \forall (l_{k},k) \in \mathcal{K},
\end{align}
where it can be  seen from \eqref{eq:polyhedral_region_3} that the outmost $\max\{0,\cdot\}$ in \eqref{eq:GDoF per user} has been dropped.
It follows from this restriction that $\mathcal{P}_{\bm{\pi}} \subseteq \mathcal{P}_{\bm{\pi}}^{\star}$ and therefore we have
$\bigcup_{\bm{\pi}\in \Pi} \mathcal{P}_{\bm{\pi}} \subseteq \mathcal{P}^{\star}$.
This inner bound of $\mathcal{P}^{\star}$ can be further tightened in general by following along the lines of \cite[Th. 5]{Geng2015}, i.e. taking the union of polyhedral TIN-achievable regions that correspond to all subnetworks of the original IMAC.

To facilitate the above, we define the more general collection of polyhedral TIN-achievable regions that
correspond to subnetworks of the IMAC.
For instance, consider a subnetwork comprising the subset of users $\mathcal{S} \subseteq \mathcal{K}$, where
$\overline{\mathcal{S}} \triangleq \mathcal{K} \setminus \mathcal{S}$ is the set  of all remaining users in the original IMAC.
We apply the polyhedral TIN scheme to the subnetwork $\mathcal{S}$ while deactivating all users in  $\overline{\mathcal{S}}$, i.e. by setting $r_{i}^{[l_{i}]} = - \infty$, $\forall (l_{i},i) \in \overline{\mathcal{S}}$, from which we obtain $d_{i}^{[l_{i}]} = 0$, $\forall (l_{i},i) \in \overline{\mathcal{S}}$.
The corresponding polyhedral TIN region for decoding order $\bm{\pi} \in \Pi$ is denoted by $\mathcal{P}_{\bm{\pi}}(\mathcal{S})$.
Note that the polyhedral TIN region described in \eqref{eq:polyhedral_region_1}--\eqref{eq:polyhedral_region_3}
is obtained by activating all users, i.e. $\mathcal{P}_{\bm{\pi}} = \mathcal{P}_{\bm{\pi}}(\mathcal{K})$.
On the other hand,  by deactivating all users we obtain $ \mathcal{P}_{\bm{\pi}}(\emptyset) =
\mathbf{0}$.

It is easily seen that $\mathcal{P}_{\bm{\pi}}(\mathcal{S}) \subseteq \mathcal{P}_{\bm{\pi}}^{\star}$,
$\forall \mathcal{S} \subseteq \mathcal{K}$, as the polyhedral TIN scheme over any subnetwork $\mathcal{S}$
is a special case of the original TIN scheme with the same decoding order.
By taking the union over all possible decoding orders, we establish an inner bound on $\mathcal{P}^{\star}$ given by
\begin{equation}
\label{eq:TIN_region_inner_1}
\mathcal{P}^{\star} \supseteq \bigcup_{\bm{\pi} \in \Pi} \bigcup_{\mathcal{S} \subseteq \mathcal{K}}
\mathcal{P}_{\bm{\pi}}(\mathcal{S}).
\end{equation}
By swapping the order of the union operators in the above inner bound, we reveal redundancies in its representation
as shown through the following remark.
\begin{remark}
\label{remark:inner_redundancies}
Consider a subset of users $\mathcal{\mathcal{S}} \subseteq \mathcal{K}$ and the corresponding family of
polyhedral TIN-achievable GDoF regions given by $\big\{ \mathcal{P}_{\bm{\pi}}(\mathcal{S}) : \bm{\pi} \in \Pi \big\}$.
Some decoding orders $\bm{\pi} \in \Pi$ are redundant, in the sense that they yield the same
polyhedral TIN regions, since varying the order of users in $\overline{\mathcal{S}}$, which are inactive, has
no influence on $\mathcal{P}_{\bm{\pi}}(\mathcal{S})$.
This type of redundancy is eliminated by considering the set of decoding orders for subnetwork $\mathcal{S}$ only,
which we denote by
$\Pi(\mathcal{S})$, and slightly modifying the definition of $\mathcal{P}_{\bm{\pi}}(\mathcal{S})$
into $\mathcal{P}_{\bm{\pi}'}(\mathcal{S})$, where $\bm{\pi}' \in \Pi(\mathcal{K})$, in which the order of users in $\overline{\mathcal{S}}$
is irrelevant\footnote{\label{footnote:decoding_order_subset}Suppose
that $\mathcal{S} = \cup_{i \in \mathcal{M}} \mathcal{S}_{i}$ for some $\mathcal{M} \subseteq \langle K \rangle$ and
$\mathcal{S}_{i} \subseteq \mathcal{K}_{i}$, $i \in \mathcal{M}$. Each decoding order $\bm{\pi}' \in \Pi(\mathcal{S})$
is given by $(\pi_{i}': i \in \mathcal{M})$, where $\pi_{i}' : \langle |\mathcal{S}_{i}| \rangle \rightarrow \mathcal{S}_{i}$
maps the order $s_{i} \in \langle |\mathcal{S}_{i}| \rangle$ to user $\pi_{i}'(s_{i}) \in \mathcal{S}_{i}$.
By definition, we have $\Pi = \Pi(\mathcal{K})$.}.
By employing these definitions, we can then easily show that \eqref{eq:TIN_region_inner_1} is equivalent to
\begin{equation}
\label{eq:TIN_region_inner_2}
\mathcal{P}^{\star} \supseteq \bigcup_{\mathcal{S} \subseteq \mathcal{K}}
\bigcup_{\bm{\pi}' \in \Pi(\mathcal{S})} \mathcal{P}_{\bm{\pi}'}(\mathcal{S}).
\end{equation}
\hfill $\lozenge$
\end{remark}
\section{Main Results and Insights}
\label{sec:main_results}
In this section, we present the primary results of this work with insights and illustrative examples. The proofs are
deferred to subsequent sections.
\subsection{Characterization of Polyhedral TIN-Achievable GDoF Regions}
We start by characterizing the polyhedral TIN-achievable GDoF region $\mathcal{P}_{\bm{\pi}}$ for any $\bm{\pi} \in \Pi$.
This polyhedral characterization is at the heart of all subsequent GDoF characterizations.
\begin{theorem}
\label{theorem:TIN_region}
For the IMAC described in Section \ref{sec:system model},
the achievable GDoF region through polyhedral TIN with decoding order $\bm{\pi} \in \Pi$, denoted by $\mathcal{P}_{\bm{\pi}}$,
is given by all tuples $\mathbf{d} \in \mathbb{R}_{+}^{|\mathcal{K}|}$ that satisfy
\begin{align}
\label{eq:IMAC_TIN_region_1}
\sum_{s_{i} = 1}^{l_{i}} d_{i}^{[\pi_{i}(s_{i})]} & \leq \alpha_{ii}^{[\pi_{i}(l_{i})]}, \ \forall (l_{i},i) \in \mathcal{K} \\
\nonumber
\sum_{j =1 }^{m} \sum_{s_{i_{j}} =1 }^{l_{i_{j}} } d_{i_{j}}^{[\pi_{i_{j}}(s_{i_{j}})]} & \leq
\sum_{j =1 }^{m} \alpha_{i_{j}i_{j}}^{[\pi_{i_{j}}(l_{i_{j}})]} - \alpha_{i_{j}i_{j-1}}^{[\pi_{i_{j}}(l_{i_{j}})]}, \\
\label{eq:IMAC_TIN_region_3}
 \forall l_{i_{j}} \in \langle L_{i_{j}}  \rangle, \  (i_{1},\ldots & ,i_{m}) \in \Sigma\big(\langle K \rangle\big),
 m \in \langle 2:K \rangle,
\end{align}
where  $\Sigma \big(\langle K \rangle \big)$ is the set of all possible cyclic sequences of all subsets\footnote{See the definition and example in the notation part.} of $\langle K \rangle $ and a modulo-$m$ operation is implicitly used on cell indices when dealing with cyclic sequences, i.e. $i_{0} = i_{m}$.
\end{theorem}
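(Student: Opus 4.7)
The plan is to eliminate the power exponents $\mathbf{r}$ from the polyhedral definition \eqref{eq:polyhedral_region_1}--\eqref{eq:polyhedral_region_3} via the potential-graph approach of \cite{Geng2015, Geng2015a}, adapted to the MAC-per-cell structure. First I would linearize \eqref{eq:polyhedral_region_3} by expanding both the inner $\max$ and the outer $(\cdot)^{+}$: for each user $(\pi_{k}(l_{k}), k)$ this produces one inequality per candidate maximizer, namely an in-cell inequality for each $l_{k}' < l_{k}$, an inter-cell inequality for each $(l_{j}, j)$ with $j \neq k$, and a saturation inequality $d_{k}^{[\pi_{k}(l_{k})]} \leq r_{k}^{[\pi_{k}(l_{k})]} + \alpha_{kk}^{[\pi_{k}(l_{k})]}$ coming from the choice $(\cdot)^{+} = 0$. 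Introducing a virtual variable $r_{0} := 0$ to absorb the budget constraints $r_{k}^{[l_{k}]} \leq 0$, every surviving inequality becomes a difference constraint $r_{a} - r_{b} \leq \beta_{a,b}(\mathbf{d}, \bm{\alpha})$, turning the region in $(\mathbf{d}, \mathbf{r})$-space into a standard difference-constraint system.

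Next I would assemble the weighted directed potential graph $\mathcal{G}$ with node set $\mathcal{K} \cup \{0\}$, placing one edge of weight $\beta_{a,b}$ per constraint. Explicitly, cell $i$ contributes intra-cell edges $(i, \pi_{i}(l_{i})) \to (i, \pi_{i}(l_{i}'))$ of weight $\alpha_{ii}^{[\pi_{i}(l_{i})]} - \alpha_{ii}^{[\pi_{i}(l_{i}')]} - d_{i}^{[\pi_{i}(l_{i})]}$ for $l_{i}' < l_{i}$; inter-cell edges $(i, \pi_{i}(l_{i})) \to (j, l_{j})$ of weight $\alpha_{ii}^{[\pi_{i}(l_{i})]} - \alpha_{ji}^{[l_{j}]} - d_{i}^{[\pi_{i}(l_{i})]}$ for $j \neq i$; saturation edges $(k, l_{k}) \to 0$ of weight $\alpha_{kk}^{[l_{k}]} - d_{k}^{[l_{k}]}$; and budget edges $0 \to (k, l_{k})$ of weight $0$. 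By the Bellman--Ford (potential) theorem, a feasible $\mathbf{r}$ exists if and only if every directed cycle in $\mathcal{G}$ has nonnegative total weight, so projection onto $\mathbf{d}$ yields the polyhedron carved out by these cycle inequalities together with \eqref{eq:polyhedral_region_2}.

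To identify this projection with the region in the statement I would first evaluate the weights of two canonical cycle families. The single-cell cycle $0 \to (i, \pi_{i}(l_{i})) \to (i, \pi_{i}(l_{i}-1)) \to \cdots \to (i, \pi_{i}(1)) \to 0$ has its intra-cell $\alpha_{ii}$-contributions telescope, with the saturation edge closing the loop, yielding exactly \eqref{eq:IMAC_TIN_region_1}. The multi-cell cycle for $(i_{1}, \ldots, i_{m}) \in \Sigma(\langle K \rangle)$ at depths $(l_{i_{1}}, \ldots, l_{i_{m}})$ chains the intra-cell descent within each cell $i_{j}$ from $\pi_{i_{j}}(l_{i_{j}})$ down to $\pi_{i_{j}}(1)$ and then uses the inter-cell edge $(i_{j}, \pi_{i_{j}}(1)) \to (i_{j+1}, \pi_{i_{j+1}}(l_{i_{j+1}}))$ (cyclically in $j$); telescoping within each cell together with a cyclic reindex of the inter-cell $\alpha$-terms yields exactly \eqref{eq:IMAC_TIN_region_3}.

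The main technical obstacle is the converse: ruling out new constraints from any other cycle. I would argue in three steps. (a)~Any simple cycle visiting $0$ must involve only one cell, since $0$ has only budget edges out and saturation edges in, so such a cycle reduces to the single-cell family. (b)~Any simple cycle avoiding $0$ traverses a cyclic sequence of cells $(i_{1}, \ldots, i_{m})$, entering cell $i_{j}$ at some $\pi_{i_{j}}(u_{j})$ and exiting at some $\pi_{i_{j}}(v_{j})$; I would show that pushing each exit $v_{j}$ down to $1$ while keeping $u_{j}$ fixed (i.e., extending the intra-cell chain to the bottom of the decoding order) only tightens the resulting bound, because the extra intra-cell edges add nonpositive $-d_{i_{j}}^{[\pi_{i_{j}}(s)]}$ contributions while the $\alpha_{i_{j}i_{j}}$ terms telescope away, so the original cycle's inequality is implied by the canonical one (with $l_{i_{j}} = u_{j}$) together with $\mathbf{d} \geq \mathbf{0}$. (c)~Any non-simple closed walk decomposes into simple sub-cycles already treated in (a) or (b). Step~(b) is where the IMAC analysis genuinely diverges from the $K$-user IC treatment of \cite{Geng2015}, and is precisely the ``redundant circuits'' issue flagged in the paper's introduction.
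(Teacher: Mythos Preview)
Your overall framework matches the paper's (potential graph, potential theorem, redundancy pruning), and the forward direction identifying the two canonical cycle families with \eqref{eq:IMAC_TIN_region_1}--\eqref{eq:IMAC_TIN_region_3} is correct. But the redundancy argument has two genuine gaps. First, step~(a) is false as stated: a simple cycle through $0$ need not be single-cell; e.g.\ $0 \to (i_{1},l_{i_{1}}) \to (i_{2},l_{i_{2}}) \to 0$ with $i_{1}\neq i_{2}$ is perfectly valid in your graph. What you actually need (and what the paper argues) is that any multi-user cycle through $0$ is dominated by the cycle that bypasses $0$ and closes directly via the inter-cell edge $(i_{m},\cdot)\to(i_{1},\cdot)$; this replacement contributes an extra $-\alpha_{i_{1}i_{m}}^{[\cdot]}$ on the right-hand side and hence yields a tighter inequality, making the through-$0$ version redundant.

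Second, step~(b) tacitly assumes each cell appears as a single contiguous block in the cycle (one entry $u_{j}$, one exit $v_{j}$), but a simple cycle avoiding $0$ can revisit a cell non-contiguously, e.g.\ $(1,a)\to(2,\cdot)\to(1,b)\to(3,\cdot)\to(1,a)$ with $a\neq b$. Your push-$v_{j}$-down-to-$1$ trick does not cover this, and it is precisely the case the paper isolates as condition~C.1 of its redundancy lemma: one must split such a circuit into two shorter sub-circuits (gluing the two cell-$1$ visits via an intra-cell edge in one piece and rerouting the inter-cell exit in the other) and verify that their inequalities, summed, imply the original. Without handling both of these cases you cannot conclude that the canonical families exhaust all binding cycle constraints.
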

The characterization in the above theorem, which is given in terms of the channel strength levels only,
is obtained by eliminating the power control variables in \eqref{eq:polyhedral_region_1}--\eqref{eq:polyhedral_region_3}.
This elimination in turn is accomplished through the potential theorem \cite{Schrijver2002} and builds upon the arguments employed
in \cite{Geng2015,Geng2016}.
Full details of this procedure are presented in Section \ref{subsec:TIN_region_proof}.
From  the characterization in Theorem \ref{theorem:TIN_region}, it is evident that $\mathcal{P}_{\bm{\pi}}$ is
a polyhedron, which hence justifies the name of the polyhedral TIN scheme and the corresponding regions.
For the MAC special case, recovered by setting $K = 1$, this characterization
reduces to a MAC achievable GDoF region under a decoding order $\bm{\pi} \in \Pi$.
On the other hand, for the $K$-user  IC special case recovered when $L_{i} = 1$, $\forall i \in \langle K \rangle$,
the characterization reduces to the polyhedral TIN-achievable region in \cite[Th. 2]{Geng2015}.
In general, the characterization  in Theorem \ref{theorem:TIN_region} inherits features from both the
MAC and IC special cases which are further elaborated in the following remarks.
\begin{remark}
\label{remark:MAC_type_inequalities}
From the characterization of  $\mathcal{P}_{\bm{\pi}}$ in Theorem \ref{theorem:TIN_region}, it can be seen that
any GDoF inequality that includes $d_{i}^{[\pi_{i}(l_{i})]}$ also includes $d_{i}^{[\pi_{i}(l_{i}')]}$, for all $l_{i}' < l_{i}$.
This is due to the MAC-type successive decoding in which Rx-$i$ decodes
the signal of Tx-$\big(\pi_{i}(l_{i}),i\big)$
before decoding the signals of Tx-$\big(\pi_{i}(l_{i}'),i\big)$,
for all $l_{i}' < l_{i}$.
This in turn bounds the maximum achievable sum-GDoF of such users, i.e.
$\sum_{s_{i} \leq l_{i}}d_{i}^{[\pi_{i}(s_{i})]}$, by Tx-$\big(\pi_{i}(l_{i}),i\big)$'s  maximum achievable GDoF.
\hfill $\lozenge$
\end{remark}
\begin{remark}
\label{remark:IC_type_inequalities}
The cyclic feature exhibited in \eqref{eq:IMAC_TIN_region_3} is a product of the power-controlled TIN strategy
and has its roots in the regular IC \cite{Geng2015}.
From the observation in Remark \ref{remark:MAC_type_inequalities}, we may treat
the sum-GDoF
$\hat{d}_{i}^{[\pi_{i}(l_{i})]}  = \sum_{s_{i} \leq l_{i}}d_{i}^{[\pi_{i}(s_{i})]}$ as the GDoF of a single user Tx-$\big(\pi_{i}(l_{i}),i\big)$.
With this treatment in mind, consider a subnetwork of the IMAC which constitutes a
$K$-user IC. Such subnetwork must consist of one transmitter from each cell,
e.g. Tx-$\big(\pi_{i}(l_{i}),i\big)$ for all $i \in \langle K \rangle$.
From  \cite[Th. 1]{Geng2015}, the polyhedral TIN-achievable GDoF region of this $K$-user IC has the following cyclic inequalities
\begin{equation}
\nonumber
\sum_{j =1 }^{m}  \hat{d}_{i_{j}}^{[\pi_{i_{j}}(l_{i_{j}})]} \leq
\sum_{j =1 }^{m} \alpha_{i_{j}i_{j}}^{[\pi_{i_{j}}(l_{i_{j}})]} - \alpha_{i_{j}i_{j-1}}^{[\pi_{i_{j}}(l_{i_{j}})]}, \  (i_{1},\ldots  ,i_{m}) \in \Sigma\big(\langle K \rangle\big), m \in \langle 2:K \rangle
\end{equation}
which are included in \eqref{eq:IMAC_TIN_region_3}.
In fact, it can be seen that \eqref{eq:IMAC_TIN_region_3} consists of all cyclic GDoF inequalities resulting from the
polyhedral TIN regions of all possible subnetworks of the IMAC that constitute $K$-user ICs,
while retaining the above GDoF treatment of $\hat{d}_{i}^{[\pi_{i}(l_{i})]}  = \sum_{s_{i} \leq l_{i}}d_{i}^{[\pi_{i}(s_{i})]}$.
\hfill $\lozenge$
\end{remark}
Next, we turn our attention to the role of the decoding order $\bm{\pi}$.
First, it can be easily checked that the optimal GDoF region of the MAC special case is obtained by fixing the decoding order\footnote{Note that
this is in contrast to the MAC capacity region, which requires changing
the successive decoding order to achieve different corner points in general \cite{Cover2012}.
This difference is highlighted in \cite[Fig. 4]{Avestimehr2011} for the 2-user MAC through the linear deterministic
model, which shares many features with the GDoF model.}  to  $\bm{\pi} = \bm{\mathrm{id}}$.
In this case, the signal of a stronger user Tx-$(l_{i},i)$ is received by Rx-$i$ at a higher power level compared to
the signal of a weaker user Tx-$(l_{i}',i)$, $l_{i}' < l_{i}$.
Therefore, it is preferable from a GDoF perspective to decode the signal from Tx-$(l_{i},i)$ first while
treating all signals from Tx-$(l_{i}',i)$, $l_{i}' < l_{i}$,  as noise.
Contrary to the MAC special case however, the decoding order $\bm{\mathrm{id}}$
does not always yield the largest polyhedral TIN-achievable GDoF region for the IMAC,
i.e. $\mathcal{P}_{\bm{\pi}} \subseteq \mathcal{P}_{\bm{\mathrm{id}}}$ does not hold in general for all $\bm{\pi} \in \Pi$.
For example, a stronger user Tx-$(l_{i},i)$ may also have stronger cross links compared to a weaker
user Tx-$(l_{i}',i)$, $l_{i}' < l_{i}$, causing significantly more inter-cell interference.
Tx-$(l_{i},i)$ may be required to control its power to an extent that its signal is now received by
 Rx-$i$ at a lower power level compared to the signal
 of Tx-$(l_{i}',i)$. In this case, some GDoF points may only be
 achieved through a decoding order in which the signal of Tx-$(l_{i}',i)$ is decoded before that of Tx-$(l_{i},i)$.
To further illustrate the influence of $\bm{\pi}$, we consider the following  simple example.
\begin{example}
\label{example:3_user_order}
Consider a network of $K = 2$ cells, where cell $1$ and cell $2$ comprise
$L_{1} = 2$ and $L_{2} = 1$  users, respectively. This 2-cell, 3-user network, referred to as the
PIMAC in \cite{Gherekhloo2016}, is used as a running example throughout this section
as it captures some of the IMAC's main features and allows for GDoF regions that can be visualized.
According to Theorem \ref{theorem:TIN_region}, the region $\mathcal{P}_{\bm{\mathrm{id}}}$ for this network
is the set of all tuples $\big(d_{1}^{[1]},d_{1}^{[2]},d_{2}^{[1]}\big) \in \mathbb{R}_{+}^{3}$  that satisfy
\begin{align}
\label{eq:Poly_3_user_id_1}
d_{1}^{[1]}  & \leq  \alpha_{11}^{[1]} \\
d_{1}^{[2]} + d_{1}^{[1]}  & \leq  \alpha_{11}^{[2]} \\
d_{2}^{[1]}  & \leq  \alpha_{22}^{[1]} \\
d_{1}^{[1]} + d_{2}^{[1]}  & \leq  \alpha_{11}^{[1]} - \alpha_{12}^{[1]} + \alpha_{22}^{[1]} - \alpha_{21}^{[1]}  \\
\label{eq:Poly_3_user_id_last}
d_{1}^{[2]} + d_{1}^{[1]} + d_{2}^{[1]}  & \leq  \alpha_{11}^{[2]} - \alpha_{12}^{[2]}  + \alpha_{22}^{[1]} - \alpha_{21}^{[1]}.
\end{align}
In addition to the decoding order $\bm{\mathrm{id}}$, we have one more decoding order denoted by $\overline{\bm{\mathrm{id}}}$
for which Rx-$1$ decodes the signal of Tx-$(1,1)$ before decoding the signal of Tx-$(2,1)$.
From Theorem \ref{theorem:TIN_region}, the corresponding polyhedral region $\mathcal{P}_{\overline{\bm{\mathrm{id}}}}$ is the set of all $\big(d_{1}^{[1]},d_{1}^{[2]},d_{2}^{[1]}\big) \in \mathbb{R}_{+}^{3}$  satisfying
\begin{align}
\label{eq:Poly_3_user_id_bar_1}
d_{1}^{[2]}  & \leq  \alpha_{11}^{[2]} \\
d_{1}^{[1]} + d_{1}^{[2]}  & \leq  \alpha_{11}^{[1]} \\
d_{2}^{[1]}  & \leq  \alpha_{22}^{[1]} \\
d_{1}^{[2]} + d_{2}^{[1]}  & \leq  \alpha_{11}^{[2]} - \alpha_{12}^{[2]}  + \alpha_{22}^{[1]} - \alpha_{21}^{[1]}  \\
\label{eq:Poly_3_user_id_bar_last}
d_{1}^{[2]} + d_{1}^{[1]} + d_{2}^{[1]}  & \leq  \alpha_{11}^{[1]} - \alpha_{12}^{[1]} + \alpha_{22}^{[1]} - \alpha_{21}^{[1]}
\end{align}
where inequality \eqref{eq:Poly_3_user_id_bar_1} is clearly redundant.
Now let us assume that the following condition holds
\begin{align}
\label{eq:condition_example_order}
\alpha_{21}^{[1]} & \leq \alpha_{11}^{[2]} - \alpha_{12}^{[2]}  < \alpha_{11}^{[1]} - \alpha_{12}^{[1]}.
\end{align}
It can be easily verified that under the condition in \eqref{eq:condition_example_order}, the GDoF tuple given by
\begin{equation}
\label{eq:GDoF_tuple_example_order}
\big(d_{1}^{[1]},d_{1}^{[2]},d_{2}^{[1]}\big)  = \big( (\alpha_{11}^{[1]} - \alpha_{12}^{[1]}) - (\alpha_{11}^{[2]} - \alpha_{12}^{[2]} )  , (\alpha_{11}^{[2]} - \alpha_{12}^{[2]})  - \alpha_{21}^{[1]}  , \alpha_{22}^{[1]} \big)
\end{equation}
lies in the region $\mathcal{P}_{\overline{\bm{\mathrm{id}}}}$.
For this tuple, Tx-$(1,2)$ of cell $2$ achieves its full interference-free GDoF of $\alpha_{22}^{[1]}$,
and hence Tx-$(1,1)$ and Tx-$(2,1)$  of cell $1$ have to lower their transmit powers hence  limiting their
sum-GDoF to $(\alpha_{11}^{[1]} - \alpha_{12}^{[1]}) - \alpha_{21}^{[1]}$.
It can also be checked that the GDoF tuple in \eqref{eq:GDoF_tuple_example_order} is not in the  region $\mathcal{P}_{\bm{\mathrm{id}}}$,
as the inequality \eqref{eq:Poly_3_user_id_last} is violated under condition \eqref{eq:condition_example_order}.
In particular, for decoding order $\bm{\mathrm{id}}$, the sum-GDoF of cell $1$ is bounded by
$(\alpha_{11}^{[2]} - \alpha_{12}^{[2]}) - \alpha_{21}^{[1]} < (\alpha_{11}^{[1]} - \alpha_{12}^{[1]}) - \alpha_{21}^{[1]}$
when cell $2$ achieves its interference-free GDoF of $\alpha_{22}^{[1]}$.
An illustration of $\mathcal{P}_{\bm{\mathrm{id}}}$ and $\mathcal{P}_{\overline{\bm{\mathrm{id}}}}$
for an instance of the above network that satisfies \eqref{eq:condition_example_order}
is shown in Fig. \ref{fig:GDoF_regions}(c).
\hfill $\lozenge$
\end{example}
\begin{figure}[t]
\centering
\includegraphics[width = 1.0\textwidth,trim={0cm 0cm 0cm 5.9cm},clip]{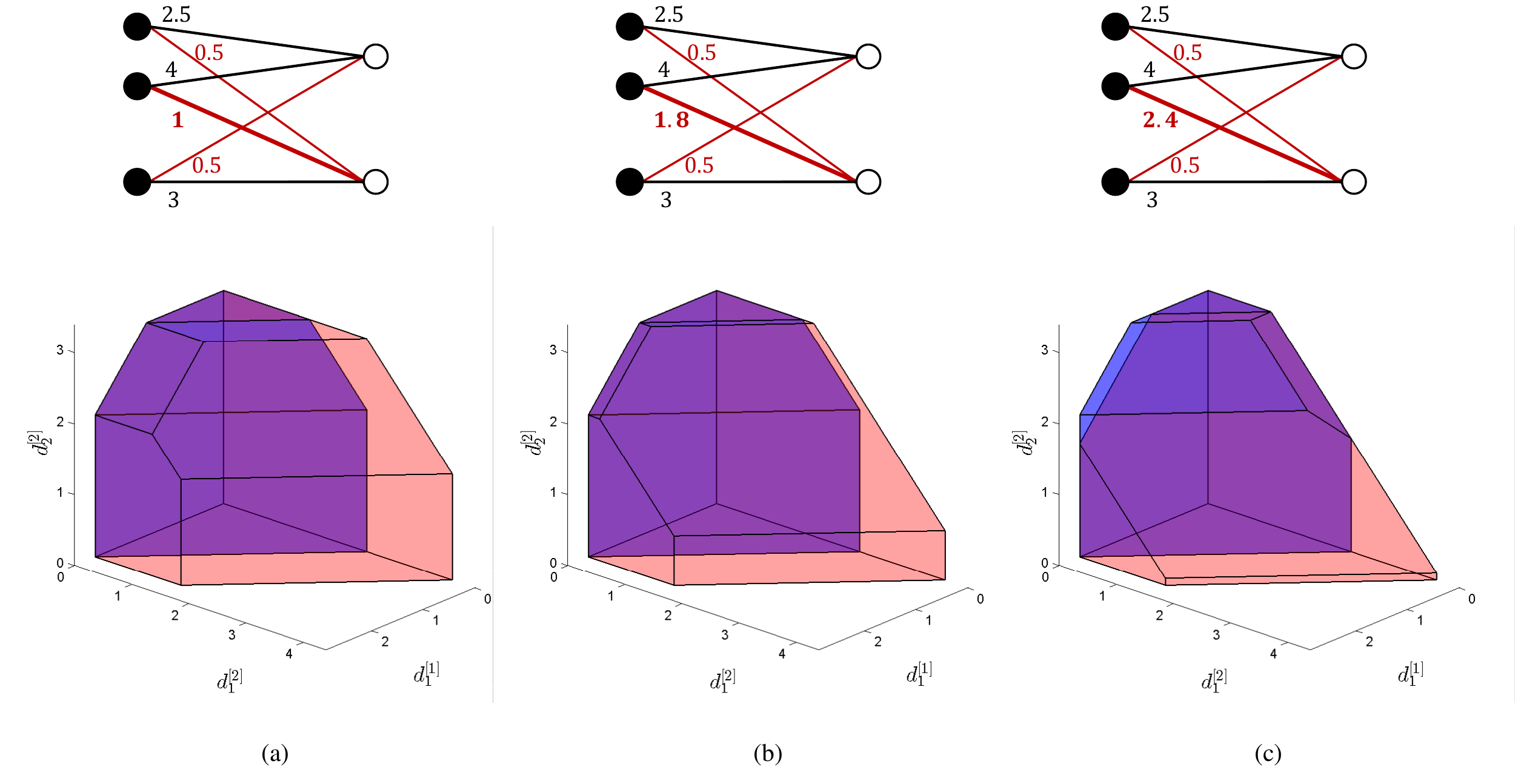}
\caption{\small
Polyhedral TIN-achievable GDoF regions for the 2-cell, 3-user network from the running example.
The regions $\mathcal{P}_{\bm{\mathrm{id}}}$ and
$\mathcal{P}_{\overline{\bm{\mathrm{id}}}}$ are illustrated in red and blue, respectively. For the instances in (a) and (b), we have
$\mathcal{P}_{\overline{\bm{\mathrm{id}}}} \subseteq \mathcal{P}_{\bm{\mathrm{id}}}$. For (c), we have
$\mathcal{P}_{\overline{\bm{\mathrm{id}}}} \nsubseteq \mathcal{P}_{\bm{\mathrm{id}}}$ and
$\mathcal{P}_{\bm{\mathrm{id}}}  \nsubseteq \mathcal{P}_{\overline{\bm{\mathrm{id}}}}$.}
\label{fig:GDoF_regions}
\end{figure}
The result in Theorem \ref{theorem:TIN_region} can be easily adapted to characterize the general polyhedral
TIN region for any subnetwork $\mathcal{S} \subseteq \mathcal{K}$ as shown in the following remark.
\begin{remark}
\label{remark:General_poly_TIN_region}
For any subnetwork $\mathcal{S} = \cup_{i \in \mathcal{M}} \mathcal{S}_{i}$, where $\mathcal{M} \subseteq \langle K \rangle$ and
$\mathcal{S}_{i} \subseteq \mathcal{K}_{i}$, $i \in \mathcal{M}$, the polyhedral TIN region
$\mathcal{P}_{\bm{\pi}'}(\mathcal{S})$, where $\bm{\pi}' \in \Pi(\mathcal{S})$, is described by
all tuples $\mathbf{d} \in \mathbb{R}_{+}^{|\mathcal{K}|}$ that satisfy
\begin{align}
d_{j}^{[l_{j}]}   &  = 0,   \ \forall (l_{j},j) \in \overline{\mathcal{S}} \\
\sum_{s_{i} = 1}^{ l_{i}}
d_{i}^{[\pi'_{i}(s_{i})]} & \leq \alpha_{ii}^{[\pi'_{i}(l_{i})]}, \
\forall l_{i}\in \langle |\mathcal{S}_{i}| \rangle, \; i \in \mathcal{M} \\
\nonumber
\sum_{j =1 }^{m} \sum_{s_{i_{j}} = 1}^{l_{i_{j}} } d_{i_{j}}^{[\pi'_{i_{j}}(s_{i_{j}})]} & \leq
\sum_{j =1 }^{m} \alpha_{i_{j}i_{j}}^{[\pi'_{i_{j}}(l_{i_{j}})]} - \alpha_{i_{j}i_{j-1}}^{[\pi'_{i_{j}}(l_{i_{j}})]}, \\
 \forall l_{i_{j}} \in \langle |\mathcal{S}_{i}| \rangle ,
 (i_{1} & ,\ldots  ,i_{m}) \in \Sigma\big(\mathcal{M}\big), m \in \langle 2:|\mathcal{M}| \rangle.
\end{align}
Note that the definitions of  $\bm{\pi}'$ and  $ \Pi(\mathcal{S})$
 are given in Remark \ref{remark:inner_redundancies} (see also footnote
 \ref{footnote:decoding_order_subset}).
\hfill $\lozenge$
\end{remark}
\subsection{Characterization of General TIN-Achievable GDoF Region}
Following the characterization of polyhedral TIN-achievable GDoF regions,
the natural question to ask next is whether we can characterize the general TIN-achievable GDoF region $\mathcal{P}^{\star}$.
This is settled in the following theorem which makes use of the results
in Theorem \ref{theorem:TIN_region} and Remark \ref{remark:General_poly_TIN_region}.
\begin{theorem}
\label{theorem:TIN_region_general}
For the IMAC described in Section \ref{sec:system model}, the general TIN-achievable region is equal to
\begin{equation}
\label{eg:general_TIN_region_thrm}
\mathcal{P}^{\star} = \bigcup_{\bm{\pi} \in \Pi} \bigcup_{\mathcal{S} \subseteq \mathcal{K}}
\mathcal{P}_{\bm{\pi}}(\mathcal{S}).
\end{equation}
\end{theorem}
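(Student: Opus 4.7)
The plan is to establish the equality \eqref{eg:general_TIN_region_thrm} via two matching inclusions. The ``$\supseteq$'' direction is precisely the content of \eqref{eq:TIN_region_inner_1}, so all the work lies in the reverse inclusion $\mathcal{P}^{\star} \subseteq \bigcup_{\bm{\pi}\in\Pi}\bigcup_{\mathcal{S}\subseteq\mathcal{K}}\mathcal{P}_{\bm{\pi}}(\mathcal{S})$, which I would prove by a \emph{restriction-to-the-support} argument.

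Fix an arbitrary $\mathbf{d}\in\mathcal{P}^{\star}$. By \eqref{eq:TIN_GDoF_region} and the definition of $\mathcal{P}_{\bm{\pi}}^{\star}$, there exist a decoding order $\bm{\pi}\in\Pi$ and a feasible $\mathbf{r}\leq \mathbf{0}$ for which \eqref{eq:GDoF per user} holds entrywise. Let $\mathcal{S}=\{(l_k,k)\in\mathcal{K}:d_k^{[l_k]}>0\}$ be the support of $\mathbf{d}$, and construct a modified power allocation $\mathbf{s}$ by $s_k^{[l_k]}=r_k^{[l_k]}$ for $(l_k,k)\in\mathcal{S}$ and $s_k^{[l_k]}=-\infty$ otherwise. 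I would then argue that $(\bm{\pi},\mathbf{s})$, viewed as a polyhedral TIN strategy on the subnetwork $\mathcal{S}$ (the order of inactive users being immaterial by Remark \ref{remark:inner_redundancies}), certifies $\mathbf{d}\in\mathcal{P}_{\bm{\pi}}(\mathcal{S})$. For $(l_k,k)\in\overline{\mathcal{S}}$ this is immediate, because deactivation enforces $d_k^{[l_k]}=0$, matching the defining requirement of $\mathcal{P}_{\bm{\pi}}(\mathcal{S})$ recorded in Remark \ref{remark:General_poly_TIN_region}. For $(l_k,k)\in\mathcal{S}$, the strict positivity $d_k^{[\pi_k(l_k)]}>0$ under the original $\mathbf{r}$ forces the inner argument of the outer $\max\{0,\cdot\}$ in \eqref{eq:GDoF per user} to be strictly positive---exactly the condition underpinning the polyhedral relaxation \eqref{eq:polyhedral_region_3}. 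Moreover, replacing $\mathbf{r}$ by $\mathbf{s}$ merely deletes $-\infty$ terms from the inner maximum $\max\bigl\{\max_{l_k'<l_k}\{\cdots\},\max_{j\neq k}\max_{l_j}\{\cdots\}\bigr\}$, and hence weakly enlarges the polyhedral upper bound on $d_k^{[\pi_k(l_k)]}$. Therefore \eqref{eq:polyhedral_region_1}--\eqref{eq:polyhedral_region_3} remain satisfied under $(\bm{\pi},\mathbf{s})$ on $\mathcal{S}$.

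The main obstacle is modest but essential: one must confirm both that the non-negativity condition implicit in the polyhedral relaxation (allowing the outer $\max\{0,\cdot\}$ in \eqref{eq:GDoF per user} to be dropped) is inherited by every user in $\mathcal{S}$ under the reduced allocation $\mathbf{s}$, and that switching off users in $\overline{\mathcal{S}}$ never tightens the polyhedral bounds faced by users in $\mathcal{S}$. Both reduce to the elementary observation that an inactive user's power enters an active user's GDoF constraint only through an interference maximum, which a smaller interferer set can only shrink. Once this reverse inclusion is in hand, combining with \eqref{eq:TIN_region_inner_1} delivers the claimed equality.
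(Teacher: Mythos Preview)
Your proposal is correct and follows essentially the same route as the paper's proof in Section~\ref{subsec:proofs_general_TIN_region_1}: fix $\bm{\pi}$, take $\mathcal{S}$ to be the support of $\mathbf{d}$, set the powers of users in $\overline{\mathcal{S}}$ to $-\infty$, and use the strict positivity on $\mathcal{S}$ to drop the outer $\max\{0,\cdot\}$ while noting that removing interferers can only relax the polyhedral upper bounds. The paper's argument is line-for-line the one you sketched, with $\tilde{\mathbf{r}}$ playing the role of your $\mathbf{s}$.
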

The above theorem is proved by essentially showing that the inclusion in \eqref{eq:TIN_region_inner_1}
also holds in the opposite direction.
Full details are given in Section \ref{subsec:proofs_general_TIN_region_1}.

The general TIN-achievable region $\mathcal{P}^{\star}$, as seen from \eqref{eg:general_TIN_region_thrm},
is a finite union of polyhedra.
While the order of the two union operators in \eqref{eg:general_TIN_region_thrm} is set in this manner by construction
(see \eqref{eq:TIN_region_inner_1}), and also used in this fashion in the proof (see Section \ref{subsec:proofs_general_TIN_region_1}),
we may swap the order of the operators to eliminate redundancies as suggested by
Remark \ref{remark:inner_redundancies}.
It follows that \eqref{eg:general_TIN_region_thrm} is equivalent to
\begin{equation}
\label{eg:general_TIN_region_thrm_2}
\mathcal{P}^{\star} = \bigcup_{\mathcal{S} \subseteq \mathcal{K}}  \bigcup_{\bm{\pi}' \in \Pi(\mathcal{S})}
\mathcal{P}_{\bm{\pi}'}(\mathcal{S}).
\end{equation}
We observe that
there is a total of $2^{|\mathcal{K}|} - 1$ non-empty subnetworks of
$\mathcal{K}$ (including $\mathcal{K}$ itself)
and each such subnetwork may be expressed as $\mathcal{S} = \cup_{i \in \mathcal{K}} \mathcal{S}_{i} =
\cup_{i \in \mathcal{M}} \mathcal{S}_{i}$, where
$\mathcal{M} \subseteq \langle K \rangle$, $\mathcal{S}_{i} \subseteq \mathcal{K}_{i}$ and $\mathcal{S}_{i}
= \emptyset$ for all $i \in \mathcal{K} \setminus \mathcal{M}$.
Therefore, $\mathcal{S}$
admits $|\Pi(\mathcal{S})| = \prod_{i\in \mathcal{K}}(|\mathcal{S}_{i}|!)$ different decoding orders\footnote{Note that
we use the conventions $|\emptyset| = 0$ and $0! = 1$.}.
It follows from the representation in \eqref{eg:general_TIN_region_thrm_2}  that
$\mathcal{P}^{\star}$  is the union of
$\sum_{\mathcal{S}_{1} \subseteq \mathcal{K}_{1}} \! \!   \cdots \!  \sum_{\mathcal{S}_{K} \subseteq
\mathcal{K}_{K}}  \! \prod_{i\in \mathcal{K}}(|\mathcal{S}_{i}|!)$
polyhedral TIN-achievable regions in general.

From the above characterizations, we conclude that when time-sharing is not allowed, the GDoF region $\mathcal{P}^{\star}$, which is
achieved through power control and TIN, is not convex in general as it is given by a finite union of polyhedra.
This is further illustrated by revisiting our running example.
\begin{example}
Consider the 2-cell, 3-user network from Example \ref{example:3_user_order}.
For each of the instances of this network given in Fig. \ref{fig:GDoF_regions}, it can easily checked that the
polyhedral TIN-achievable GDoF regions for all subnetworks are included in
the 3-user polyhedral regions
$\mathcal{P}_{\bm{\mathrm{id}}}$ and $\mathcal{P}_{\overline{\bm{\mathrm{id}}}}$.
Therefore, it follows that
$\mathcal{P}^{\star}$ coincides with $\mathcal{P}_{\bm{\mathrm{id}}} \cup \mathcal{P}_{\overline{\bm{\mathrm{id}}}}$ for the examples in
Fig. \ref{fig:GDoF_regions}, from which we observe that $\mathcal{P}^{\star}$ is convex for the instances in (a) and (b),
and non-convex for the instance in (c).
\hfill $\lozenge$
\end{example}
The observation that $\mathcal{P}^{\star}$ is non-convex in general  is key in guiding the path towards
establishing conditions under which $\mathcal{P}^{\star}$ is optimal as we show in the following parts of this section.
\subsection{Conditions for TIN-Convexity}
\label{subsec:TIN_polyhedrality}
Capacity regions of synchronous channels, and therefore their GDoF counterparts, are known to be convex by virtue of
time-sharing \cite{ElGamal2011}.
Hence, for the general TIN-achievable GDoF region $\mathcal{P}^{\star}$ to be optimal, it must necessarily also be convex.
Here we identify conditions under which the latter holds,
i.e. where $\mathcal{P}^{\star}$ is convex in its own right without the need for
time-sharing.
This serves as a first step towards establishing conditions under which $\mathcal{P}^{\star}$ is also optimal.

From the characterization in \eqref{eg:general_TIN_region_thrm}, it can be seen that the polyhedrality (and hence convexity) of $\mathcal{P}^{\star}$ is guaranteed when at least one of the polyhedral TIN regions in the union contains all others, i.e.
if there exists $\bm{\pi}^{\star} \in \Pi$ and  $\mathcal{S}^{\star} \subseteq \mathcal{K}$ such that the following holds:
\begin{equation}
\mathcal{P}_{\bm{\pi}^{\star}}(\mathcal{S}^{\star}) \supseteq
\mathcal{P}_{\bm{\pi}}(\mathcal{S}), \ \forall \bm{\pi} \in \Pi \text{ and } \mathcal{S} \subseteq \mathcal{K}.
\end{equation}
In the following result, we identify conditions under which such $\bm{\pi}^{\star}$ and $\mathcal{S}^{\star}$ exist.
\begin{theorem}
\label{theorem:polyhedrality}
For the IMAC described in Section \ref{sec:system model}, if the following conditions are satisfied
\begin{align}
\label{eq:Poly_condition_2}
\alpha_{ii}^{[l_{i}]}  & \geq  \alpha_{ii}^{[l_{i}']} + \max_{j:j\neq i} \left\{ \alpha_{ij}^{[l_{i}]} - \alpha_{ij}^{[l_{i}']} \right\}, \
\forall i \in \langle K \rangle, \; l_{i}',l_{i} \in \langle L_{i} \rangle, \; l_{i}' < l_{i} \\
\label{eq:Poly_condition_1}
\alpha_{ii}^{[l_{i}]} & \geq \max_{j,(l_{k},k):j \neq i,k \neq i} \left\{\alpha_{ij}^{[l_{i}]} + \alpha_{ki}^{[l_{k}]} -
\alpha_{kj}^{[l_{k}]} \mathbbm{1}\big( k \neq j\big) \right\}, \ \forall(l_{i},i) \in \mathcal{K},
\end{align}
then the general TIN-achievable GDoF region $\mathcal{P}^{\star}$ is a polyhedron and it is given by
$\mathcal{P}^{\star} = \mathcal{P}_{\bm{\mathrm{id}}}$.
This region, achieved with a fixed decoding order $\bm{\pi} = \bm{\mathrm{id}}$, is
described by \eqref{eq:IMAC_TIN_region_1}  and \eqref{eq:IMAC_TIN_region_3} in Theorem \ref{theorem:TIN_region}
while setting $\pi_{i}(l_{i}) = l_{i}$ for all $(l_{i},i) \in \mathcal{K}$.
\end{theorem}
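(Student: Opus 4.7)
The plan is to establish $\mathcal{P}^{\star} = \mathcal{P}_{\bm{\mathrm{id}}}$ by proving the nontrivial inclusion $\mathcal{P}^{\star} \subseteq \mathcal{P}_{\bm{\mathrm{id}}}$, since the reverse containment is immediate from \eqref{eq:TIN_GDoF_region}. Invoking the representation \eqref{eg:general_TIN_region_thrm_2} of $\mathcal{P}^{\star}$ as a union of subnetwork polyhedra, it suffices to show
\begin{equation}
\nonumber
\mathcal{P}_{\bm{\pi}'}(\mathcal{S}) \; \subseteq \; \mathcal{P}_{\bm{\mathrm{id}}_{\mathcal{S}}}(\mathcal{S}) \; \subseteq \; \mathcal{P}_{\bm{\mathrm{id}}}, \quad \forall\, \mathcal{S} \subseteq \mathcal{K}, \; \bm{\pi}' \in \Pi(\mathcal{S}),
\end{equation}
where $\bm{\mathrm{id}}_{\mathcal{S}} \in \Pi(\mathcal{S})$ denotes the restriction of $\bm{\mathrm{id}}$ to the active users (in each active cell the elements of $\mathcal{S}_i$ are decoded in increasing index order). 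The two inclusions correspond to the two sub-conditions of the theorem: \eqref{eq:Poly_condition_2} will drive the first, while \eqref{eq:Poly_condition_1} (combined with \eqref{eq:Poly_condition_2}) will drive the second.

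For the first inclusion I would derive each defining inequality of $\mathcal{P}_{\bm{\mathrm{id}}_{\mathcal{S}}}(\mathcal{S})$, as listed in Remark \ref{remark:General_poly_TIN_region}, from an appropriate $\bm{\pi}'$-inequality. Fix an active cell $i \in \mathcal{M}$ and a target index $u_i \in \mathcal{S}_i$ appearing on the right-hand side of the identity-order bound, and take $l_i'$ to be the smallest position with $\{\pi_i'(1),\ldots,\pi_i'(l_i')\} \supseteq \{s \in \mathcal{S}_i : s \leq u_i\}$; by minimality $\pi_i'(l_i') \leq u_i$. The $\bm{\pi}'$ partial sum then majorizes the identity-order partial sum, since it contains the same $d$-variables augmented by additional non-negative ones supported in $\mathcal{S}_i$. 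For the right-hand side, I would rewrite \eqref{eq:Poly_condition_2} as the monotonicity
\begin{equation}
\nonumber
\alpha_{ii}^{[l]} - \alpha_{ij}^{[l]} \text{ is non-decreasing in } l, \ \forall j \neq i,
\end{equation}
which, together with \eqref{eq:strength_order} for the pure MAC bounds, shows term-by-term that replacing $\pi_{i_j}'(l_{i_j}') \leq u_{i_j}$ by $u_{i_j}$ only enlarges the right-hand side of each cyclic (and MAC) constraint. Hence every $\bm{\pi}'$-inequality is tighter than its $\bm{\mathrm{id}}_{\mathcal{S}}$ counterpart, giving the first inclusion.

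For the second inclusion I would compare the constraints of $\mathcal{P}_{\bm{\mathrm{id}}}$ against those of $\mathcal{P}_{\bm{\mathrm{id}}_{\mathcal{S}}}(\mathcal{S})$ after setting the inactive $d_j^{[l]}$ to zero. MAC bounds reduce immediately by \eqref{eq:strength_order}. The crux is a cyclic bound of $\mathcal{P}_{\bm{\mathrm{id}}}$ over a cycle $(i_1,\ldots,i_m) \in \Sigma(\langle K \rangle)$ with indices $l_{i_j} \in \langle L_{i_j} \rangle$: after zeroing, its left-hand side collapses to a sum over the contracted cycle formed by those $i_j$ that possess an active user of index $\leq l_{i_j}$, with each such index reduced to $u_{i_j} \triangleq \max\{s \in \mathcal{S}_{i_j} : s \leq l_{i_j}\}$. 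I would lower-bound the original right-hand side by the contracted one in two moves: (i) iteratively delete each cell outside the contracted cycle, each deletion weakly decreasing the right-hand side by an amount certified to be non-positive through condition \eqref{eq:Poly_condition_1} (the Yi--Caire insertion/deletion identity, with its $k = j$ degenerate case handling the final contraction to the surviving MAC bound); (ii) reduce each surviving index from $l_{i_j}$ down to $u_{i_j}$ via \eqref{eq:Poly_condition_2}. The resulting contracted inequality is a defining constraint of $\mathcal{P}_{\bm{\mathrm{id}}_{\mathcal{S}}}(\mathcal{S})$ and is therefore satisfied by $\mathbf{d}$.

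The main technical obstacle I foresee is step (i) of the second inclusion: condition \eqref{eq:Poly_condition_1} is a local three-cell statement, and I need it to correctly account for the right-hand side change across an arbitrarily long chain of deletions. The argument should go through because a single deletion of a cell $X$ sitting between its cycle-neighbours $B$ and $C$ modifies only the three right-hand side terms involving $X$; applying \eqref{eq:Poly_condition_1} with $(i,j,k,l_i,l_k) \leftarrow (X,B,C,l_X,l_C)$ certifies that this local change is non-positive, \emph{independently} of the user index in $B$. Iterating along the cycle would yield the contracted inequality, and combining with step (ii) closes the second inclusion. Taking the union over $\mathcal{S}$ and $\bm{\pi}'$ in \eqref{eg:general_TIN_region_thrm_2} would then give $\mathcal{P}^{\star} \subseteq \mathcal{P}_{\bm{\mathrm{id}}}$ and, together with the trivial reverse inclusion, the claimed equality $\mathcal{P}^{\star} = \mathcal{P}_{\bm{\mathrm{id}}}$.
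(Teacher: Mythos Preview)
Your proposal is correct and follows essentially the same approach as the paper's proof. The paper organizes the argument into three steps---first showing $\mathcal{P}_{\bm{\pi}}(\mathcal{S}) \subseteq \mathcal{P}_{\bm{\mathrm{id}}}(\mathcal{S})$ via \eqref{eq:Poly_condition_2}, then augmenting $\mathcal{S}$ to full cells $\cup_{i\in\mathcal{M}}\mathcal{K}_i$ again via \eqref{eq:Poly_condition_2}, and finally adding the remaining cells via \eqref{eq:Poly_condition_1}---whereas you merge the last two into a single direct comparison of constraints; but the underlying lemmas (monotonicity of $\alpha_{ii}^{[l]}-\alpha_{ij}^{[l]}$ in $l$, and the single-cell insertion/deletion inequality certified by \eqref{eq:Poly_condition_1}) are identical.
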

We refer to the conditions identified in Theorem \ref{theorem:polyhedrality} as the \emph{TIN-convexity conditions}.
To gain insight into these conditions, we first consider  \eqref{eq:Poly_condition_2} which is equivalently expressed as
\begin{equation}
\label{eq:Poly_condition_2_2}
\alpha_{ii}^{[l_{i}]} - \alpha_{ij}^{[l_{i}]}  \geq \alpha_{ii}^{[l_{i}']} - \alpha_{ij}^{[l_{i}']}, \
\forall i,j \in \langle K \rangle, \; i \neq j, \; l_{i}',l_{i} \in \langle L_{i} \rangle, \; l_{i}' < l_{i}.
\end{equation}
Now consider users Tx-$(l_{i},i)$ and Tx-$(l_{i}',i)$ from cell $i$
with the former being the stronger MAC user, i.e. $\alpha_{ii}^{[l_{i}]} \geq \alpha_{ii}^{[l_{i}']}$.
Moreover, we focus on the interference caused by these two users to some cell $j$.
The condition in  \eqref{eq:Poly_condition_2_2} implies that even after attenuating the powers of Tx-$(l_{i},i)$ and Tx-$(l_{i}',i)$
such that they cause no interference to cell $j$ above noise level, i.e.  $r_{i}^{[l_{i}]} = - \alpha_{ij}^{[l_{i}]}$ and
$r_{i}^{[l_{i}']} = - \alpha_{ij}^{[l_{i}']}$,
Tx-$(l_{i},i)$ remains stronger compared to Tx-$(l_{i}',i)$ in the sense that its signal is still received by Rx-$i$ at a higher power level.
This extends to all users such that the MAC order of users in each cell is preserved under the constraint of reducing inter-cell interference
caused to any subset of cells to noise level.
As a wider implication, we see through the proof of Theorem \ref{theorem:polyhedrality}
in Section \ref{sec:proof_polyhedrality} that the condition in \eqref{eq:Poly_condition_2_2}
is sufficient to guarantee that
$\bm{\mathrm{id}}$ is the dominant order, i.e. for any subnetwork $\mathcal{S} \subseteq \mathcal{K}$, we have
$\mathcal{P}_{\bm{\pi}}(\mathcal{S}) \subseteq \mathcal{P}_{\bm{\mathrm{id}}}(\mathcal{S})$ for all
$\bm{\pi} \in \Pi$.

In addition to order preservation within each MAC, the following step in
establishing Theorem \ref{theorem:polyhedrality} is to show that
$\mathcal{P}_{\bm{\mathrm{id}}}(\mathcal{S}) \subseteq \mathcal{P}_{\bm{\mathrm{id}}}(\mathcal{K})$ holds
for all subnetworks $\mathcal{S} \subseteq \mathcal{K}$.
To this end, we note that \eqref{eq:Poly_condition_1} is essentially the TIN-convexity condition for the $K$-user IC,
identified by Yi and Caire in \cite[Th. 4]{Yi2016}, applied to all possible $K$-user IC subnetworks of the considered IMAC.
This condition in conjunction with the one in \eqref{eq:Poly_condition_2} are sufficient to guarantee a monotonic behaviour
of $\mathcal{P}_{\bm{\mathrm{id}}}(\mathcal{S})$ in $\mathcal{S}$, i.e.
$\mathcal{S}' \subseteq \mathcal{S} \subseteq \mathcal{K}$ implies
$\mathcal{P}_{\bm{\mathrm{id}}}(\mathcal{S}') \subseteq \mathcal{P}_{\bm{\mathrm{id}}}(\mathcal{S})
\subseteq \mathcal{P}_{\bm{\mathrm{id}}}(\mathcal{K})$.
Full details are relegated to  Section \ref{sec:proof_polyhedrality}.
\begin{example}
\label{example:3_polyhedrality}
Consider the 2-cell, 3-user network from the running example.
The TIN-convexity  condition in \eqref{eq:Poly_condition_2} of Theorem \ref{theorem:polyhedrality}
is expressed for this network as
\begin{equation}
\label{eq:Poly_condition_2_3_user_example}
\alpha_{11}^{[2]} - \alpha_{11}^{[1]} \geq \alpha_{12}^{[2]} - \alpha_{12}^{[1]}.
\end{equation}
On the other hand, the IC-type TIN-convexity condition in  \eqref{eq:Poly_condition_1}
is given for this network by the following set of inequalities\footnote{Note that this is equivalent
to the TIN-optimality condition of Geng \emph{et al.} \cite{Geng2015} applied
to each of the 2-user IC subnetworks of the IMAC in Fig. \ref{fig:GDoF_regions}.
This holds since the IC-type TIN-optimality and TIN-convexity conditions are
identical
in 2-cell networks (see Theorem \ref{theorem:TIN_optimality} and Example \ref{example:3_user_TIN} in the following part).}:
\begin{align}
\label{eq:TIN_condition_3_user_example_1_1}
\alpha_{11}^{[1]} & \geq \alpha_{12}^{[1]} + \alpha_{21}^{[1]} \\
\alpha_{22}^{[1]}       & \geq \alpha_{12}^{[1]} + \alpha_{21}^{[1]} \\
\alpha_{11}^{[2]} & \geq \alpha_{12}^{[2]} + \alpha_{21}^{[1]} \\
\label{eq:TIN_condition_3_user_example_1_4}
\alpha_{22}^{[1]}       & \geq \alpha_{12}^{[2]} + \alpha_{21}^{[1]}.
\end{align}
It can be verified that the instances of this network given in (a) and (b) of Fig. \ref{fig:GDoF_regions}
satisfy the above TIN-convexity conditions. This in turn leads to
$\mathcal{P}_{\overline{\bm{\mathrm{id}}}} \subseteq \mathcal{P}_{\bm{\mathrm{id}}}$
and $\mathcal{P}^{\star}$ being a polyhedron, and hence convex, as seen in the illustrations.
On the other hand, the instance given in (c) of Fig. \ref{fig:GDoF_regions}
violates these conditions and has a region $\mathcal{P}^{\star}$ which is non-convex.
This example, however, is far from enough for proving that the
TIN-convexity conditions identified in Theorem \ref{theorem:polyhedrality} are also necessarily for
the convexity of $\mathcal{P}^{\star}$.
This issue of necessity and sufficiency of TIN conditions in this work, and in the related literature,
is revisited in Remark \ref{remark:sufficiency_of_conditions} presented at the end of the section.
\hfill $\lozenge$
\end{example}
Knowing that the convexity of an achievable GDoF region is a necessary condition for it to be optimal,
the main question that comes to mind at this point is whether the TIN-convexity conditions identified in
Theorem \ref{theorem:polyhedrality}, under which the TIN region $\mathcal{P}^{\star}$ is convex,
also imply the optimality of  $\mathcal{P}^{\star}$.
This issue is further explored in Remark \ref{remark:TIN_and_IA}, after presenting the final main result of this work next.
\subsection{Conditions for TIN-Optimality}
\label{subsec:TIN_optimality}
In the following theorem, we obtain \emph{TIN-optimality conditions} under which the TIN scheme proposed in
Section \ref{subsec:TIN_scheme}, with power control, successive decoding and no time-sharing,
achieves the entire GDoF region of the IMAC which we denote by $\mathcal{D}$.
\begin{theorem}
\label{theorem:TIN_optimality}
For the IMAC described in Section \ref{sec:system model}, if the following conditions are satisfied
\begin{align}
\label{eq:TIN_condition_2}
\alpha_{ii}^{[l_{i}]}  & \geq \alpha_{ii}^{[l_{i}']}  + \max_{j:j\neq i} \left\{ \min\left\{\alpha_{ij}^{[l_{i}]},2\alpha_{ij}^{[l_{i}]} - \alpha_{ij}^{[l_{i}']} \right\} \right\}, \
\forall i \in \langle K \rangle, \; l_{i}',l_{i} \in \langle L_{i} \rangle, \; l_{i}' < l_{i} \\
\label{eq:TIN_condition_1}
\alpha_{ii}^{[l_{i}]} & \geq \max_{j:j \neq i} \left\{\alpha_{ij}^{[l_{i}]} \right\} +
\max_{(l_{k},k):k \neq i} \left\{\alpha_{ki}^{[l_{k}]} \right\}, \ \forall(l_{i},i)\in
\mathcal{K},
\end{align}
then the optimal GDoF region is given by $\mathcal{D} = \mathcal{P}^{\star} = \mathcal{P}_{\bm{\mathrm{id}}}$.
This region, achieved with a fixed decoding order $\bm{\pi} = \bm{\mathrm{id}}$, is
described by \eqref{eq:IMAC_TIN_region_1} and \eqref{eq:IMAC_TIN_region_3} in Theorem \ref{theorem:TIN_region}
while setting $\pi_{i}(l_{i}) = l_{i}$ for all $(l_{i},i) \in \mathcal{K}$.
\end{theorem}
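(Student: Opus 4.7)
The plan is to establish achievability by reducing to Theorem~\ref{theorem:polyhedrality}, and then to derive a matching genie-aided converse on the region $\mathcal{P}_{\bm{\mathrm{id}}}$.

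\textbf{Achievability.} I would first verify that the TIN-optimality conditions \eqref{eq:TIN_condition_2}--\eqref{eq:TIN_condition_1} are strictly stronger than the TIN-convexity conditions \eqref{eq:Poly_condition_2}--\eqref{eq:Poly_condition_1}. For \eqref{eq:TIN_condition_2} implying \eqref{eq:Poly_condition_2}, the key observation is that
\begin{equation*}
\min\bigl\{\alpha_{ij}^{[l_i]},\; 2\alpha_{ij}^{[l_i]} - \alpha_{ij}^{[l_i']}\bigr\} \;\geq\; \alpha_{ij}^{[l_i]} - \alpha_{ij}^{[l_i']}
\end{equation*}
whenever $\alpha_{ij}^{[l_i]},\alpha_{ij}^{[l_i']} \geq 0$. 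For \eqref{eq:TIN_condition_1} implying \eqref{eq:Poly_condition_1}, separating the joint maximum into a sum of individual maxima only enlarges the right-hand side, and dropping the nonnegative term $\alpha_{kj}^{[l_k]}\mathbbm{1}(k \neq j)$ completes the implication. Theorem~\ref{theorem:polyhedrality} then yields $\mathcal{P}^{\star} = \mathcal{P}_{\bm{\mathrm{id}}}$, so it suffices to prove the converse $\mathcal{D} \subseteq \mathcal{P}_{\bm{\mathrm{id}}}$.

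\textbf{Converse.} The description of $\mathcal{P}_{\bm{\mathrm{id}}}$ in Theorem~\ref{theorem:TIN_region} consists of two families of inequalities. The single-cell MAC-type bounds $\sum_{s_i \leq l_i} d_i^{[s_i]} \leq \alpha_{ii}^{[l_i]}$ follow from a standard cut-set argument: give Rx-$i$ the transmit signals of all other cells and of the in-cell users $\{(s,i) : s > l_i\}$ as genie side information, reducing the problem to a sub-MAC on users $\{(s,i) : s \leq l_i\}$ whose sum-GDoF is exactly $\alpha_{ii}^{[l_i]}$. For each cyclic bound \eqref{eq:IMAC_TIN_region_3} indexed by $(i_1,\ldots,i_m) \in \Sigma(\langle K \rangle)$ and $(l_{i_1},\ldots,l_{i_m})$, I would first deactivate every user outside $\bigcup_{j=1}^m \{(s,i_j) : s \leq l_{i_j}\}$ (which can only enlarge $\mathcal{D}$), reducing to a cyclic IMAC subnetwork, and then apply a genie-aided argument in the spirit of Etkin--Tse--Wang \cite{Etkin2008} and of \cite{Geng2015,Zhou2013}. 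Specifically, for each participating cell $i_j$ I would give Rx-$i_j$ the side information
\begin{equation*}
G_{i_j}^{\,n} \;=\; \sum_{s=1}^{l_{i_j}} g_{i_j,i_{j-1}}^{[s]}\, X_{i_j}^{[s]\, n} \;+\; \tilde{Z}_{i_j}^{\,n},
\end{equation*}
a noisy linear combination of the in-cell transmit signals of cell $i_j$, whose weights $g_{i_j,i_{j-1}}^{[s]}$ are designed to emulate the interference that cell $i_j$ inflicts on Rx-$i_{j-1}$. Applying Fano's inequality, the chain rule, and telescoping around the cycle so that each term $h(Y_{i_j}^n, G_{i_j}^n \mid \text{messages of cell } i_j)$ cancels the $h(G_{i_{j+1}}^n)$ term from the next cell, the bound collapses (up to $o(\log P)$ terms) to $\sum_{j=1}^m \bigl(\alpha_{i_j i_j}^{[l_{i_j}]} - \alpha_{i_j i_{j-1}}^{[l_{i_j}]}\bigr)$, as desired.

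\textbf{Main obstacle.} The delicate step is the design of the genie weights $g_{i_j,i_{j-1}}^{[s]}$ and the verification that the telescoping closes in the target regime. The weights must simultaneously ensure that the augmented observation $(Y_{i_j}^n, G_{i_j}^n)$ essentially resolves the interfering users in cell $i_{j-1}$ (up to level $l_{i_{j-1}}$) at Rx-$i_j$, yet keep $h(G_{i_j}^n)$ no larger than $n\bigl(\alpha_{i_j i_{j-1}}^{[l_{i_j}]}\log P + O(1)\bigr)$ so that the loss term matches exactly. Condition \eqref{eq:TIN_condition_1} is precisely what guarantees simultaneous feasibility: it forces each direct link $\alpha_{ii}^{[l_i]}$ to dominate the sum of the worst outgoing cross-link $\alpha_{ij}^{[l_i]}$ and the worst incoming cross-link $\alpha_{ki}^{[l_k]}$, giving enough slack to place the genie without overshooting. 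Condition \eqref{eq:TIN_condition_2} ensures that within each cell the MAC-type ordering is respected by the chain of entropy manipulations, so that the genie-aided bound aggregates in-cell GDoF precisely as $\sum_{s \leq l_{i_j}} d_{i_j}^{[s]}$, matching the left-hand side of \eqref{eq:IMAC_TIN_region_3}. The bulk of the technical effort will therefore lie in choosing the weights explicitly, carrying out the per-cell entropy computations in parallel with \cite{Geng2015} while accommodating the multi-user-per-cell structure, and checking that the two TIN-optimality conditions suffice to drive the outer bound down to $\mathcal{P}_{\bm{\mathrm{id}}}$ in the regime of interest.
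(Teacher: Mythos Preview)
Your high-level plan matches the paper: achievability via Theorem~\ref{theorem:polyhedrality} (after checking that \eqref{eq:TIN_condition_2}--\eqref{eq:TIN_condition_1} imply \eqref{eq:Poly_condition_2}--\eqref{eq:Poly_condition_1}, exactly as you outline), cut-set bounds for the single-cell inequalities, and a cyclic genie argument for the multi-cell inequalities. However, your description of the genie construction is where the proposal diverges from what actually works, and the gap is substantive.

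You propose a genie $G_{i_j}^n = \sum_{s=1}^{l_{i_j}} g_{i_j,i_{j-1}}^{[s]} X_{i_j}^{[s]\,n} + \tilde Z_{i_j}^n$ with per-user weights chosen to ``emulate the interference'' caused at Rx-$i_{j-1}$, i.e.\ essentially $g^{[s]} \propto h_{i_j i_{j-1}}^{[s]}$. With that choice the telescoping $h(G_{i_j}^n)-h(U_{i_j}^n)$ is indeed zero, but the remaining term $h(Y_{i_j}^n \mid G_{i_j}^n)$ becomes intractable: $Y_{i_j}$ and $G_{i_j}$ involve the same inputs with \emph{different} (direct vs.\ cross) coefficients, so the conditional variance does not collapse to a single clean ratio, and you will not recover the target $\alpha_{i_j i_j}^{[l_{i_j}]}-\alpha_{i_j i_{j-1}}^{[l_{i_j}]}$ without further structure. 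The paper resolves this by doing something rather different: (i) it first partitions $\langle l_{i_j}\rangle$ into $\langle l_{i_j}\rangle' \cup \langle l_{i_j}\rangle''$ according to whether $\alpha_{i_j i_j}^{[l_{i_j}]}-\alpha_{i_j i_{j-1}}^{[l_{i_j}]} \geq \alpha_{i_j i_j}^{[s]}$, and \emph{removes} the cross-links from users in $\langle l_{i_j}\rangle''$; (ii) the genie includes only users in $\langle l_{i_j}\rangle'$ and uses a \emph{single scalar} times the \emph{direct-link} gains, $S_{i_j}=g_{i_j}\sum_{s\in\langle l_{i_j}\rangle'} h_{i_j i_j}^{[s]}\tilde X_{i_j}^{[s]}+Z_{i_{j-1}}$ with $g_{i_j}=h_{i_j i_{j-1}}^{[l_{i_j}]}/h_{i_j i_j}^{[l_{i_j}]}$. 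This makes $h(Y_{i_j}^n\mid S_{i_j}^n)$ tractable (the conditional variance factors cleanly), but now $S_{i_j}\neq U_{i_j}$, and the telescoping step $h(S_{i_j}^n)-h(U_{i_j}^n)\leq n\cdot O(1)$ requires a separate entropy-difference lemma (Lemma~\ref{lemma:entropy_diff}) whose hypothesis is precisely what condition \eqref{eq:TIN_condition_2} delivers through the partition (Lemma~\ref{lemma:user_partition}).

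In short, your intuition that \eqref{eq:TIN_condition_2} ``ensures the MAC-type ordering is respected'' is too vague: its actual role is to guarantee the entropy-difference hypothesis $|g_{i_j}|^2 P^{[s]}|h_{i_j i_j}^{[s]}|^2 \leq P^{[s]}|h_{i_j i_{j-1}}^{[s]}|^2 / (P^{[l']}|h_{i_j i_{j-1}}^{[l']}|^2)$ on the subset $\langle l_{i_j}\rangle'$, which is what closes the bound. You should redesign the genie along these lines and anticipate the need for Lemmas~\ref{lemma:entropy_diff} and~\ref{lemma:user_partition}.
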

In the proof of Theorem \ref{theorem:TIN_optimality}, we derive an outer bound for the capacity region $\mathcal{C}$
under the assumption that the conditions in \eqref{eq:TIN_condition_2} and \eqref{eq:TIN_condition_1} hold (see Theorem \ref{theorem:outer_bound} in
Section \ref{sec:TIN-Optimality}).
It turns out that the corresponding GDoF region outer bound, obtained from the capacity region outer bound,
coincides with the polyhedral TIN-achievable region $\mathcal{P}_{\bm{\mathrm{id}}}$
when \eqref{eq:TIN_condition_2} and \eqref{eq:TIN_condition_1}  hold, from which optimality is established.
The full details of the proof are given in Section \ref{sec:TIN-Optimality}.

We turn our attention now to understanding the TIN-optimality conditions in Theorem \ref{theorem:TIN_optimality}.
It can be seen that the condition \eqref{eq:TIN_condition_2} is equivalently expressed by the following inequalities:
\begin{equation}
\label{eq:TIN_condition_2_2}
\begin{aligned}
& \alpha_{ii}^{[l_{i}]} - \alpha_{ij}^{[l_{i}]}  \geq \alpha_{ii}^{[l_{i}']}  \ \ \text{or}  \\
& \alpha_{ii}^{[l_{i}]} - \alpha_{ij}^{[l_{i}]}  \geq \alpha_{ii}^{[l_{i}']} - \alpha_{ij}^{[l_{i}']}  + \alpha_{ij}^{[l_{i}]}
\end{aligned}
 \ \ , \ \forall i,j \in \langle K \rangle, \; i \neq j, \; l_{i}',l_{i} \in \langle L_{i} \rangle, \; l_{i}' < l_{i}.
\end{equation}
The TIN-optimality condition in \eqref{eq:TIN_condition_2} is reminiscent of the
TIN-convexity condition in \eqref{eq:Poly_condition_2} in the sense that
it provisions the power level gains of stronger users against weaker users in each MAC.
On the other hand, the IC-type TIN-optimality condition in \eqref{eq:TIN_condition_1} is the condition identified by
Geng \emph{et al.} in \cite{Geng2015}, applied to all possible $K$-user IC subnetworks of the IMAC.
By comparing \eqref{eq:Poly_condition_2} and \eqref{eq:TIN_condition_2}
(for instance through \eqref{eq:Poly_condition_2_2} and \eqref{eq:TIN_condition_2_2} respectively), we
can see that the latter is stricter than the former.
Moreover, we know from \cite[Rem. 4]{Yi2016} that \eqref{eq:TIN_condition_1} is a stricter version of \eqref{eq:Poly_condition_1}.
These observations lead to the following remark.
\begin{remark}
\label{remark:TIN_poly_includes_TIN_opt}
The TIN-convexity conditions in Theorem \ref{theorem:polyhedrality} are a
relaxed version of the TIN-optimality conditions in Theorem \ref{theorem:TIN_optimality}. Therefore, if the TIN-optimality conditions
\eqref{eq:TIN_condition_2} and \eqref{eq:TIN_condition_1} hold then the TIN-convexity conditions \eqref{eq:Poly_condition_2} and
\eqref{eq:Poly_condition_1} are automatically satisfied.
\hfill $\lozenge$
\end{remark}
\begin{figure}
\centering
\includegraphics[width = 0.6\textwidth]{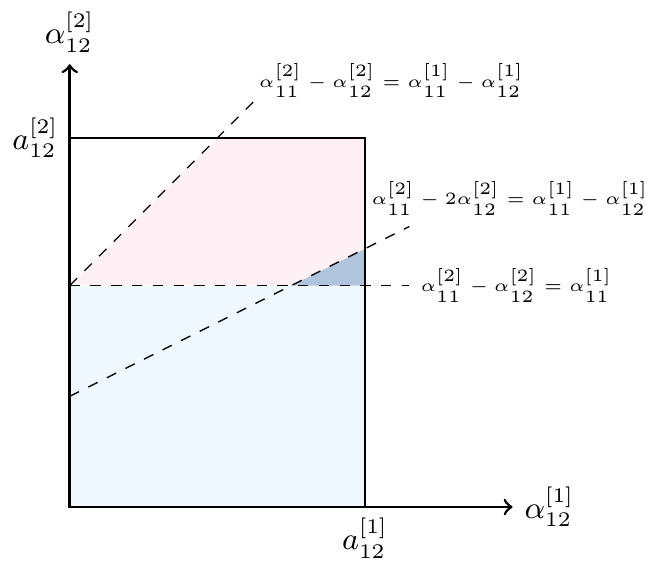}
\caption{\small
TIN-optimality and TIN-convexity regimes for the 2-cell, 3-user network of Fig. \ref{fig:GDoF_regions}
in terms of cross link strengths $\big(\alpha_{12}^{[1]},\alpha_{12}^{[2]}\big)$, as explained in Example \ref{example:3_user_TIN}.
The constants $a_{12}^{[1]}$ and $a_{12}^{[2]}$ are given by
$a_{12}^{[l]} =  \min\big\{\alpha_{22}^{[1]},\alpha_{11}^{[l]}\big\} - \alpha_{21}^{[1]}$, $l \in \{1,2\}$.
Different regimes are highlighted as follows: $\mathcal{A}_{\mathrm{o}}'$ in light blue, $\mathcal{A}_{\mathrm{o}}''\setminus\mathcal{A}_{\mathrm{o}}'$ in dark blue,
$\mathcal{A}_{\mathrm{p}}\setminus\mathcal{A}_{\mathrm{o}}$ in light red and $\mathcal{A}$ is the entire box.}
\label{fig:TIN_conditions}
\end{figure}
To gain more insights, we revisit our running example in the light of the newly established conditions.
This is followed by further remarks and observations.
\begin{example}
\label{example:3_user_TIN}
For the 2-cell, 3-user network in previous examples, the TIN-optimality condition in \eqref{eq:TIN_condition_2}
is given by the inequalities:
\begin{subequations}
\label{eq:TIN_condition_3_user_example_2}
\begin{align}
\label{eq:TIN_condition_3_user_example_2_a}
\alpha_{11}^{[2]} - \alpha_{11}^{[1]}  & \geq \alpha_{12}^{[2]} \ \text{or} \\
\label{eq:TIN_condition_3_user_example_2_b}
\alpha_{11}^{[2]} - \alpha_{11}^{[1]}  & \geq 2\alpha_{12}^{[2]}  - \alpha_{12}^{[1]}.
\end{align}
\end{subequations}
Moreover, it can be easily checked that the IC-type TIN-optimality condition in \eqref{eq:TIN_condition_1} for this network is equivalent
to the IC-type TIN-convexity condition in \eqref{eq:TIN_condition_3_user_example_1_1}--\eqref{eq:TIN_condition_3_user_example_1_4}
of Example \ref{example:3_polyhedrality}.

Next, we look at the regimes of channel strength levels described by the conditions in Theorem \ref{theorem:TIN_optimality} and
Theorem \ref{theorem:polyhedrality}.
To facilitate this, we fix all direct
link strength levels $ \alpha_{11}^{[1]} $,  $\alpha_{11}^{[2]}$ and
$\alpha_{22}^{[1]} $ and the cross link strength level $\alpha_{21}^{[1]}$
(i.e. interference caused to cell $1$), such that
$ \alpha_{11}^{[1]} $,  $\alpha_{11}^{[2]} , \alpha_{22}^{[1]} > \alpha_{21}^{[1]}$ is satisfied.
We consider the influence of varying the cross link strengths
$\alpha_{12}^{[1]}$ and $\alpha_{12}^{[2]}$ (i.e. interference caused to cell $2$),
while assuming that the IC-type TIN-optimality conditions in \eqref{eq:TIN_condition_3_user_example_1_1}--\eqref{eq:TIN_condition_3_user_example_1_4} hold.
It can be seen that \eqref{eq:TIN_condition_3_user_example_1_1}--\eqref{eq:TIN_condition_3_user_example_1_4}
confine the set of allowed strengths $\big( \alpha_{12}^{[1]},\alpha_{12}^{[2]} \big) \in \mathbb{R}^{2}_{+}$ to  the box
given by
\begin{align}
 0 \leq \alpha_{12}^{[1]} \leq \min \big\{ \alpha_{22}^{[1]}, \alpha_{11}^{[1]}   \big\}
 - \alpha_{21}^{[1]} \\
  0 \leq \alpha_{12}^{[2]} \leq \min \big\{ \alpha_{22}^{[1]}, \alpha_{11}^{[2]}   \big\}
 - \alpha_{21}^{[1]}
\end{align}
which we denote by $\mathcal{A}$ (see Fig. \ref{fig:TIN_conditions}).
We further define the following sub-regimes of $\mathcal{A}$:
\begin{itemize}
\item $\mathcal{A}_{\mathrm{o}}'$ and $\mathcal{A}_{\mathrm{o}}''$ are given by
the intersection of $\mathcal{A}$ with
\eqref{eq:TIN_condition_3_user_example_2_a} and \eqref{eq:TIN_condition_3_user_example_2_b} respectively.
\item $\mathcal{A}_{\mathrm{p}}$  is given by the intersection of $\mathcal{A}$ with \eqref{eq:Poly_condition_2_3_user_example}.
\end{itemize}
The above sub-regimes are all illustrated in Fig. \ref{fig:TIN_conditions}.
It is readily seen that $\mathcal{A}_{\mathrm{o}} = \mathcal{A}_{\mathrm{o}}' \cup \mathcal{A}_{\mathrm{o}}''$ is the TIN-optimality regime
identified in Theorem \ref{theorem:TIN_optimality}, while $\mathcal{A}_{\mathrm{p}}$ is the TIN-convexity regime identified in
Theorem \ref{theorem:polyhedrality}.
Furthermore, it can be easily verified that the instances of the 2-cell, 3-user network in Fig. \ref{fig:GDoF_regions}(a), (b) and (c)
are in the regimes $\mathcal{A}_{\mathrm{o}}$, $\mathcal{A}_{\mathrm{p}} \setminus \mathcal{A}_{\mathrm{o}}$
and $\mathcal{A} \setminus \mathcal{A}_{\mathrm{p}}$ respectively.
\hfill $\lozenge$
\end{example}
Beyond the 2-cell, 3-user network considered above, to gain further insights into
the broadness of the TIN-convexity and TIN-optimality regimes in cellular settings with more cells and users, we resort to
numerical simulations. These results are presented in Appendix \ref{appendix:numerical_evaluations}.

Next, in the light of Example \ref{example:3_user_TIN}, we explore the relationship between the TIN-optimality regime in Theorem \ref{theorem:TIN_optimality} and the regime identified in \cite{Gherekhloo2016} for the 2-cell, 3-user network (i.e. PIMAC).
\begin{remark}
To make the connection between Example \ref{example:3_user_TIN} and the results in \cite{Gherekhloo2016} more apparent,
we express the regime $\mathcal{A}$ in terms of the notation and sub-regimes in \cite{Gherekhloo2016}.
In particular, $\mathcal{A}$ here corresponds to the union of sub-regimes (2A), (2B), (2C), (3C) and
 $(\alpha_{d3} - \alpha_{c3} = \alpha_{d1} - \alpha_{c1})$ in \cite{Gherekhloo2016}, while imposing an additional
  order constraint of $\alpha_{d3} \geq \alpha_{d1}$ (see  \cite[Fig. 8]{Gherekhloo2016}).
It follows that the TIN-optimality regime here, i.e. $\mathcal{A}_{\mathrm{o}}$, corresponds to the union of (2A), (2B) and part of (2C).

Through a direct comparison, it is evident that we arrive at a smaller
TIN-optimality regime compared to the one in \cite{Gherekhloo2016}.
This is not surprising, since we consider the entire GDoF region as opposed to only the sum-GDoF
considered in \cite{Gherekhloo2016}.
From the more restrictive GDoF region perspective,
the TIN-optimal regime specified here requires each transmitter-receiver pair to
satisfy the IC-type TIN optimality conditions, i.e.
\eqref{eq:TIN_condition_3_user_example_1_1}--\eqref{eq:TIN_condition_3_user_example_1_4},
known to be necessary for the 2-user IC and conjectured to be necessary for the $K$-user IC\footnote{Except for a set of channel
gain values of measure zero.} \cite{Geng2015}.
On the other hand, the sum-GDoF TIN-optimal regime in
\cite{Gherekhloo2016} allows for some of the  IC-type TIN conditions  in
\eqref{eq:TIN_condition_3_user_example_1_1}--\eqref{eq:TIN_condition_3_user_example_1_4}
to be violated.
For example, in parts of sub-regime (1B), the weaker MAC user, i.e. Tx-$(1,1)$,
may be causing significant interference to Rx-$2$ such that $ \alpha_{12}^{[1]} > \alpha_{22}^{[1]}  - \alpha_{21}^{[1]}$
(i.e. $ \alpha_{c3} > \alpha_{d2}  - \alpha_{c2}$ ), yet TIN is still sum-GDoF optimal.
In this scenario, which is not in $\mathcal{A}_{\mathrm{o}}$ or $\mathcal{A}$,
the optimal sum-GDoF is attained by switching off Tx-$(1,1)$, hence operating the network as a  TIN-optimal
 2-user IC. This luxury of excluding \emph{bad} transmitters cannot be afforded when
considering the entire GDoF region.
\hfill $\lozenge$
\end{remark}
\begin{remark}
\label{remark:TIN_and_IA}
For instances of the 2-cell, 3-user network that fall within the regime $\mathcal{A}$ identified in
Example \ref{example:3_user_TIN},
the sum-GDoF achieved through the proposed TIN scheme is bounded above as
\begin{equation}
\label{eq:sum_GDoF_upperbound_TIN_2_cell}
d_{1}^{[1]} + d_{1}^{[2]} + d_{2}^{[1]} \leq \max\{\alpha_{11}^{[1]} - \alpha_{12}^{[1]},\alpha_{11}^{[2]} - \alpha_{12}^{[2]} \}
+ \alpha_{22}^{[1]} - \alpha_{21}^{[1]}, \; \forall \big(d_{1}^{[1]},d_{1}^{[2]} ,d_{2}^{[1]}\big) \in \mathcal{P}^{\star}.
\end{equation}
This holds since \eqref{eq:TIN_condition_3_user_example_1_1}--\eqref{eq:TIN_condition_3_user_example_1_4}
and $\mathcal{P}^{\star} = \mathcal{P}_{\bm{\mathrm{id}}} \cup \mathcal{P}_{\overline{\bm{\mathrm{id}}}}$  hold throughout
$\mathcal{A}$.
Interestingly, it has been demonstrated by Gherekhloo \emph{et al.} in \cite{Gherekhloo2016} that for
the sub-regime $\mathcal{A} \setminus \mathcal{A}_{\mathrm{o}}$,
the TIN-achievable sum-GDoF upper bound in  \eqref{eq:sum_GDoF_upperbound_TIN_2_cell} can be strictly surpassed, almost surely,
using schemes that employ interference alignment with common and private signalling\footnote{In the notation and sub-regimes
of \cite{Gherekhloo2016},  $\mathcal{A} \setminus \mathcal{A}_{\mathrm{o}}$ defined
here corresponds to the intersection of $\alpha_{d3} \geq \alpha_{d1}$ with the union of sub-regimes
(3C) and  ($\alpha_{d3} - \alpha_{c3} = \alpha_{d1} - \alpha_{c1}$). It is noteworthy that for
 $\alpha_{d3} - \alpha_{c3} = \alpha_{d1} - \alpha_{c1}$
 (i.e. $\alpha_{11}^{[2]} - \alpha_{12}^{[2]} = \alpha_{11}^{[1]} - \alpha_{12}^{[1]}$ here), the
strict superiority of interference alignment holds except
for a subset of channel coefficients of measure zero.
For details, readers are referred to \cite[Corollaries 6 and 7]{Gherekhloo2016} and their proofs.
}.
Since $\mathcal{A}_{\mathrm{p}} \setminus \mathcal{A}_{\mathrm{o}}$ is contained both in the TIN-convexity regime and in
$\mathcal{A} \setminus \mathcal{A}_{\mathrm{o}}$, the above observation confirms that
the convexity of the TIN region $\mathcal{P}^{\star}$
does not necessarily imply its optimality.
\hfill $\lozenge$
\end{remark}
Finally, we conclude this section with the two  further general remarks.
\begin{remark}
\label{remark:sufficiency_of_conditions}
As pointed out in \cite[Rem. 1]{Geng2016}, whether we look through the lens of the GDoF or the exact capacity, existing TIN-optimality results are
\emph{``primarily in the form of sufficient conditions''} and the necessity of such conditions
\emph{``remains undetermined in most cases''}.
The TIN-optimality result in Theorem \ref{theorem:TIN_optimality} is no exception to most existing results in that regards.
Similarly, the TIN-convexity conditions in Theorem \ref{theorem:polyhedrality} are also sufficient and there is no claim of necessity.
\hfill $\lozenge$
\end{remark}
\begin{remark}
\label{remark:constant_gap}
Assuming that the TIN-optimality conditions in Theorem \ref{theorem:TIN_optimality} hold, then it is not difficult to show that
the TIN scheme proposed in this paper achieves the whole capacity region of the IMAC to within a constant gap of
$\Delta + \log( | \mathcal{K}|)$ bits at any finite SNR, where $\Delta > 0$ is fixed.
This can be shown using the capacity outer bound obtained in
Theorem \ref{theorem:outer_bound} of Section \ref{sec:TIN-Optimality}
in conjunction with the rate bounding techniques in \cite{Geng2015, Geng2015a}.
Moreover, the constant $\Delta$  can be explicitly calculated, e.g. see  \cite[Th. 4]{Joudeh2018}
where $\Delta$ is characterized for an IMAC with $K$ cells and $L_{1}, \ldots, L_{K} = 2$ users per cell.
This calculation can be easily extended to arbitrary numbers of users in different cells.
\hfill $\lozenge$
\end{remark}
\section{Proofs of Achievability}
In this section, we provide proofs for the achievability results, i.e. Theorem \ref{theorem:TIN_region}
and Theorem \ref{theorem:TIN_region_general}.
\subsection{Proof of Theorem \ref{theorem:TIN_region}}
\label{subsec:TIN_region_proof}
We prove Theorem \ref{theorem:TIN_region} by constructing a potential graph \cite{Geng2015,Geng2016} for the considered IMAC
and invoking the potential theorem \cite{Schrijver2002}.
To avoid cumbersome notation, we work with $\mathcal{P}_{\bm{\mathrm{id}}}$.
All derivations extend to $\mathcal{P}_{\bm{\pi}}$ by replacing each superscript $l_{k}$ with the corresponding
$\pi_{k}(l_{k})$.

The first step towards applying the potential theorem is to derive the conditions of feasible power allocation.
To this end, we rewrite \eqref{eq:polyhedral_region_3} as
\begin{equation}
\label{eq:GDoF_polyhedral_2}
d_{k}^{[l_{k}]} \leq  \min\biggl\{ r_{k}^{[l_{k}]} + \alpha_{kk}^{[l_{k}]},
\min_{j \neq k} \min_{l_{j}} \bigl\{
r_{k}^{[l_{k}]} - r_{j}^{[l_{j}]} + \alpha_{kk}^{[l_{k}]} - \alpha_{jk}^{[l_{j}]}
\bigr\} ,
\min_{l_{k}' < l_{k}} \bigl\{ r_{k}^{[l_{k}]} - r_{k}^{[l_{k}']} + \alpha_{kk}^{[l_{k}]} - \alpha_{kk}^{[l_{k}']} \bigl\}   \biggr\}
\end{equation}
where the three terms inside the outmost minimization incorporate no
interference, inter-cell interference and intra-cell interference, respectively.
From  \eqref{eq:GDoF_polyhedral_2}, it follows that the polyhedral TIN region $\mathcal{P}_{\bm{\mathrm{id}}}$, described by the inequalities in \eqref{eq:polyhedral_region_1}--\eqref{eq:polyhedral_region_3} while setting $\bm{\pi} = \bm{\mathrm{id}}$, is equivalently described by the following inequalities
\begin{align}
\label{eq:polyhedral_region_21}
r_{k}^{[l_{k}]} & \leq 0,  \  \forall (l_{k},k) \in \mathcal{K}  \\
\label{eq:polyhedral_region_22}
d_{k}^{[l_{k}]} & \geq 0,  \  \forall (l_{k},k) \in \mathcal{K}  \\
\label{eq:polyhedral_region_23}
d_{k}^{[l_{k}]} & \leq \alpha_{kk}^{[l_{k}]} + r_{k}^{[l_{k}]},
\ \forall (l_{k},k) \in \mathcal{K}  \\
\label{eq:polyhedral_region_24}
d_{k}^{[l_{k}]} & \leq  r_{k}^{[l_{k}]}  -  r_{j}^{[l_{j}]} + \alpha_{kk}^{[l_{k}]} -  \alpha_{jk}^{[l_{j}]} , \;
\forall (l_{k},k), (l_{j},j) \in \mathcal{K}, \ j \neq k \\
\label{eq:polyhedral_region_25}
d_{k}^{[l_{k}]} & \leq r_{k}^{[l_{k}]} -   r_{k}^{[l_{k}']} + \alpha_{kk}^{[l_{k}]} - \alpha_{kk}^{[l_{k}']}, \; \forall (l_{k},k) \in \mathcal{K},\; l_{k}' \in \langle L_{k} \rangle, \; l_{k}' < l_{k}.
\end{align}
After rearranging, the inequalities in \eqref{eq:polyhedral_region_21}--\eqref{eq:polyhedral_region_25} are rewritten as
\begin{align}
\label{eq:polyhedral_region_31}
d_{k}^{[l_{k}]} & \geq 0,  \  \forall (l_{k},k) \in \mathcal{K}  \\
\label{eq:polyhedral_region_32}
r_{k}^{[l_{k}]} & \leq 0,  \  \forall (l_{k},k) \in \mathcal{K}  \\
\label{eq:polyhedral_region_33}
-r_{k}^{[l_{k}]} & \leq \alpha_{kk}^{[l_{k}]} - d_{k}^{[l_{k}]},
\ \forall (l_{k},k) \in \mathcal{K}  \\
\label{eq:polyhedral_region_34}
r_{j}^{[l_{j}]} - r_{k}^{[l_{k}]}  & \leq \alpha_{kk}^{[l_{k}]} -  \alpha_{jk}^{[l_{j}]} - d_{k}^{[l_{k}]}, \;
\forall (l_{k},k), (l_{j},j) \in \mathcal{K}, \ j \neq k \\
\label{eq:polyhedral_region_35}
r_{k}^{[l_{k}']}   - r_{k}^{[l_{k}]} & \leq \alpha_{kk}^{[l_{k}]} - \alpha_{kk}^{[l_{k}']} - d_{k}^{[l_{k}]},
\; \forall (l_{k},k) \in \mathcal{K},\; l_{k}' \in \langle L_{k} \rangle, \; l_{k}' < l_{k}.
\end{align}
Hence, a GDoF tuple $\mathbf{d} \in \mathbb{R}_{+}^{|\mathcal{K}|}$ is in the polyhedral TIN region
$\mathcal{P}_{\bm{\mathrm{id}}}$ if and only if there exists a power allocation tuple $\mathbf{r}\in \mathbb{R}^{|\mathcal{K}|}$ such that \eqref{eq:polyhedral_region_32}--\eqref{eq:polyhedral_region_35} hold.
\subsubsection{Potential Graph and Potential Theorem}
We construct a directed graph (digraph) $\mathcal{G}_{\mathrm{p}} = (\mathcal{V}, \mathcal{E} )$
with vertices and directed edges given by
\begin{align}
\label{eq:potential_graph_V}
\mathcal{V} & = \left\{ v_{0}^{[0]} \right\} \cup \left\{ v_{k}^{[l_{k}]} : (l_{k},k) \in \mathcal{K} \right\} \\
\label{eq:potential_graph_E}
\mathcal{E} & = \mathcal{E}_{1}' \cup \mathcal{E}_{1}'' \cup  \mathcal{E}_{2} \cup \mathcal{E}_{3}' \cup \mathcal{E}_{3}'' \\
\label{eq:potential_graph_E_1}
\mathcal{E}_{1}' & = \left\{\big(v_{k}^{[l_{k}']} , v_{k}^{[l_{k}]} \big) :
k \in \langle K \rangle, \; l_{k}',l_{k} \in \langle L_{k} \rangle, \; l_{k}' < l_{k}  \right\} \\
\mathcal{E}_{1}'' & = \left\{\big(v_{k}^{[l_{k}]} , v_{k}^{[l_{k}']} \big) :
k \in \langle K \rangle, \; l_{k}',l_{k} \in \langle L_{k} \rangle, \; l_{k}' < l_{k}   \right\} \\
\mathcal{E}_{2} & = \left\{\big(v_{k}^{[l_{k}]} , v_{j}^{[l_{j}]} \big) : (l_{k},k), (l_{j},j) \in \mathcal{K}, \ k \neq j \right\} \\
\mathcal{E}_{3}' & = \left\{ \big(v_{0}^{[0]} ,v_{k}^{[l_{k}]} \big) : (l_{k},k) \in \mathcal{K} \right\}\\
\label{eq:potential_graph_E_last}
\mathcal{E}_{3}'' & = \left\{\big(v_{k}^{[l_{k}]} , v_{0}^{[0]} \big) : (l_{k},k) \in \mathcal{K} \right\}.
\end{align}
The above digraph, known as the potential graph, consists of $|\mathcal{V}| = 1 + |\mathcal{K}|$ vertices: a ground node $v_{0}^{[0]}$ and one node $v_{i}^{[l_{i}]}$ for each user (or message) indexed by $(l_{i},i) \in \mathcal{K}$.
An example is given in Fig. \ref{fig:potential_graph}.
Each pair of distinct vertices is connected by a pair of edges, and different edges are assigned different lengths, capturing desired and interfering signal power levels as we see next.

\begin{figure}[t]
\centering
\includegraphics[width = 0.9\textwidth,trim={0.5cm 11.5cm 8cm 0.5cm},clip]{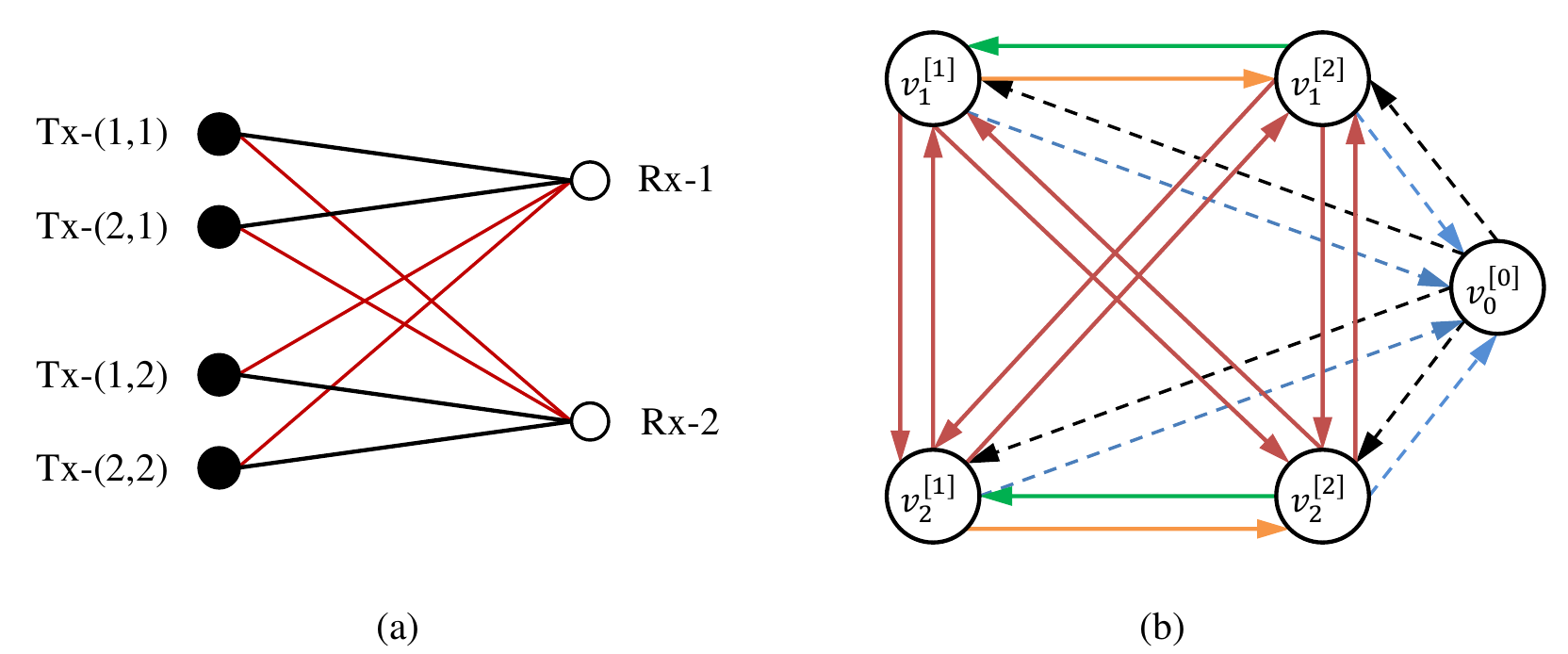}
\caption{\small
A $2$-cell IMAC with $2$ users in each cell (a) and its potential graph (b). The potential graph
consists of $5$ vertices, one for each user and a ground node, in addition to
the sets of edges given by: $\mathcal{E}_{1}'$ in orange, $\mathcal{E}_{1}''$ in green,  $\mathcal{E}_{2}$ in red,
$\mathcal{E}_{3}'$ in dashed black and  $\mathcal{E}_{3}''$  in dashed blue.}
\label{fig:potential_graph}
\end{figure}
We define the functions $d:\mathcal{E} \rightarrow \mathbb{R}_{+}$, $\alpha:\mathcal{E} \rightarrow \mathbb{R}_{+}$ and
$w:\mathcal{E} \rightarrow \mathbb{R}_{+}$ such that for any edge $\big(v_{i}^{[l_{i}]} , v_{j}^{[l_{j}]} \big) \in \mathcal{E}$, these functions take the following values
\begin{align}
\label{eq:edges_GDoF}
d\big(v_{i}^{[l_{i}]} , v_{j}^{[l_{j}]} \big)  & = d_{i}^{[l_{i}]} \\
\label{eq:edges_alpha}
\alpha \big(v_{i}^{[l_{i}]} , v_{j}^{[l_{j}]} \big)  & = \alpha_{ii}^{[l_{i}]} \\
\label{eq:edges_w}
w \big(v_{i}^{[l_{i}]} , v_{j}^{[l_{j}]} \big)  & = \alpha_{ji}^{[l_{j}]}
\mathbbm{1}_{\overline{\mathcal{E}_{1}'}}\big( ( v_{i}^{[l_{i}]} , v_{j}^{[l_{j}]} ) \big)
\end{align}
where $d_{0}^{[0]},\alpha_{00}^{[0]} , \alpha_{j0}^{[l_{j}]}, \alpha_{0j}^{[0]} = 0 $,
$\forall (l_{j},j) \in \mathcal{K}$,
while $\mathbbm{1}_{\overline{\mathcal{E}_{1}'}}\big((v_{i}^{[l_{i}]} , v_{j}^{[l_{j}]})\big) = 0$,  $ \forall i = j$ and $l_{i} < l_{j}$,
 and $1$ otherwise.
The length function is defined as $l:\mathcal{E} \rightarrow \mathbb{R}$,
such that the lengths assigned to different edges of $\mathcal{G}_{\mathrm{p}}$ are given by
\begin{equation}
l(e) = \alpha(e) - w(e) - d(e), \ e \in \mathcal{E}.
\end{equation}
Such lengths are explicitly expressed, for each subset of edges in \eqref{eq:potential_graph_E_1}--\eqref{eq:potential_graph_E_last}, as
\begin{align}
\label{eq:lengths_1}
l\big(v_{k}^{[l_{k}']} , v_{k}^{[l_{k}]} \big) & = \alpha_{kk}^{[l_{k}']} - d_{k}^{[l_{k}']}, \; \forall
k \in \langle K \rangle, \; l_{k}',l_{k} \in \langle L_{k} \rangle, \; l_{k}' < l_{k}  \\
l\big(v_{k}^{[l_{k}]} , v_{k}^{[l_{k}']} \big) & = \alpha_{kk}^{[l_{k}]} - \alpha_{kk}^{[l_{k}']} - d_{k}^{[l_{k}]}, \; \forall
k \in \langle K \rangle, \; l_{k}',l_{k} \in \langle L_{k} \rangle, \; l_{k}' < l_{k}  \\
l\big(v_{k}^{[l_{k}]} , v_{j}^{[l_{j}]} \big) & =  \alpha_{kk}^{[l_{k}]} -  \alpha_{jk}^{[l_{j}]} - d_{k}^{[l_{k}]}, \
\forall (l_{k},k), (l_{j},j) \in \mathcal{K}, \ k \neq j \\
l\big(v_{k}^{[l_{k}]} , v_{0}^{[0]} \big) & =  \alpha_{kk}^{[l_{k}]}  - d_{k}^{[l_{k}]}, \
\forall (l_{k},k) \in \mathcal{K} \\
\label{eq:lengths_5}
l\big( v_{0}^{[0]} , v_{k}^{[l_{k}]} \big) & =  0, \
\forall (l_{k},k) \in \mathcal{K}.
\end{align}
From the above assignment of lengths and the potential theorem we obtain the following result.
\begin{lemma}
\label{lemma:non_negative_circuit_length}
The GDoF tuple $\mathbf{d} \in \mathbb{R}_{+}^{|\mathcal{K}|}$ is in
the polyhedral region $\mathcal{P}_{\bm{\mathrm{id}}}$ if and only if the length of
each directed circuit in the potential graph $\mathcal{G}_{\mathrm{p}}$ is non-negative.
\end{lemma}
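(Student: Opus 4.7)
\textbf{Proof plan for Lemma \ref{lemma:non_negative_circuit_length}.}

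The plan is to reduce the lemma to a direct application of the potential theorem \cite{Schrijver2002}, which states that for a digraph with edge lengths $l(\cdot)$, a potential function $r:\mathcal{V}\to\mathbb{R}$ satisfying $r(v)-r(u)\leq l(u,v)$ for every edge $(u,v)$ exists if and only if every directed circuit has non-negative total length. The first step is to fix a given GDoF tuple $\mathbf{d}\in\mathbb{R}_{+}^{|\mathcal{K}|}$ and recall, from the rearrangement \eqref{eq:polyhedral_region_31}--\eqref{eq:polyhedral_region_35}, that $\mathbf{d}\in\mathcal{P}_{\bm{\mathrm{id}}}$ is equivalent to the existence of a real tuple $\mathbf{r}=\big(r_{k}^{[l_{k}]}\big)_{(l_{k},k)\in\mathcal{K}}$ satisfying the system \eqref{eq:polyhedral_region_32}--\eqref{eq:polyhedral_region_35}. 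I will augment this system by setting $r_{0}^{[0]}\triangleq 0$ for the ground node $v_{0}^{[0]}\in\mathcal{V}$, so that every inequality can be expressed in the difference form required by the potential theorem.

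The second step is to establish a one-to-one correspondence between the inequalities and the edges of $\mathcal{G}_{\mathrm{p}}$, in such a way that each inequality reads $r(\text{head})-r(\text{tail})\leq l(\text{edge})$ with the length given by \eqref{eq:lengths_1}--\eqref{eq:lengths_5}. Concretely, I would check the four natural matches: \eqref{eq:polyhedral_region_32} ($r_{k}^{[l_{k}]}\leq 0$) corresponds to the edge $(v_{0}^{[0]},v_{k}^{[l_{k}]})\in\mathcal{E}_{3}'$ of length $0$; \eqref{eq:polyhedral_region_33} ($-r_{k}^{[l_{k}]}\leq \alpha_{kk}^{[l_{k}]}-d_{k}^{[l_{k}]}$) corresponds to $(v_{k}^{[l_{k}]},v_{0}^{[0]})\in\mathcal{E}_{3}''$; \eqref{eq:polyhedral_region_34} corresponds to the inter-cell edge $(v_{k}^{[l_{k}]},v_{j}^{[l_{j}]})\in\mathcal{E}_{2}$; and \eqref{eq:polyhedral_region_35} corresponds to the intra-cell edge $(v_{k}^{[l_{k}]},v_{k}^{[l_{k}']})\in\mathcal{E}_{1}''$ for $l_{k}'<l_{k}$. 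In each case the length value in \eqref{eq:lengths_1}--\eqref{eq:lengths_5} is exactly the right-hand side of the corresponding inequality.

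The remaining edges, namely $\mathcal{E}_{1}'=\{(v_{k}^{[l_{k}']},v_{k}^{[l_{k}]}):l_{k}'<l_{k}\}$, do not correspond to any of the explicit inequalities \eqref{eq:polyhedral_region_32}--\eqref{eq:polyhedral_region_35}. I would verify that the constraint $r_{k}^{[l_{k}]}-r_{k}^{[l_{k}']}\leq \alpha_{kk}^{[l_{k}']}-d_{k}^{[l_{k}']}$ imposed by each such edge is implied by chaining \eqref{eq:polyhedral_region_32} with \eqref{eq:polyhedral_region_33} (i.e.\ by the two edges in $\mathcal{E}_{3}'\cup\mathcal{E}_{3}''$ passing through $v_{0}^{[0]}$), so that enlarging the system with the $\mathcal{E}_{1}'$ constraints does not alter the set of feasible $\mathbf{r}$. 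Consequently, the existence of $\mathbf{r}$ satisfying \eqref{eq:polyhedral_region_32}--\eqref{eq:polyhedral_region_35} is equivalent to the existence of a potential $r:\mathcal{V}\to\mathbb{R}$ with $r(v_{0}^{[0]})=0$ respecting all edge length constraints of $\mathcal{G}_{\mathrm{p}}$. Since any potential can be shifted so that $r(v_{0}^{[0]})=0$ without changing circuit lengths, the normalization is free, and the potential theorem yields the claimed equivalence with the non-negativity of every directed circuit length.

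The main obstacle I anticipate is of a bookkeeping rather than conceptual nature: carefully checking the head/tail orientation of each edge against the inequality it encodes, and justifying that the extra edges in $\mathcal{E}_{1}'$ (which are included so that later cyclic inequalities in Theorem \ref{theorem:TIN_region} can be read off directly from circuits of $\mathcal{G}_{\mathrm{p}}$) do not shrink the feasible region. Once this matching is in place, the lemma follows immediately from the potential theorem.
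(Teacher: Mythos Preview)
Your proposal is correct and follows essentially the same approach as the paper: match each inequality in \eqref{eq:polyhedral_region_32}--\eqref{eq:polyhedral_region_35} to an edge class of $\mathcal{G}_{\mathrm{p}}$, observe that the extra constraints coming from $\mathcal{E}_{1}'$ are implied by adding \eqref{eq:polyhedral_region_32} and \eqref{eq:polyhedral_region_33}, and then invoke the potential theorem. The paper's proof is identical in structure and in the redundancy argument for $\mathcal{E}_{1}'$.
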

\begin{proof}
Note that the length of a directed circuit is given by the sum of the lengths of
its traversed  edges.
By definition \cite{Schrijver2002}, the function $p: \mathcal{V} \rightarrow \mathbb{R}$ is called a potential if for any pair of vertices
$a,b \in \mathcal{V}$ such that $(a,b) \in \mathcal{E}$, we have $l(a,b) \geq p(b) - p(a)$.
These conditions depend only on the difference between potential function values. Therefore, if there exists a valid potential function,
we may assume without loss of generality that the ground node has zero potential, i.e. $p\big(v_{0}^{[0]}\big) = 0$.
Moreover,  the \emph{potential theorem} (see \cite[Th. 8.2]{Schrijver2002}) states that:
\emph{there exists a potential function for a digraph $\mathcal{G}_{\mathrm{p}}$ if and only if each directed circuit
in $\mathcal{G}_{\mathrm{p}}$ has a non-negative length.}

Now for the digraph $\mathcal{G}_{\mathrm{p}}$,
we set the value of the potential function as $p\big(v_{k}^{[l_{k}]}\big) = r_{k}^{[l_{k}]}$, $(l_{k},k) \in \mathcal{K}$. By definition, the potential
function values should satisfy
\begin{align}
\label{eq:potential_polyhedral_region_1}
r_{k}^{[l_{k}]}  - r_{k}^{[l_{k}']} & \leq
\alpha_{kk}^{[l_{k}']} - d_{k}^{[l_{k}']}, \; \forall
k \in \langle K \rangle, \; l_{k}',l_{k} \in \langle L_{k} \rangle, \; l_{k}' < l_{k}   \\
\label{eq:potential_polyhedral_region_2}
r_{k}^{[l_{k}']}   - r_{k}^{[l_{k}]} & \leq
\alpha_{kk}^{[l_{k}]} - \alpha_{kk}^{[l_{k}']} - d_{k}^{[l_{k}]}, \; \forall
k \in \langle K \rangle, \; l_{k}',l_{k} \in \langle L_{k} \rangle, \; l_{k}' < l_{k}  \\
\label{eq:potential_polyhedral_region_3}
r_{j}^{[l_{j}]} - r_{k}^{[l_{k}]}  & \leq
\alpha_{kk}^{[l_{k}]} -  \alpha_{jk}^{[l_{j}]} - d_{k}^{[l_{k}]}, \
\forall (l_{k},k), (l_{j},j) \in \mathcal{K}, \ j \neq k \\
\label{eq:potential_polyhedral_region_4}
-r_{k}^{[l_{k}]} & \leq
\alpha_{kk}^{[l_{k}]} - d_{k}^{[l_{k}]}, \ \forall (l_{k},k) \in \mathcal{K} \\
\label{eq:potential_polyhedral_region_5}
r_{k}^{[l_{k}]} & \leq 0, \ \forall (l_{k},k) \in \mathcal{K}.
\end{align}
The inequalities in \eqref{eq:potential_polyhedral_region_2}--\eqref{eq:potential_polyhedral_region_5}
are equivalent to the ones in \eqref{eq:polyhedral_region_32}--\eqref{eq:polyhedral_region_35}.
Moreover, the inequality in \eqref{eq:potential_polyhedral_region_1} is redundant as it is obtained by adding the inequalities in
\eqref{eq:polyhedral_region_32} and \eqref{eq:polyhedral_region_33}.
It follows that  \emph{$\mathbf{d} \in \mathbb{R}_{+}^{|\mathcal{K}|}$ is in $\mathcal{P}_{\bm{\mathrm{id}}}$
if and only if there exists a valid potential function for $\mathcal{G}_{\mathrm{p}}$}.
Combining this with the potential theorem stated above, we conclude that the tuple $\mathbf{d} \in \mathbb{R}_{+}^{|\mathcal{K}|}$ is in
$\mathcal{P}_{\bm{\mathrm{id}}}$ if and only if the length of each directed circuit in $\mathcal{G}_{\mathrm{p}}$ is non-negative.
\end{proof}

Equipped with Lemma \ref{lemma:non_negative_circuit_length},
it remains to interpret the non-negative length conditions as GDoF inequalities.
In particular, each directed circuit in $\mathcal{G}_{\mathrm{p}}$ is identified by a sequence of vertices
$\big(v_{i_{1}}^{[l_{1}]},\ldots,v_{i_{n}}^{[l_{n}]},v_{i_{n+1}}^{[l_{n+1}]}\big)$, where\footnote{In a slight abuse of notation,
we use $n$ here as the length of directed circuits.
This should not be confused with the number of channel uses $n$ defined in Section
 \ref{subsec:msg_rates_capacity_GDoF} and used later on in the converse.} $\big\{v_{i_{1}}^{[l_{1}]},\ldots,v_{i_{n}}^{[l_{n}]}\big\} \subseteq \mathcal{V}$, $(l_{n+1},i_{n+1}) = (l_{1},i_{1})$ and $n \geq 2$.
Alternatively, we may express a directed circuit in terms of its traversed edges as $(e_{1},\ldots,e_{n})$, where $e_{j} = \big(v_{i_{j}}^{[l_{j}]},v_{i_{j+1}}^{[l_{j+1}]}\big)$, $j \in \langle n \rangle$.
For each such circuit, the non-negative length condition of Lemma  \ref{lemma:non_negative_circuit_length}
 yields a GDoF inequality given by
\begin{equation}
\label{eq:non-negative_length}
\sum_{j=1}^{n}l(e_{j}) \geq 0 \Leftrightarrow
\sum_{j = 1}^{n}d(e_{j}) \leq \sum_{j = 1}^{n}\big[ \alpha(e_{j}) - w(e_{j}) \big].
\end{equation}
Next, we closely examine all directed circuits of $\mathcal{G}_{\mathrm{p}}$ to obtain an explicit
characterization of the GDoF inequalities describing $\mathcal{P}_{\bm{\mathrm{id}}}$ while eliminating circuits which are necessarily redundant.
We often refer to a vertex of the type $v_{k}^{[l_{k}]}$, $(l_{k},k) \in \mathcal{K}$, as a user in what follows.
\subsubsection{From Directed Circuits to GDoF Inequalities: A Simple Example}
\label{subsec:GDoF_example}
We start with the simple example in Fig. \ref{fig:potential_graph} and derive insights which prove useful for addressing the general case.
We categorize directed circuits of $\mathcal{G}_{\mathrm{p}}$ in Fig. \ref{fig:potential_graph} into the following classes:
\begin{itemize}
\item Single-user circuits: such circuits take the simple form of $\big(v_{0}^{[0]},v_{i}^{[l_{i}]},v_{0}^{[0]}  \big)$, $(l_{i},i) \in \{1,2\}^{2}$.
From the non-negative length condition in \eqref{eq:non-negative_length}, each circuit in the class  yields an inequality given by $l\big(v_{0}^{[0]},v_{i}^{[l_{i}]} \big) + l\big(v_{i}^{[l_{i}]},v_{0}^{[0]}  \big) = \alpha_{ii}^{[l_{i}]} - d_{i}^{[l_{i}]} \geq 0$. This is rewritten as
    \begin{equation}
    \label{eq:1_user_ineq}
    d_{i}^{[l_{i}]}  \leq \alpha_{ii}^{[l_{i}]}, \ (l_{i},i) \in \{1,2\}^{2}.
    \end{equation}
    It is easily seen that we obtain $4$ inequalities from this class of circuits.
\item Multi-user circuits traversing $v_{0}^{[0]}$: these are given by $\big(v_{0}^{[0]},v_{i_{1}}^{[l_{1}]},\ldots,v_{i_{n}}^{[l_{n}]},v_{0}^{[0]} \big)$, $(l_{j},i_{j}) \in \{1,2\}^{2}$,
    $j \in \langle n \rangle $ and $n \geq 2$.
    GDoF inequalities obtained from this class of circuits are all redundant.
    This follows by comparing the inequality obtained from $\big(v_{0}^{[0]},v_{i_{1}}^{[l_{1}]},\ldots,v_{i_{n}}^{[l_{n}]},v_{0}^{[0]} \big)$, using the non-negative length condition (as shown above), to the inequality obtained from the corresponding circuits given by $\big(v_{i_{1}}^{[l_{1}]},\ldots,v_{i_{n}}^{[l_{n}]},v_{i_{1}}^{[l_{1}]} \big)$.
    Both inequalities bound the sum-GDoF of the same set of users. However, the latter is tighter since it has an extra (negative) interference term $w\big( v_{i_{n}}^{[l_{n}]},v_{i_{1}}^{[l_{1}]} \big)$ on its right-hand-side compared to $w\big(v_{i_{n}}^{[l_{n}]},v_{0}^{[0]}\big) = 0$ for the former.
    Henceforth, we only consider multi-user circuits that do not traverse $v_{0}^{[0]}$.
\item 2-user circuits (same cell): these circuits take the form $\big(v_{i}^{[1]},v_{i}^{[2]},v_{i}^{[1]}  \big)$, $i \in \{1,2\}$.
 From the non-negative length condition, we obtain 2 inequalities described as
 \begin{equation}
 \label{eq:2_user_ineq_same_cell}
    d_{i}^{[2]} + d_{i}^{[1]}  \leq \alpha_{ii}^{[2]}, \ i \in \{1,2\}.
 \end{equation}
Note that given $i \in \{1,2\}$, the single-user inequality in \eqref{eq:1_user_ineq} with $l_{i} = 2$, i.e. $d_{i}^{[2]}  \leq \alpha_{ii}^{[2]}$,  is implied by \eqref{eq:2_user_ineq_same_cell}, hence making the former redundant.
\item 2-user circuits (different cells): such circuits have the form $\big(v_{1}^{[l_{1}]},v_{2}^{[l_{2}]},v_{1}^{[l_{1}]}  \big)$, $(l_{1},l_{2}) \in \{1,2\}^{2}$.
    From the non-negative length condition, we obtain 4 different inequalities given by
\begin{equation}
 \label{eq:2_user_ineq_diff_cell}
    d_{1}^{[l_{1}]} + d_{2}^{[l_{2}]}  \leq \alpha_{11}^{[l_{1}]} - \alpha_{21}^{[l_{2}]} + \alpha_{22}^{[l_{2}]} - \alpha_{12}^{[l_{1}]}, \ (l_{1},l_{2}) \in \{1,2\}^{2}.
\end{equation}
\item 3-user circuits (2 users from cell 1): these take the form $\big(v_{1}^{[l_{1}^{1}]},v_{1}^{[l_{1}^{2}]},v_{2}^{[l_{2}]},v_{1}^{[l_{1}^{1}]}\big)$,
 $(l_{1}^{1},l_{1}^{2},l_{2}) \in \{1,2\}^{3}$  and $l_{1}^{1} \neq l_{1}^{2}$. We start with the case where $(l_{1}^{1},l_{1}^{2}) = (2,1)$.
From the non-negative length condition, we obtain 2 inequalities (one for each $l_{2}$) given by
\begin{equation}
\label{eq:3_user_ineq_1}
d_{1}^{[1]} + d_{1}^{[2]} + d_{2}^{[l_{2}]}  \leq  \alpha_{11}^{[2]} - \alpha_{21}^{[l_{2}]}  +  \alpha_{22}^{[l_{2}]} - \alpha_{12}^{[2]}, \ l_{2} \in \{1,2\}.
\end{equation}
Note that from the right-hand-side of \eqref{eq:3_user_ineq_1}, users $v_{1}^{[1]}$ and $v_{1}^{[2]}$ appear as a single user with desired signal strength $\alpha_{11}^{[2]}$ and received interference $\alpha_{21}^{[l_{2}]}$.
This is because $v_{1}^{[2]}$ precedes user $v_{1}^{[1]}$ in the cyclic order, from which we have
$l\big(v_{1}^{[2]},v_{1}^{[1]} \big) + l\big(v_{1}^{[1]},v_{2}^{[l_{2}]} \big) = \alpha_{11}^{[2]} - \alpha_{21}^{[l_{2}]}  - d_{1}^{[2]} - d_{1}^{[1]}$.
Moreover, the 2-user inequality in \eqref{eq:2_user_ineq_diff_cell} for $l_{1} = 2$, i.e. $d_{1}^{[2]} + d_{2}^{[l_{2}]}  \leq  \alpha_{11}^{[2]} - \alpha_{21}^{[l_{2}]}  +  \alpha_{22}^{[l_{2}]} - \alpha_{12}^{[2]}$, is implied by \eqref{eq:3_user_ineq_1}, which in turn makes the former redundant.

We move on to the case $(l_{1}^{1},l_{1}^{2}) = (1,2)$, for which we obtain 2 more inequalities given by
\begin{equation}
\label{eq:3_user_ineq_1_2}
d_{1}^{[1]} + d_{1}^{[2]} + d_{2}^{[l_{2}]}  \leq  \alpha_{11}^{[1]} + \alpha_{11}^{[2]} - \alpha_{21}^{[l_{2}]}  +  \alpha_{22}^{[l_{2}]} - \alpha_{12}^{[1]}, \ l_{2} \in \{1,2\}.
\end{equation}
Note that since $v_{1}^{[1]}$ precedes $v_{1}^{[2]}$ for this case, the $2$ users do not appear as a single user as seen from the right-hand side of \eqref{eq:3_user_ineq_1_2}.
In fact, it turns out that for each $l_{2}$, \eqref{eq:3_user_ineq_1_2} is redundant since it is obtained by adding
$d_{1}^{[2]} \leq \alpha_{11}^{[2]}$ and $d_{1}^{[1]} + d_{2}^{[l_{2}]} \leq \alpha_{11}^{[1]} - \alpha_{21}^{[l_{2}]}  +  \alpha_{22}^{[l_{2}]} - \alpha_{12}^{[1]} $, obtained from \eqref{eq:1_user_ineq} and \eqref{eq:2_user_ineq_diff_cell} respectively.
\item 3-user circuits (2 users from cell 2): these take the form
$\big(v_{2}^{[l_{2}^{1}]},v_{2}^{[l_{2}^{2}]},v_{1}^{[l_{1}]}, v_{2}^{[l_{2}^{1}]}\big)$,
 $(l_{2}^{1},l_{2}^{2},l_{1}) \in \{1,2\}^{3}$  and $l_{2}^{1} \neq l_{2}^{2}$. The inequalities are obtained as for the previous class while swapping the cell subscripts. For the case where $(l_{2}^{1},l_{2}^{2}) = (2,1)$, we obtain
\begin{equation}
\label{eq:3_user_ineq_2}
d_{1}^{[l_{1}]} + d_{2}^{[1]} + d_{2}^{[2]}  \leq  \alpha_{11}^{[l_{1}]} - \alpha_{21}^{[2]}  +  \alpha_{22}^{[2]} - \alpha_{12}^{[l_{1}]}, \ l_{1} \in \{1,2\}
\end{equation}
from which we conclude that \eqref{eq:2_user_ineq_diff_cell}, with $l_{2} = 2$, is redundant.
For the second case where $(l_{2}^{1},l_{2}^{2}) = (1,2)$, the resulting inequalities are redundant as shown for \eqref{eq:3_user_ineq_1_2}.
\item 4-user circuits (adjacent same-cell users): these take the form $\big(v_{1}^{[l_{1}^{1}]},v_{1}^{[l_{1}^{2}]},v_{2}^{[l_{2}^{1}]},v_{2}^{[l_{2}^{2}]},v_{1}^{[l_{1}^{1}]}  \big)$, $(l_{1}^{1},l_{1}^{2},l_{2}^{1},l_{2}^{2}) \in \{1,2\}^{4}$, $l_{1}^{1} \neq l_{1}^{2}$ and $l_{2}^{1} \neq l_{2}^{2}$,
    from which we obtain 4 inequalities.
    We start with the case where $(l_{1}^{1},l_{1}^{2},l_{2}^{1},l_{2}^{2}) = (2,1,2,1)$, from which we obtain
    \begin{align}
    \label{eq:4_user_ineq_1}
    d_{1}^{[1]} +d_{1}^{[2]}+ d_{2}^{[1]} + d_{2}^{[2]} & \leq  \alpha_{11}^{[2]} - \alpha_{21}^{[2]}  +  \alpha_{22}^{[2]} - \alpha_{12}^{[2]}.
    \end{align}
    Now consider the 3 remaining circuits obtained from $(l_{1}^{1},l_{1}^{2},l_{2}^{1},l_{2}^{2})$ given by $(1,2,2,1)$, $(2,1,1,2)$ and
    $(1,2,1,2)$. The resulting inequalities are given by
    \begin{align}
    \label{eq:4_user_ineq_1_2}
     d_{1}^{[1]} +d_{1}^{[2]}+ d_{2}^{[1]} + d_{2}^{[2]} & \leq  \alpha_{11}^{[1]} +  \alpha_{11}^{[2]} - \alpha_{21}^{[2]}  +  \alpha_{22}^{[2]} - \alpha_{12}^{[1]} \\
     \label{eq:4_user_ineq_1_3}
     d_{1}^{[1]} +d_{1}^{[2]}+ d_{2}^{[1]} + d_{2}^{[2]} & \leq   \alpha_{11}^{[2]} - \alpha_{21}^{[1]}  +  \alpha_{22}^{[1]} +  \alpha_{22}^{[2]} - \alpha_{12}^{[2]} \\
     \label{eq:4_user_ineq_1_4}
     d_{1}^{[1]} +d_{1}^{[2]}+ d_{2}^{[1]} + d_{2}^{[2]} & \leq  \alpha_{11}^{[1]} +  \alpha_{11}^{[2]} - \alpha_{21}^{[1]}  +  \alpha_{22}^{[1]} + \alpha_{22}^{[2]} - \alpha_{12}^{[1]}.
     \end{align}
     As in the 3-user case, the inequalities in \eqref{eq:4_user_ineq_1_2}--\eqref{eq:4_user_ineq_1_4} have additional signal strength terms on their right-hand-sides, compared to  \eqref{eq:4_user_ineq_1}, since $v_{i}^{[1]}$ precedes $v_{i}^{[2]}$ for at least one $i \in \{1,2\}$.
     Hence, the redundancy of \eqref{eq:4_user_ineq_1_2}--\eqref{eq:4_user_ineq_1_4} can be easily shown by following the same argument used for \eqref{eq:3_user_ineq_1_2}.
     Note that \eqref{eq:4_user_ineq_1_4} is implied by 3 inequalities:
     $d_{1}^{[2]} \leq \alpha_{11}^{[2]} $ and $d_{2}^{[2]} \leq \alpha_{22}^{[2]}$, obtained from \eqref{eq:1_user_ineq}, and
     $d_{1}^{[1]} + d_{2}^{[1]} \leq \alpha_{11}^{[1]} - \alpha_{21}^{[1]} +  \alpha_{22}^{[1]} - \alpha_{12}^{[1]}$, obtained from
     \eqref{eq:2_user_ineq_diff_cell}.
\item 4-user circuits (non-adjacent same-cell users): these take the form $\big(v_{1}^{[1]},v_{2}^{[l_{2}^{1}]},v_{1}^{[2]},v_{2}^{[l_{2}^{2}]},v_{1}^{[1]}  \big)$, $(l_{2}^{1},l_{2}^{2}) \in \{1,2\}^{2}$ and $l_{2}^{1} \neq l_{2}^{2}$.
    We obtain 2 inequalities, each given by
    \begin{equation}
    d_{1}^{[1]} +d_{1}^{[2]}+ d_{2}^{[1]} + d_{2}^{[2]}  \leq \alpha_{11}^{[1]} - \alpha_{21}^{[l_{2}^{1}]} + \alpha_{22}^{[l_{2}^{1}]} - \alpha_{12}^{[2]} + \alpha_{11}^{[2]} - \alpha_{21}^{[l_{2}^{2}]} +  \alpha_{22}^{[l_{2}^{2}]} - \alpha_{12}^{[1]}.
    \end{equation}
The above is redundant since it is implied by $d_{1}^{[1]} + d_{2}^{[l_{2}^{1}]} \leq \alpha_{11}^{[1]} - \alpha_{21}^{[l_{2}^{1}]} + \alpha_{22}^{[l_{2}^{1}]} - \alpha_{12}^{[1]} $
and
$d_{1}^{[2]} + d_{2}^{[l_{2}^{2}]} \leq \alpha_{11}^{[2]} - \alpha_{21}^{[l_{2}^{2}]} + \alpha_{22}^{[l_{2}^{2}]} - \alpha_{12}^{[2]} $, where both are obtained from \eqref{eq:2_user_ineq_diff_cell}.
\end{itemize}
After removing all redundant inequalities identified above, we are left with
\begin{align}
d_{1}^{[1]}  & \leq  \alpha_{11}^{[1]} \\
d_{2}^{[1]}  & \leq  \alpha_{22}^{[1]} \\
d_{1}^{[2]} + d_{1}^{[1]}  & \leq  \alpha_{11}^{[2]} \\
d_{2}^{[2]} + d_{2}^{[1]}  & \leq  \alpha_{22}^{[2]} \\
d_{1}^{[1]} + d_{2}^{[1]}  & \leq  \alpha_{11}^{[1]} - \alpha_{21}^{[1]} + \alpha_{22}^{[1]} - \alpha_{12}^{[1]}  \\
d_{1}^{[2]} + d_{1}^{[1]} + d_{2}^{[1]}  & \leq  \alpha_{11}^{[2]} - \alpha_{21}^{[1]} + \alpha_{22}^{[1]} - \alpha_{12}^{[2]}  \\
d_{1}^{[1]} + d_{2}^{[2]} +  d_{2}^{[1]}  & \leq  \alpha_{11}^{[1]} - \alpha_{21}^{[2]} + \alpha_{22}^{[2]} - \alpha_{12}^{[1]}  \\
d_{1}^{[2]} + d_{1}^{[1]} + d_{2}^{[2]} +  d_{2}^{[1]}  & \leq  \alpha_{11}^{[2]} - \alpha_{21}^{[2]} + \alpha_{22}^{[2]} - \alpha_{12}^{[2]}.
\end{align}
By further including $d_{i}^{[l_{i}]} \geq 0$, $(l_{i},i) \in \{1,2\}^{2}$, we obtain the polyhedral TIN region $\mathcal{P}_{\bm{\mathrm{id}}}$
for the example in Fig. \ref{fig:potential_graph}, which coincides with the characterization in Theorem \ref{theorem:TIN_region}.

To summarize, from the above procedure, the following (sub)classes of circuits give rise to redundant inequalities:
$\bullet$ single-user circuits involving a stronger MAC user (i.e. $v_{k}^{[2]}$), $\bullet$ all multi-user circuits traversing the ground node, $\bullet$
2-user circuits (different cells) which involve any of the stronger MAC users, $\bullet$
3-user circuits (2 users from cell $k$) in which the weaker MAC user from cell $k$
precedes the stronger MAC user from the same cell in the cyclic order, or the participating user from cell $(3-k)$ is the stronger MAC user,
$\bullet$  all 4-user circuits (adjacent same-cell users), except for the circuit in which the stronger MAC user precedes the weaker MAC user from the same cell in the cyclic order, $\bullet$ all 4-user circuits (non-adjacent same-cell users).

One may also translate the above findings into more succinct and general principles, which are given as follows: 1)
multi-user circuits that traverse the ground node $v_{0}^{[0]}$ are not useful,
2) users belonging to the same cell must be cyclicly adjacent in 4-user circuits
(this holds automatically for 3-user circuits), 3) a circuit traversing
$v_{i}^{[2]}$ must also traverse $v_{i}^{[1]}$, where $v_{i}^{[2]}$ should precede $v_{i}^{[1]}$ in the cyclic order.
Next, we carry out redundancy elimination for the general case by building upon, and further generalizing, the above principles.
\subsubsection{From Directed Circuits to GDoF Inequalities: The General Case}
We start by introducing some notation employed in showing the general case, particularly in the proof of
Lemma \ref{lemma:redundant_circuits} given in the appendix.
We denote each vertex $v_{k}^{[l_{k}]} \in \mathcal{V}$ in this part by its index tuple $(l_{k},k)$ to avoid cumbersome subscript-superscript notation.
Let $\mathbf{s}^{n} \in \Sigma(\mathcal{K})$ be a cyclicly ordered sequence of $n$ distinct users.
$\mathbf{s}^{n}$ can be partitioned into $m$ single-cell subsequences as
\begin{equation}
\label{eq:cyclic_sequence_partition}
\mathbf{s}^{n} = \big(\mathbf{s}_{1}^{n_{1}},\ldots,\mathbf{s}_{m}^{n_{m}} \big), \ \text{such that}
\end{equation}
\begin{equation}
\label{eq:cyclic_sequence_partition_2}
\mathbf{s}_{j}^{n_{j}}  = \big((l_{j}^{1},i_{j}),\ldots,(l_{j}^{n_{j}},i_{j}) \big) \in \Sigma(\mathcal{K}_{i_{j}}),
\; i_{j} \in \langle K \rangle,  \;
i_{j}  \neq i_{j+1},  \;  \forall j \in \langle m  \rangle
\end{equation}
where a modulo arithmetic is implicitly used on cell indices, i.e. $i_{m + 1} = i_{1}$.
It is readily seen that $n_{j}  \leq L_{i_{j}}$ and $\sum_{j=1}^{m} n_{j} = n$.
Moreover, while two cyclicly adjacent single-cell subsequences in \eqref{eq:cyclic_sequence_partition} cannot have the same cell index,
this is not necessary for nonadjacent subsequence.
Note that the partition in \eqref{eq:cyclic_sequence_partition}--\eqref{eq:cyclic_sequence_partition_2} is cyclicly unique (i.e. unique up to a cyclic shift). Therefore, we always assume that $\mathbf{s}^{n} \in \Sigma(\mathcal{K})$
is given in terms of its cyclicly unique single-cell partition.

Sequences in $\Sigma(\mathcal{K})$  map into two types of circuits in $\mathcal{G}_{\mathrm{p}}$.
The first type is given by
\begin{equation}
\label{eq:circuit_fn_1}
\mathbf{c}(\mathbf{s}^{n})  \triangleq \big(e_{1}^{1},\ldots,e_{1}^{n_{1}},\ldots,e_{m}^{1},\ldots,e_{m}^{n_{m}} \big), \
  \mathbf{s}^{n} = \big(\mathbf{s}_{1}^{n_{1}},\ldots,\mathbf{s}_{m}^{n_{m}} \big) \in \Sigma(\mathcal{K}), n \geq 2
\end{equation}
where each edge $e_{j}^{s} \in \mathcal{E}_{1}' \cup \mathcal{E}_{1}''\cup \mathcal{E}_{2}$ connects a pair of cyclicly consecutive users such that
\begin{equation}
\label{eq:edges_in_circuits}
e_{j}^{s} \triangleq
\begin{cases}
 \big((l_{j}^{s},i_{j}),(l_{j}^{s+1},i_{j}) \big), \; s \in \langle n_{j} -1 \rangle, \; j \in \langle m \rangle \\
 \big((l_{j}^{n_{j}},i_{j}),(l_{j+1}^{1},i_{j+1}) \big), \;  s = n_{j}, \; j \in \langle m \rangle
\end{cases}
\end{equation}
and $(l_{m+1}^{1},i_{m+1}) = (l_{1}^{1},i_{1})$ is implicitly assumed.
The second type of directed circuits is defined as
\begin{equation}
\label{eq:circuit_fn_2}
\mathbf{c}_{0}(\mathbf{s}^{n})  \triangleq \big(e_{0}',e_{1}^{1},\ldots,e_{1}^{n_{1}},\ldots,e_{m}^{1},\ldots,e_{m}^{n_{m}-1},e_{0}''\big), \
  \mathbf{s}^{n} = \big(\mathbf{s}_{1}^{n_{1}},\ldots,\mathbf{s}_{m}^{n_{m}} \big) \in \Sigma(\mathcal{K})
\end{equation}
where $e_{0}' \triangleq  \big(v_{0}^{[0]},(l_{1}^{1},i_{1}) \big) \in \mathcal{E}_{3}'$  and
$e_{0}'' \triangleq \big((l_{m}^{n_{m}},i_{m}), v_{0}^{[0]}\big) \in \mathcal{E}_{3}''$, while the remaining edges are as in \eqref{eq:edges_in_circuits}.
We further categorize circuits in \eqref{eq:circuit_fn_1} and \eqref{eq:circuit_fn_2} as follows:
\begin{itemize}
\item  Single-user circuits: these circuits take the form $\mathbf{c}_{0}\big((l_{i},i)\big) = (e_{0}',e_{0}'')$, $(l_{i},i) \in \mathcal{K}$.
As in the example of Section \ref{subsec:GDoF_example}, from non-negative length condition we obtain
\begin{equation}
\label{eq:GDoF_single_circuit_1}
d_{i}^{[l_{i}]} \leq \alpha_{ii}^{[l_{i}]}, \ (l_{i},i) \in \mathcal{K}.
\end{equation}
\item Multi-user circuits traversing $v_{0}^{[0]}$: such circuits take the form $\mathbf{c}_{0}(\mathbf{s}^{n})$, $\mathbf{s}^{n} \in \Sigma(\mathcal{K})$ and  $n \geq 2$.
    As in the simple example, it can be easily shown that these circuits are redundant since circuits of the type
    $\mathbf{c}(\mathbf{s}^{n})$ yield tighter GDoF inequalities in general.
\item Multi-user circuits not traversing $v_{0}^{[0]}$: these are the remaining circuits which take the form
$\mathbf{c}(\mathbf{s}^{n})$, $\mathbf{s}^{n} \in \Sigma(\mathcal{K})$ and  $n \geq 2$.
Some of these circuits turn out to be redundant as shown through the following
lemma, which proof is given in Appendix \ref{appendix:proof_lemma:redundant_circuits}.
\begin{lemma}
\label{lemma:redundant_circuits}
For any multi-user circuit $\mathbf{c}(\mathbf{s}^{n})$,
where $\mathbf{s}^{n} = \big(\mathbf{s}_{1}^{n_{1}},\ldots,\mathbf{s}_{m}^{n_{m}} \big) \in \Sigma(\mathcal{K})$ and $n\geq2$,
the corresponding GDoF inequality obtained from the non-negative length condition is necessarily redundant if the circuit fails to satisfy the
two following conditions:
\begin{enumerate}[label=C.{\arabic*}]
\item The sequence of cells $(i_{1},\ldots,i_{m})$, associated with the cyclicly unique single-cell partition $\big(\mathbf{s}_{1}^{n_{1}},\ldots,\mathbf{s}_{m}^{n_{m}}\big)$, should include no repetitions.
\label{cond:redundant_2}
\item The sequence of users associated with each single-cell subsequence
$\mathbf{s}_{j}^{n_{j}}  =  \big((l_{j}^{1},i_{j}),\ldots,(l_{j}^{n_{j}},i_{j}) \big) \in \Sigma(\mathcal{K}_{i_{j}})$, for all $j \in \langle m \rangle$,
must take the descending  form $(l_{j}^{1},\ldots,l_{j}^{n_{j}}) = (n_{j},n_{j}-1,\ldots,1)$.
\label{cond:redundant_3}
\end{enumerate}
\end{lemma}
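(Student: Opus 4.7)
\medskip

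\noindent\textbf{Proof plan.} My strategy is to follow the generalization of the redundancy arguments illustrated for the small example in Section~\ref{subsec:GDoF_example}, treating the two conditions separately and then combining them. The core idea is that whenever C.1 or C.2 is violated, the GDoF inequality obtained from the non-negative length condition on $\mathbf{c}(\mathbf{s}^{n})$ can be written as the sum of GDoF inequalities obtained from strictly shorter circuits (possibly single-user circuits using \eqref{eq:GDoF_single_circuit_1}, or $\mathbf{c}(\tilde{\mathbf{s}})$ for some shorter $\tilde{\mathbf{s}}$). Hence the inequality is implied by the retained ones and is redundant.

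\medskip

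\noindent\textbf{Handling violations of C.2.} I would first deal with violations of the descending order condition within a single-cell subsequence $\mathbf{s}_{j}^{n_{j}}$. Using the length formulas in \eqref{eq:lengths_1}, an edge that goes from a weaker to a stronger user (i.e.\ $l_{j}^{s}<l_{j}^{s+1}$) contributes length $\alpha_{i_{j}i_{j}}^{[l_{j}^{s}]} - d_{i_{j}}^{[l_{j}^{s}]}$, whereas an edge that goes from a stronger to a weaker user contributes the strictly smaller length $\alpha_{i_{j}i_{j}}^{[l_{j}^{s}]} - \alpha_{i_{j}i_{j}}^{[l_{j}^{s+1}]} - d_{i_{j}}^{[l_{j}^{s}]}$. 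Thus, whenever the subsequence is not descending, some intermediate user $v_{i_{j}}^{[l^{\star}]}$ enters and leaves the subsequence through weak$\to$strong edges. I would show that removing this user from the circuit (splicing the neighbouring edges around it, which belong to $\mathcal{E}_{2} \cup \mathcal{E}_{1}'$) yields a shorter circuit $\mathbf{c}(\mathbf{s}^{n-1})$ whose inequality together with the trivial bound $d_{i_{j}}^{[l^{\star}]} \leq \alpha_{i_{j}i_{j}}^{[l^{\star}]}$ from \eqref{eq:GDoF_single_circuit_1} implies the original one. Iterating this reduction peels off all ``out-of-order'' users until the remaining subsequences are in descending form.

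\medskip

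\noindent\textbf{Handling violations of C.1.} Next I would address cell repetitions in the cyclic cell sequence $(i_{1},\ldots,i_{m})$. Suppose $i_{j}=i_{k}$ for some $j<k$ with the subsequences $\mathbf{s}_{j}^{n_{j}}$ and $\mathbf{s}_{k}^{n_{k}}$ non-adjacent. The idea is to split $\mathbf{s}^{n}$ at the repeated cell into two shorter cyclic sequences $\mathbf{s}^{A}$ and $\mathbf{s}^{B}$, each of which still lives in $\Sigma(\mathcal{K})$. Using the structure of edge lengths in \eqref{eq:lengths_1}, the sum of lengths over $\mathbf{c}(\mathbf{s}^{A})$ and $\mathbf{c}(\mathbf{s}^{B})$ equals the length of $\mathbf{c}(\mathbf{s}^{n})$ \emph{after} accounting for the MAC-type identifications at the shared cell; the resulting GDoF inequality for $\mathbf{c}(\mathbf{s}^{n})$ therefore follows by adding the two split inequalities (together with at most a single-user bound \eqref{eq:GDoF_single_circuit_1} that absorbs any leftover direct-link term). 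This extends the 4-user ``non-adjacent same-cell users'' argument of Section~\ref{subsec:GDoF_example}.

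\medskip

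\noindent\textbf{Main obstacle.} The routine piece is the bookkeeping of how the $\alpha_{ii}^{[l]}$ and $\alpha_{ji}^{[l]}$ terms appearing in \eqref{eq:lengths_1}--\eqref{eq:lengths_5} recombine under splicing and splitting. The genuine difficulty is giving a clean, general description of the splitting operation in the C.1 case that works when a cell index is repeated more than twice and/or several cells are repeated simultaneously: one must verify that there is always a splitting point at which both resulting circuits lie in $\Sigma(\mathcal{K})$ and such that the decomposition of lengths is term-by-term correct. I would formalize this by choosing, among the repeated cells, one that minimizes the length of the shorter arc between two of its occurrences, so that the induction on $n$ goes through. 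Once the splitting lemma is established, combining it with the descending-order reduction above gives the claim by induction on $n$.
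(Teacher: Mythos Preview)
Your overall strategy---reduce the offending circuit to shorter circuits whose inequalities imply the original---is the same as the paper's, and your C.1 sketch is close in spirit to what the paper does. Two points are worth flagging.

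\textbf{A genuine gap in C.2.} Your reduction only targets ``descending order'': you peel off users until each single-cell subsequence $(l_j^1,\ldots,l_j^{n_j})$ is strictly decreasing. But C.2 demands more: the subsequence must be \emph{exactly} $(n_j,n_j-1,\ldots,1)$, i.e.\ a block of consecutive user indices starting at $1$. A descending sequence such as $(5,3,1)$ survives your reduction yet violates C.2. The paper handles this with a separate \emph{augmentation} step: once the subsequences are descending (say the $j$-th is headed by $l_j^1$), one constructs the circuit $\mathbf{c}(\tilde{\mathbf{s}})$ obtained by filling each subsequence out to $(l_j^1,l_j^1-1,\ldots,1)$ and checks that its inequality is at least as tight (same right-hand side, more GDoF terms on the left). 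Without this step your argument is incomplete.

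\textbf{A sharper C.1 split than you describe.} For the cell-repetition case you anticipate needing ``at most a single-user bound \eqref{eq:GDoF_single_circuit_1} that absorbs any leftover direct-link term'' and flag the choice of split point as the main obstacle. The paper avoids both issues with a specific construction: assuming $i_1=i_k$ and (w.l.o.g.) $l_k^{n_k}>l_1^{n_1}$, it sets $n_1^{\star}=\max\{s_1:\,l_k^{n_k}>l_1^{s_1}\}$ and splices so that the new edge $(l_k^{n_k},i_k)\to(l_1^{n_1^{\star}},i_1)$ is a strong$\to$weak intra-cell edge. The two sub-circuit inequalities then add up to the original inequality with an \emph{extra} $d(e_1^{n_1^{\star}})\geq 0$ on the left---no single-user bound is needed, and recursion on the shorter circuits handles multiple repetitions automatically. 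This is cleaner than the minimal-arc induction you propose.
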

Equipped with Lemma \ref{lemma:redundant_circuits}, it follows that each non-necessarily redundant
multi-user circuit of the form $\mathbf{c}(\mathbf{s}^{n})$
 is uniquely identified (up to a cyclic order) by two sequences:
\begin{enumerate}
\item $(i_{1},\ldots,i_{m}) \! \in \! \Sigma\big(\langle K \rangle \big)$, $ m \! \in \! \langle K \rangle$, which
identifies participating cells and their cyclic order.
\item $(l_{i_{1}},\ldots,l_{i_{m}}) \in \langle L_{i_{1}} \rangle \times \cdots \times \langle L_{i_{m}} \rangle$,
which identifies the number (and identity due to the order in \ref{cond:redundant_3}) of
participating users in each of the participating cells.
\end{enumerate}
Taking all possible such sequences and specializing the non-negative length condition in \eqref{eq:non-negative_length},
we obtain the GDoF inequalities given by
\begin{equation}
\label{eq:GDoF_bounds_single_cell_multi_user}
\sum_{s_{i} \in \langle l_{i} \rangle} d_{i}^{[s_{i}]}  \leq \alpha_{ii}^{[l_{i}]}, \; l_{i} \geq 2, \; i \in \langle K \rangle
\end{equation}
whenever the cycle $\mathbf{c}(\mathbf{s}^{n})$  traverses only $m = 1$ cell, while for $m \geq 2$ cells we obtain
\begin{align}
\nonumber
\sum_{j \in \langle m \rangle}\sum_{s_{i_{j}} \in \langle l _{i_{j}} \rangle} d_{i_{j}}^{[s_{i_{j}}]} & \leq
\sum_{j \in \langle m \rangle}\alpha_{i_{j}i_{j}}^{[l_{i_{j}}]} - \alpha_{i_{j+1}i_{j}}^{[l_{i_{j+1}}]}
\stackrel{\text{(a)}}{=}
\sum_{j \in \langle m \rangle}\alpha_{i_{j}i_{j}}^{[l_{i_{j}}]} - \alpha_{i_{j}i_{j-1}}^{[l_{i_{j}}]}, \\
\label{eq:GDoF_bounds_multi_cell}
\forall l_{i_{j}} \in \langle L_{i_{j}} \rangle, \ & (i_{1},\ldots,i_{m}) \in \Sigma\big(\langle K \rangle\big), m \in \langle 2:K \rangle
\end{align}
where a modulo arithmetic is implicitly used on cell indices, i.e.  $i_{m + 1} = i_{1}$ and $i_{0} = i_{m}$, and (a) follows by rearranging the terms while exploiting the cyclic ordering.
\end{itemize}
It is notable that the single-user GDoF inequalities in \eqref{eq:GDoF_single_circuit_1} with $l_{i} \geq 2$ are redundant as they are
included in the single-cell multi-user inequalities in \eqref{eq:GDoF_bounds_single_cell_multi_user}.
After removing these redundancies, the remaining inequalities, in addition to the non-negativity constraints
$d_{i}^{[l_{i}]} \geq 0$, $\forall (l_{i},i) \in \mathcal{K}$, describe the polyhedral TIN region $\mathcal{P}_{\bm{\mathrm{id}}}$
and coincide with the characterization in Theorem \ref{theorem:TIN_region}.
\subsection{Proof of Theorem \ref{theorem:TIN_region_general}}
\label{subsec:proofs_general_TIN_region_1}
In this part, we turn to the characterization of the general TIN-achievable GDoF region.
To prove the equality in \eqref{eg:general_TIN_region_thrm}, it is sufficient to show that
\begin{equation}
 \mathcal{P}^{\star} = \bigcup_{\bm{\pi}  \in \Pi}  \mathcal{P}^{\star}_{\bm{\pi} }
 \subseteq \bigcup_{\bm{\pi} \in \Pi } \bigcup_{\mathcal{S} \subseteq \mathcal{K}}
 \mathcal{P}_{\bm{\pi}} (\mathcal{S})
\end{equation}
since inclusion in the other direction is given in \eqref{eq:TIN_region_inner_1}.
In turn, the above is shown by proving that for any decoding order  $\bm{\pi} \in \Pi $,
the inclusion given by $\mathcal{P}^{\star}_{\bm{\pi} } \subseteq \cup_{\mathcal{S} \subseteq \mathcal{K}} \mathcal{P}_{\bm{\pi}} (\mathcal{S})$
holds.
Therefore, we focus on a fixed arbitrary decoding order $\bm{\pi} \in \Pi $ henceforth.

Consider an arbitrary GDoF tuple $\mathbf{d}$ in the TIN-achievable GDoF region $\mathcal{P}^{\star}_{\bm{\pi}}$.
We wish to show that there exists $\mathcal{S} \subseteq \mathcal{K}$
such that $\mathbf{d}$ is also in $\mathcal{P}_{\bm{\pi}}(\mathcal{S})$.
By definition,  there exists a feasible power allocation $\mathbf{r} \leq \mathbf{0}$ such that the components
of $\mathbf{d}$ satisfy \eqref{eq:GDoF per user}.
For such tuple $\mathbf{d}$, we may partition $\mathcal{K}$ into
$\mathcal{S}$ and $\overline{\mathcal{S}} = \mathcal{K} \setminus \mathcal{S}$, such that
$d_{k}^{[\pi_{k}(l_{k})]} > 0$ for all $\big( \pi_{k}(l_{k}),k\big) \in \mathcal{S}$ and  $d_{j}^{[\pi_{j}(l_{j})]} = 0$
for all $\big(\pi_{j}(l_{j}),j\big) \in \overline{\mathcal{S}}$.
It follows that for all $\big( \pi_{k}(l_{k}),k\big) \in \mathcal{S}$, we must have
\begin{equation}
\label{eq:general_TIN_proof_d_ineq_1}
0 < d_{k}^{[\pi_{k}(l_{k})]}  \leq \! r_{k}^{[\pi_{k}(l_{k})]}  \!  +   \!  \alpha_{kk}^{[\pi_{k}(l_{k})]}
  \!   - \!   \Bigl(\max \Bigl\{ \max_{l_{k}' < l_{k}} \{ r_{k}^{[\pi_{k}(l_{k}')]}  +  \alpha_{kk}^{[\pi_{k}(l_{k}')]} \}  ,\max_{j \neq k}
  \max_{l_{j}} \{ r_{j}^{[l_{j}]}  +  \alpha_{jk}^{[l_{j}]} \}   \Bigr\} \Bigr)^{+}
\end{equation}
where the outmost $\max\{0,\cdot\}$ in \eqref{eq:GDoF per user} is inactive, and hence removed, since $d_{k}^{[\pi_{k}(l_{k})]} > 0$ for such users.

Next, we define a new feasible power allocation tuple $\tilde{\mathbf{r}} \leq \mathbf{0}$ such that
\begin{equation}
\label{eq:general_TIN_proof_r_tilde}
\tilde{r}_{i}^{[l_{i}]} =
\begin{cases}
r_{i}^{[l_{i}]}, \ \; (l_{i},i) \in \mathcal{S} \\
- \infty, \ (l_{i},i) \in \overline{\mathcal{S}}.
\end{cases}
\end{equation}
With this power allocation, the TIN scheme of Section \ref{subsec:TIN_scheme}
achieves all GDoF tuples $\tilde{\mathbf{d}}$ that satisfy
\begin{align}
\label{eq:general_TIN_proof_GDoF_r_tilde_1}
\tilde{d}_{j}^{[\pi_{j}(l_{j})]}    & =  0, \; \big(\pi_{j}(l_{j}),j\big) \in \overline{\mathcal{S}} \\
\tilde{d}_{k}^{[\pi_{k}(l_{k})]} & \geq 0 , \; \big(\pi_{k}(l_{k}),k\big) \in \mathcal{S}   \\
\nonumber
\tilde{d}_{k}^{[\pi_{k}(l_{k})]} &   \leq \tilde{r}_{k}^{[\pi_{k}(l_{k})]}  +   \alpha_{kk}^{[\pi_{k}(l_{k})]}
  \ \ -
  \\
\label{eq:general_TIN_proof_GDoF_r_tilde_3}
 \Bigl(\max  \Bigl\{ & \max_{\big(\pi_{k}(l_{k}'),k\big) \in \mathcal{S}:l_{k}' < l_{k}} \{ \tilde{r}_{k}^{[\pi_{k}(l_{k}')]}  +  \alpha_{kk}^{[\pi_{k}(l_{k}')]} \}  ,
  \max_{\big(\pi_{j}(l_{j}),j\big) \in \mathcal{S}: j \neq k}
  \{ \tilde{r}_{j}^{[\pi_{j}(l_{j})]}  +  \alpha_{jk}^{[\pi_{j}(l_{j})]} \}   \Bigr\} \Bigr)^{+},  \big(\pi_{k}(l_{k}),k\big) \in \mathcal{S}
\end{align}
which follows by plugging $\tilde{\mathbf{r}}$, as defined in \eqref{eq:general_TIN_proof_r_tilde}, into \eqref{eq:GDoF per user}.
As in \eqref{eq:general_TIN_proof_d_ineq_1}, we note that the outmost
$\max\{0,\cdot\}$ of \eqref{eq:GDoF per user} has also been relaxed in \eqref{eq:general_TIN_proof_GDoF_r_tilde_3}.
This follows because the right-hand-side of the inequality in \eqref{eq:general_TIN_proof_GDoF_r_tilde_3} is no less than the
rightmost-side of the compound inequality \eqref{eq:general_TIN_proof_d_ineq_1}.
Consequently, the GDoF tuple $\tilde{\mathbf{d}} = \mathbf{d}$ is also achieved with the power allocation $\tilde{\mathbf{r}}$.

As a final step, we note that any GDoF tuple $\tilde{\mathbf{d}} = \mathbf{d}$ achieved with the power allocation
$\tilde{\mathbf{r}}$ is also in $\mathcal{P}_{\bm{\pi}} (\mathcal{S})$.
This holds as $\tilde{\mathbf{r}}$ is feasible and \eqref{eq:general_TIN_proof_GDoF_r_tilde_1}--\eqref{eq:general_TIN_proof_GDoF_r_tilde_3}
are the inequalities that define the polyhedral TIN region $\mathcal{P}_{\bm{\pi}} (\mathcal{S})$ (see Section \ref{subsec:polyhedral_TIN}).
Therefore, we have $\mathbf{d} \in \mathcal{P}_{\bm{\pi}} (\mathcal{S})$ which completes the proof.
\section{Proof of Convexity}
\label{sec:proof_polyhedrality}
In this section, we present a proof for Theorem \ref{theorem:polyhedrality}.
We assume that the TIN-convexity conditions in \eqref{eq:Poly_condition_2} and \eqref{eq:Poly_condition_1}
always hold throughout this section.
For ease of exposition, we divide the proof of Theorem \ref{theorem:polyhedrality} into three steps as follows:
\begin{itemize}
\item \emph{Step 1}: We show that under the conditions of Theorem \ref{theorem:polyhedrality} and for any $\mathcal{S} \subseteq \mathcal{K}$,
we have $\mathcal{P}_{\bm{\pi}}(\mathcal{S}) \subseteq \mathcal{P}_{\bm{\mathrm{id}}}(\mathcal{S})$ for all $\bm{\pi} \in \Pi$.
Hence, the general TIN-achievable region in \eqref{eg:general_TIN_region_thrm} becomes
\begin{equation}
\nonumber
\mathcal{P}^{\star} = \bigcup_{\mathcal{S} \subseteq \mathcal{K}}
\mathcal{P}_{\bm{\mathrm{id}}}(\mathcal{S}).
\end{equation}
\item \emph{Step 2}: We show that for any $\mathcal{S} \subseteq \mathcal{K}$, we have
$\mathcal{P}_{\bm{\mathrm{id}}}(\mathcal{S}) \subseteq \mathcal{P}_{\bm{\mathrm{id}}}\big( \cup_{i \in \mathcal{M}} \mathcal{K}_{i} \big)$
for some $\mathcal{M} \subseteq \langle K \rangle$. Hence, the general TIN-achievable region now becomes
\begin{equation}
\nonumber
\mathcal{P}^{\star} = \bigcup_{\mathcal{M} \subseteq \langle K \rangle}
\mathcal{P}_{\bm{\mathrm{id}}}\big( \cup_{i \in \mathcal{M}} \mathcal{K}_{i} \big).
\end{equation}
\item \emph{Step 3}: We show that for any $\mathcal{M} \subseteq \langle K \rangle$, we have
$ \mathcal{P}_{\bm{\mathrm{id}}}\big( \cup_{i \in \mathcal{M}} \mathcal{K}_{i} \big) \subseteq \mathcal{P}_{\bm{\mathrm{id}}}(\mathcal{K})$.
Hence, we have $\mathcal{P}^{\star} = \mathcal{P}_{\bm{\mathrm{id}}}(\mathcal{K})$ as stated in Theorem \ref{theorem:polyhedrality}.
\end{itemize}
Before we proceed, we introduce some definitions and notation that facilitate the proof.
\subsection{A Compact Representation of Polyhedral TIN-Achievable GDoF Regions}
\label{subsubsec:compact_rep_polyhedral}
For any  given decoding order $\bm{\pi} \in \Pi$, we define $\mathcal{F}_{\bm{\pi}}(\mathcal{K})$ as a family of subsets of $\mathcal{K}$
where each member of  $\mathcal{F}_{\bm{\pi}}(\mathcal{K})$, denoted by
$\mathcal{S}_{\bm{\pi}} $,   takes the
form\footnote{For the 2-cell, 3-user network from the running example of Section \ref{sec:main_results}, we have
 $\mathcal{F}_{\bm{\pi}} \big( \big\{  (1,1),(2,1),(1,2)  \big\} \big) =  \Big\{ \big\{ (\pi_{1}(1),1) \big\} , \big\{ (\pi_{1}(1),1) , (\pi_{1}(2),1) \big\} ,
 \big\{ (1,2) \big\} ,  \big\{ (\pi_{1}(1),1) , (1,2)\big\}
  , \big\{ (\pi_{1}(1),1) , (\pi_{1}(2),1) , (1,2) \big\}     \Big\}$.}
\begin{equation}
\label{eq:subset_users_sum_GDoF_pi}
\mathcal{S}_{\bm{\pi}} = \left\{ \big(\pi_{i}(s_{i}), i \big): s_{i} \in \langle l_{i} \rangle, \; i \in \mathcal{M} \right\},
\text{ for some } \mathcal{M} \subseteq \langle K \rangle \text{ and } l_{j} \in \langle L_{j} \rangle, j \in \mathcal{M}.
\end{equation}
Moreover, for a GDoF tuple $\mathbf{d}$, we use $\mathbf{d}(\mathcal{S}_{\bm{\pi}})$ to denote the sum-GDoF
$\sum_{(l_{i},i) \in \mathcal{S}_{\bm{\pi}}} d_{i}^{[l_{i}]}$, where $\mathbf{d}(\emptyset) = 0$.
The characterization of $\mathcal{P}_{\bm{\pi}}$ in  Theorem \ref{theorem:TIN_region} is given in terms of
sum-GDoF inequalities for all possible subsets of users $\mathcal{S}_{\bm{\pi}}$  such that  $\mathcal{S}_{\bm{\pi}} \in \mathcal{F}_{\bm{\pi}}(\mathcal{K})$.
Hence, $\mathcal{P}_{\bm{\pi}}$ in \eqref{eq:IMAC_TIN_region_1} and \eqref{eq:IMAC_TIN_region_3} can be represented compactly
(yet less informatively) by all GDoF tuples $\mathbf{d} \in \mathbb{R}_{+}^{|\mathcal{K}|}$
that satisfy
\begin{equation}
\label{eq:polyhedral_region_compact}
\mathbf{d}(\mathcal{S}_{\bm{\pi}})  \leq f_{\bm{\pi}}(\mathcal{S}_{\bm{\pi}}), \;  \mathcal{S}_{\bm{\pi}} \in \mathcal{F}_{\bm{\pi}}(\mathcal{K})
\end{equation}
where $f_{\bm{\pi}} : \mathcal{F}_{\bm{\pi}}(\mathcal{K}) \rightarrow \mathbb{R}_{+}$ is a normalized
 set function (i.e. $f_{\bm{\pi}} (\emptyset) = 0$)  given by
\begin{equation}
\label{eq:set_function}
f_{\bm{\pi}}(\mathcal{S}_{\bm{\pi}}) =
\begin{cases}
\alpha_{ii}^{[\pi_{i}(l_{i})]}, \; \mathcal{S}_{\bm{\pi}} \text{ with } |\mathcal{M}| = 1 \\
\displaystyle
\min_{(i_{1},\ldots,i_{|\mathcal{M}| }) \in \Sigma(\mathcal{M})}
\sum_{j =1 }^{|\mathcal{M}|} \alpha_{i_{j}i_{j}}^{[\pi_{i_{j}}(l_{i_{j}})]} - \alpha_{i_{j}i_{j-1}}^{[\pi_{i_{j}}(l_{i_{j}})]}
, \; \mathcal{S}_{\bm{\pi}} \text{ with } |\mathcal{M}| \geq 2.
\end{cases}
\end{equation}

More generally, we may define the family
$\mathcal{F}_{\bm{\pi}}(\mathcal{S})$ over any subnetwork $\mathcal{S} \subseteq \mathcal{K}$. In particular, suppose that $\mathcal{S} \subseteq \mathcal{K}$ is given by  $\mathcal{S} = \cup_{i \in \mathcal{M}} \mathcal{S}_{i}$ for some $\mathcal{M} \subseteq \langle K \rangle$ and $\mathcal{S}_{i} \subseteq \mathcal{K}_{i}$, $i \in \mathcal{M}$.
We define $\mathcal{F}_{\bm{\pi}}(\mathcal{S})$ as the family of all subsets of $\mathcal{S}$ which take the form
$\mathcal{S}_{\bm{\pi}} = \left\{ \big(\pi_{i}(s_{i}), i \big) \in \mathcal{S}: s_{i} \leq l_{i}, \; i \in \mathcal{M}' \right\}$, where
$\mathcal{M}' \subseteq \mathcal{M} $ and $l_{i} \in \langle |\mathcal{S}_{i}| \rangle, i \in \mathcal{M}'$.
This in turn allows us to have a similar compact representation for general polyhedral TIN regions
$\mathcal{P}_{\bm{\pi}}(\mathcal{S})$ for any $\mathcal{S} \subseteq \mathcal{K}$.
\subsection{The 3-Step Proof of Convexity}
Now we proceed to show the three steps stated above.

\emph{Step 1}:
First, we consider $\mathcal{S} = \mathcal{K}$ and we show that under the conditions of Theorem \ref{theorem:polyhedrality},
we have $\mathcal{P}_{\bm{\pi}} \subseteq \mathcal{P}_{\bm{\mathrm{id}}}$ for all $\bm{\pi} \in \Pi$.
Consider a GDoF tuple $\mathbf{d}' \in \mathcal{P}_{\bm{\pi}}$ for some $\bm{\pi} \in \Pi$.
Since $\mathbf{d}' \geq \mathbf{0}$, to prove that $\mathbf{d}'$ is also in $\mathcal{P}_{\bm{\mathrm{id}}}$, it is sufficient to
show that the set of inequalities given by
\begin{equation}
\nonumber
\mathbf{d}'(\mathcal{S}_{\bm{\mathrm{id}}})  \leq f_{\bm{\mathrm{id}}}(\mathcal{S}_{\bm{\mathrm{id}}}), \;  \mathcal{S}_{\bm{\mathrm{id}}}
\in \mathcal{F}_{\bm{\mathrm{id}}}(\mathcal{K}).
\end{equation}
Consider some subset of users $\mathcal{S}_{\bm{\mathrm{id}}} \in \mathcal{F}_{\bm{\mathrm{id}}}(\mathcal{K})$
where $\mathcal{S}_{\bm{\mathrm{id}}} = \left\{ (s_{i}, i ): s_{i} \in \langle l_{i}' \rangle, \; i \in \mathcal{M} \right\}$.
We define $\mathcal{S}_{\bm{\pi}}(\mathcal{S}_{\bm{\mathrm{id}}}) =
\left\{ \big(\pi_{i}(s_{i}), i \big): s_{i} \in \langle l_{i} \rangle, \; i \in \mathcal{M} \right\}$ as the smallest member
of the family $\mathcal{F}_{\bm{\pi}}(\mathcal{K}) $, in terms of cardinality, such that
$\mathcal{S}_{\bm{\mathrm{id}}} \subseteq \mathcal{S}_{\bm{\pi}}(\mathcal{S}_{\bm{\mathrm{id}}})$ holds.
This set has the following property.
\begin{remark}
\label{eq:smalles_set_cont_S_id}
For $\mathcal{S}_{\bm{\mathrm{id}}}$ and $\mathcal{S}_{\bm{\pi}}(\mathcal{S}_{\bm{\mathrm{id}}})$ as defined above, where
$\big(\pi_{i}(l_{i}), i \big)$ is the user of cell $i$ to be decoded first in $\mathcal{S}_{\bm{\pi}}(\mathcal{S}_{\bm{\mathrm{id}}})$,
we must have $ \big(\pi_{i}(l_{i}), i \big) \in \mathcal{S}_{\bm{\mathrm{id}}}$ for all $i \in \mathcal{M}$.
This holds as the contrary implies that we can choose a smaller $\mathcal{S}_{\bm{\pi}}(\mathcal{S}_{\bm{\mathrm{id}}})$
which satisfies $\mathcal{S}_{\bm{\mathrm{id}}} \subseteq \mathcal{S}_{\bm{\pi}}(\mathcal{S}_{\bm{\mathrm{id}}})$.
\hfill $\lozenge$
\end{remark}
The above observations lead directly to the following result.
\begin{lemma}
\label{lemma:polyhedrality_step_1}
For $\mathcal{S}_{\bm{\mathrm{id}}} \in \mathcal{F}_{\bm{\mathrm{id}}}(\mathcal{K})$ and
$\mathcal{S}_{\bm{\pi}}(\mathcal{S}_{\bm{\mathrm{id}}})$ as defined above, the following inequality holds
\begin{equation}
\nonumber
f_{\bm{\pi}} \big( \mathcal{S}_{\bm{\pi}}(\mathcal{S}_{\bm{\mathrm{id}}}) \big) \leq f_{\bm{\mathrm{id}}} (\mathcal{S}_{\bm{\mathrm{id}}}).
\end{equation}
\end{lemma}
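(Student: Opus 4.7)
The plan is to reduce the lemma to a termwise comparison enabled by two ingredients: an index inequality coming from the ``smallest containing set'' property of $\mathcal{S}_{\bm{\pi}}(\mathcal{S}_{\bm{\mathrm{id}}})$, and a monotonicity property of ``direct minus cross'' strength gaps furnished by the first TIN-convexity condition. Both sets are indexed by the same cell set $\mathcal{M}$. Writing $l_i'$ for the largest in-cell index of cell $i$ present in $\mathcal{S}_{\bm{\mathrm{id}}}$ and $l_i$ for the analogous index in $\mathcal{S}_{\bm{\pi}}(\mathcal{S}_{\bm{\mathrm{id}}})$, Remark \ref{eq:smalles_set_cont_S_id} forces the ``first-decoded'' user $(\pi_i(l_i), i)$ of $\mathcal{S}_{\bm{\pi}}(\mathcal{S}_{\bm{\mathrm{id}}})$ to lie in $\mathcal{S}_{\bm{\mathrm{id}}}$, which translates into the numerical bound $\pi_i(l_i) \leq l_i'$ for every $i \in \mathcal{M}$.

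The TIN-convexity condition \eqref{eq:Poly_condition_2}, restated in \eqref{eq:Poly_condition_2_2}, is precisely the assertion that $\alpha_{ii}^{[l]} - \alpha_{ij}^{[l]}$ is non-decreasing in $l$ for every $j \neq i$. Combined with the index inequality above, this yields the pointwise bound
\begin{equation}
\nonumber
\alpha_{ii}^{[l_i']} - \alpha_{ij}^{[l_i']} \;\geq\; \alpha_{ii}^{[\pi_i(l_i)]} - \alpha_{ij}^{[\pi_i(l_i)]}, \quad \forall i \in \mathcal{M}, \; j \in \mathcal{M} \setminus \{i\},
\end{equation}
which will drive the rest of the argument.

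The lemma then follows by a direct comparison using \eqref{eq:set_function}. When $|\mathcal{M}| = 1$, the claim reduces to $\alpha_{ii}^{[\pi_i(l_i)]} \leq \alpha_{ii}^{[l_i']}$, which is immediate from $\pi_i(l_i) \leq l_i'$ and the direct-link ordering \eqref{eq:strength_order}. When $|\mathcal{M}| \geq 2$, I let $(i_1^*, \ldots, i_{|\mathcal{M}|}^*) \in \Sigma(\mathcal{M})$ be a cyclic sequence attaining the minimum in $f_{\bm{\mathrm{id}}}(\mathcal{S}_{\bm{\mathrm{id}}})$, and evaluate both sides along this same sequence. Summing the pointwise inequality over $j$ gives
\begin{equation}
\nonumber
f_{\bm{\mathrm{id}}}(\mathcal{S}_{\bm{\mathrm{id}}}) \;=\; \sum_{j=1}^{|\mathcal{M}|} \Bigl( \alpha_{i_j^* i_j^*}^{[l_{i_j^*}']} - \alpha_{i_j^* i_{j-1}^*}^{[l_{i_j^*}']} \Bigr) \;\geq\; \sum_{j=1}^{|\mathcal{M}|} \Bigl( \alpha_{i_j^* i_j^*}^{[\pi_{i_j^*}(l_{i_j^*})]} - \alpha_{i_j^* i_{j-1}^*}^{[\pi_{i_j^*}(l_{i_j^*})]} \Bigr) \;\geq\; f_{\bm{\pi}}\bigl(\mathcal{S}_{\bm{\pi}}(\mathcal{S}_{\bm{\mathrm{id}}})\bigr),
\end{equation}
where the final inequality holds because $f_{\bm{\pi}}$ is itself a minimum over $\Sigma(\mathcal{M})$ and we are evaluating at one specific cyclic sequence.

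The main subtlety is conceptual rather than technical: one has to recognize that in $\mathcal{S}_{\bm{\pi}}(\mathcal{S}_{\bm{\mathrm{id}}})$ the only indices that control $f_{\bm{\pi}}$ are the ``first-decoded'' indices $\pi_i(l_i)$, and that Remark \ref{eq:smalles_set_cont_S_id} is the exact bridge that lets the TIN-convexity monotonicity apply cell-by-cell. Once that bridge is in place and the $\bm{\mathrm{id}}$-minimizing cyclic sequence is used to evaluate both sides, the rest is a routine termwise bound.
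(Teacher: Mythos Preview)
Your proof is correct and follows essentially the same approach as the paper: both use Remark \ref{eq:smalles_set_cont_S_id} to obtain $\pi_i(l_i) \leq l_i'$, invoke the direct-link ordering \eqref{eq:strength_order} for the $|\mathcal{M}|=1$ case, and use the monotonicity of $\alpha_{ii}^{[l]} - \alpha_{ij}^{[l]}$ in $l$ from condition \eqref{eq:Poly_condition_2} to compare the cyclic sums term by term for $|\mathcal{M}|\geq 2$. The only cosmetic difference is that you fix a minimizing cyclic sequence for $f_{\bm{\mathrm{id}}}$ and bound $f_{\bm{\pi}}$ from above by its value on that sequence, whereas the paper writes the termwise inequality directly under the minimum; these are equivalent formulations of the same argument.
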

\begin{proof}
For the case where $\mathcal{M} = \{i\}$ (i.e. $|\mathcal{M}| = 1$), we have the following
\begin{align}
\label{eq:set_function_UB_1}
f_{\bm{\pi}} \big( \mathcal{S}_{\bm{\pi}}(\mathcal{S}_{\bm{\mathrm{id}}}) \big)  =  \alpha_{ii}^{[\pi_{i}(l_{i})]} \leq
\alpha_{ii}^{[l_{i}']}  = f_{\bm{\mathrm{id}}} (\mathcal{S}_{\bm{\mathrm{id}}})
\end{align}
where \eqref{eq:set_function_UB_1} holds due to $ \big(\pi_{i}(l_{i}), i \big) \in \mathcal{S}_{\bm{\mathrm{id}}}$, as shown in
Remark \ref{eq:smalles_set_cont_S_id},
and the order of
direct link strength levels in \eqref{eq:strength_order}.
On the other hand, for $|\mathcal{M}| \geq 2$ we have
\begin{align}
\nonumber
f_{\bm{\pi}} \big( \mathcal{S}_{\bm{\pi}}(\mathcal{S}_{\bm{\mathrm{id}}}) \big) & =
\min_{(i_{1},\ldots,i_{|\mathcal{M}| }) \in \Sigma(\mathcal{M})}
\sum_{j =1 }^{|\mathcal{M}|} \alpha_{i_{j}i_{j}}^{[\pi_{i_{j}}(l_{i_{j}})]} - \alpha_{i_{j}i_{j-1}}^{[\pi_{i_{j}}(l_{i_{j}})]} \\
\label{eq:set_function_UB_2}
& \leq  \min_{(i_{1},\ldots,i_{|\mathcal{M}| }) \in \Sigma(\mathcal{M})}
\sum_{j =1 }^{|\mathcal{M}|} \alpha_{i_{j}i_{j}}^{[l_{i_{j}}']} - \alpha_{i_{j}i_{j-1}}^{[l_{i_{j}}']}  \\
\nonumber
& = f_{\bm{\mathrm{id}}} (\mathcal{S}_{\bm{\mathrm{id}}})
\end{align}
where the inequality in \eqref{eq:set_function_UB_2} follows from
$ \big(\pi_{i}(l_{i}), i \big) \in \mathcal{S}_{\bm{\mathrm{id}}}$ and the condition in \eqref{eq:Poly_condition_2}
of Theorem \ref{theorem:polyhedrality} (see also the equivalent representation in \eqref{eq:Poly_condition_2_2}).
\end{proof}
Equipped with Lemma \ref{lemma:polyhedrality_step_1}, we obtain the following inequalities
\begin{equation}
\nonumber
\mathbf{d}'(\mathcal{S}_{\bm{\mathrm{id}}}) \leq \mathbf{d}'\big( \mathcal{S}_{\bm{\pi}}(\mathcal{S}_{\bm{\mathrm{id}}}) \big)
\leq f_{\bm{\pi}} \big( \mathcal{S}_{\bm{\pi}}(\mathcal{S}_{\bm{\mathrm{id}}}) \big)
\leq f_{\bm{\mathrm{id}}} (\mathcal{S}_{\bm{\mathrm{id}}}).
\end{equation}
By applying the above to every $\mathcal{S}_{\bm{\mathrm{id}}} \in \mathcal{F}_{\bm{\mathrm{id}}}(\mathcal{K})$, we conclude that
$\mathbf{d}' \in \mathcal{P}_{\bm{\mathrm{id}}}$ and hence  $\mathcal{P}_{\bm{\pi}} \subseteq \mathcal{P}_{\bm{\mathrm{id}}}$.

Following the same steps above, it can be shown that under the conditions of Theorem \ref{theorem:polyhedrality}
and for any subnetwork $\mathcal{S} \subseteq \mathcal{K}$, we have
$\mathcal{P}_{\bm{\pi}}(\mathcal{S}) \subseteq \mathcal{P}_{\bm{\mathrm{id}}}(\mathcal{S})$, for all $\bm{\pi}$.
This completes this step.

\emph{Step 2}:
Consider an arbitrary subnetwork $\mathcal{S} \subseteq  \mathcal{K}$ and let
$\mathcal{S} = \cup_{i \in \mathcal{M}} \mathcal{S}_{i}$ for some $\mathcal{M} \subseteq \langle K \rangle$
and $\mathcal{S}_{i} \subseteq \mathcal{K}_{i}$, $i \in \mathcal{M}$.
Moreover, let $\tilde{\mathcal{S}} $ be obtained by augmenting $\mathcal{S}$ such that
$\tilde{\mathcal{S}} = \cup_{i \in \mathcal{M}} \mathcal{K}_{i}$.
Using the compact representation in Section \ref{subsubsec:compact_rep_polyhedral},
the corresponding polyhedral TIN regions are given by
\begin{align}
\label{eq:polyhedrality_region_S_alt}
\mathcal{P}_{\bm{\mathrm{id}}} ( \mathcal{S} ) & = \left\{  \mathbf{d} \in \mathbb{R}_{+}^{|\mathcal{K}|}:
\mathbf{d}(\mathcal{S}')  \leq f_{\bm{\mathrm{id}}}(\mathcal{S}'), \mathcal{S}' \in \mathcal{F}_{\bm{\mathrm{id}}}(\mathcal{S}),
\
\mathbf{d}(\mathbf{s}) = 0, \forall \mathbf{s} \in \mathcal{K} \setminus\mathcal{S}  \right\}
\\
\label{eq:polyhedrality_region_S_tilde_alt}
\mathcal{P}_{\bm{\mathrm{id}}} ( \tilde{\mathcal{S}} ) & = \left\{  \mathbf{d} \in \mathbb{R}_{+}^{|\mathcal{K}|}:
\mathbf{d}(\tilde{\mathcal{S}} ')  \leq f_{\bm{\mathrm{id}}}(\tilde{\mathcal{S}} '), \tilde{\mathcal{S}} ' \in \mathcal{F}_{\bm{\mathrm{id}}}(\tilde{\mathcal{S}}),
\
\mathbf{d}(\mathbf{s}) = 0, \forall \mathbf{s} \in \mathcal{K} \setminus \tilde{\mathcal{S}}   \right\}
\end{align}
To show that $\mathcal{P}_{\bm{\mathrm{id}}} ( \mathcal{S} )
\subseteq \mathcal{P}_{\bm{\mathrm{id}}} ( \tilde{\mathcal{S}})$ holds,
consider a GDoF tuple $\mathbf{d}' \in \mathcal{P}_{\bm{\mathrm{id}}} ( \mathcal{S} ) $.
It follows that $\mathbf{d}'  \geq \mathbf{0} $ and
$\mathbf{d}'(\mathbf{s}) = 0, \forall \mathbf{s} \in \mathcal{K} \setminus \tilde{\mathcal{S}}$, where the
latter holds due to $\{\mathcal{K} \setminus \tilde{\mathcal{S}}\} \subseteq \{\mathcal{K} \setminus \mathcal{S}\}$ and
the equalities in \eqref{eq:polyhedrality_region_S_alt}.
It remains to show that $\mathbf{d}'$ satisfies the rest of the inequalities in \eqref{eq:polyhedrality_region_S_tilde_alt}.

For any $\tilde{\mathcal{S}}' \in \mathcal{F}_{\bm{\mathrm{id}}}(\tilde{\mathcal{S}})$, let
$\mathcal{S}'(\tilde{\mathcal{S}}')$ be the largest set in $\mathcal{F}_{\bm{\mathrm{id}}}(\mathcal{S})$ such that
$\mathcal{S}'(\tilde{\mathcal{S}}')\subseteq \tilde{\mathcal{S}}'$.
Note that $\mathcal{S}'(\tilde{\mathcal{S}}')$ exists and is
non-empty as $\tilde{\mathcal{S}}$ is obtained by augmenting $\mathcal{S}$.
We denote $\mathcal{S}'(\tilde{\mathcal{S}}')$ by $\mathcal{S}'$ henceforth for brevity.
With these definitions in mind, we present the following lemma.
\begin{lemma}
\label{lemma:polyhedrality_step_2}
For any $\tilde{\mathcal{S}}' \in \mathcal{F}_{\bm{\mathrm{id}}}(\tilde{\mathcal{S}})$ and $\mathcal{S}'$ as defined above,
the following inequality holds
\begin{equation}
\nonumber
f_{\bm{\mathrm{id}}}(\mathcal{S}')
\leq f_{\bm{\mathrm{id}}}(\tilde{\mathcal{S}}').
\end{equation}
\end{lemma}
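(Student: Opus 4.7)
The strategy is to make $\mathcal{S}'$ fully explicit and then control the gap between the two set-function values through two complementary operations: an in-cell superscript substitution, governed by the MAC-type TIN-convexity condition \eqref{eq:Poly_condition_2}, and a removal of cells from cyclic permutations, governed by the IC-type TIN-convexity condition \eqref{eq:Poly_condition_1}. Writing $\tilde{\mathcal{S}}' = \big\{(s_i, i) : s_i \leq l_i,\, i \in \mathcal{M}''\big\}$ with $\mathcal{M}'' \subseteq \mathcal{M}$ and $l_i \in \langle L_i \rangle$, and enumerating the user indices of each cell's $\mathcal{S}_i$ in increasing order as $s_i^1 < s_i^2 < \cdots < s_i^{|\mathcal{S}_i|}$, a short check using Remark \ref{remark:inner_redundancies} shows that $\mathcal{S}' = \tilde{\mathcal{S}}' \cap \mathcal{S}$ admits the representation $\{(s_i^k, i) : k \leq l_i^*, i \in \mathcal{M}_*\}$, where $l_i^*$ is the largest index with $s_i^{l_i^*} \leq l_i$ and $\mathcal{M}_* \subseteq \mathcal{M}''$ collects those cells for which such an index exists. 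By construction $s_i^{l_i^*} \leq l_i$ for every $i \in \mathcal{M}_*$.

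\textbf{Cell removal step.} I would fix any permutation $\sigma = (i_1, \ldots, i_m) \in \Sigma(\mathcal{M}'')$ contributing to $f_{\bm{\mathrm{id}}}(\tilde{\mathcal{S}}')$, with cyclic sum $S_\sigma = \sum_{j=1}^m \big(\alpha_{i_j i_j}^{[l_{i_j}]} - \alpha_{i_j i_{j-1}}^{[l_{i_j}]}\big)$. If a cell $i_0 \notin \mathcal{M}_*$ appears in $\sigma$ flanked cyclicly by $i_a$ and $i_b$, deleting $i_0$ changes $S_\sigma$ by
\[
\alpha_{i_0 i_0}^{[l_{i_0}]} - \alpha_{i_0 i_a}^{[l_{i_0}]} - \alpha_{i_b i_0}^{[l_{i_b}]} + \alpha_{i_b i_a}^{[l_{i_b}]}\,\mathbbm{1}(i_a \neq i_b),
\]
which is non-negative by \eqref{eq:Poly_condition_1} applied with $i = i_0$, $j = i_a$, $k = i_b$, $l_k = l_{i_b}$ (the $\mathbbm{1} = 0$ branch is used precisely when a length-two cycle is collapsed to a single cell). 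Iterating this deletion prunes $\sigma$ to a cyclic sequence $\sigma'$ whose cells all lie in $\mathcal{M}_*$, with the cyclic sum weakly decreasing at every stage.

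\textbf{Conclusion.} If $|\mathcal{M}_*| \geq 2$, then $\sigma'$ is a genuine permutation of $\mathcal{M}_*$, and applying \eqref{eq:Poly_condition_2_2} term by term (justified by $s_i^{l_i^*} \leq l_i$ from the setup) yields
\[
f_{\bm{\mathrm{id}}}(\mathcal{S}') \;\leq\; \sum_{j} \Big(\alpha_{i'_j i'_j}^{[s_{i'_j}^{l_{i'_j}^*}]} - \alpha_{i'_j i'_{j-1}}^{[s_{i'_j}^{l_{i'_j}^*}]}\Big) \;\leq\; S_{\sigma'} \;\leq\; S_\sigma,
\]
where the first inequality uses that $f_{\bm{\mathrm{id}}}(\mathcal{S}')$ is a minimum over $\Sigma(\mathcal{M}_*)$. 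If $|\mathcal{M}_*| = 1$ with $\mathcal{M}_* = \{i_0\}$, the removal terminates at the single-cell residual $\alpha_{i_0 i_0}^{[l_{i_0}]}$, and \eqref{eq:strength_order} gives $f_{\bm{\mathrm{id}}}(\mathcal{S}') = \alpha_{i_0 i_0}^{[s_{i_0}^{l_{i_0}^*}]} \leq \alpha_{i_0 i_0}^{[l_{i_0}]} \leq S_\sigma$. If $\mathcal{M}_* = \emptyset$, then $f_{\bm{\mathrm{id}}}(\mathcal{S}') = 0$ while the same reduction leaves some single cell $i^* \in \mathcal{M}''$ with $S_\sigma \geq \alpha_{i^* i^*}^{[l_{i^*}]} \geq 0$. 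Minimizing over $\sigma$ in each case establishes the lemma. The main technical obstacle I anticipate is the careful handling of the boundary cases $|\mathcal{M}_*| \leq 1$, where the cyclic-sum formula degenerates and the final length reduction must be justified via the $\mathbbm{1}(k \neq j) = 0$ branch of the IC-type condition; this is also where the interplay between the two TIN-convexity conditions is essential and cannot be separated cleanly into two independent steps.
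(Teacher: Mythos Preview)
Your argument is correct and in fact slightly more careful than the paper's own proof. The paper writes $\mathcal{S}'$ with the \emph{same} cell index set $\mathcal{M}'$ as $\tilde{\mathcal{S}}'$ and then compares the two cyclic minima over the common $\Sigma(\mathcal{M}')$ using only condition \eqref{eq:Poly_condition_2} (in the form \eqref{eq:Poly_condition_2_2}) together with the direct-link order \eqref{eq:strength_order}. That tacitly assumes every cell of $\tilde{\mathcal{S}}'$ survives in $\mathcal{S}'$, which---as you correctly notice---need not hold when the smallest user index in some $\mathcal{S}_i$ already exceeds $l_i$. You close this gap with a cell-deletion step driven by condition \eqref{eq:Poly_condition_1}, which is exactly the mechanism the paper deploys \emph{separately} in Step~3 (Lemma~\ref{lemma:polyhedrality_step_3}). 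In effect your route folds Lemmas~\ref{lemma:polyhedrality_step_2} and~\ref{lemma:polyhedrality_step_3} into a single self-contained argument; this buys you coverage of the cell-dropping edge case, at the cost of mixing the two TIN-convexity conditions that the paper keeps apart in order to isolate which condition governs which monotonicity (in-cell augmentation versus cell addition). One minor point: your cyclic-sum notation $S_\sigma$ coincides with $f_{\bm{\mathrm{id}}}(\tilde{\mathcal{S}}')$ only for $|\mathcal{M}''|\ge 2$; the single-cell case $|\mathcal{M}''|=1$ follows directly from \eqref{eq:strength_order} and should be disposed of before invoking $S_\sigma$.
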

\begin{proof}
We start by expressing $\tilde{\mathcal{S}}'$ as
\begin{equation}
\nonumber
\tilde{\mathcal{S}}' = \big\{ (s_{i},i) : s_{i} \in \langle l_{i} \rangle, \; i \in \mathcal{M}' \big\},
\text{ for some }\mathcal{M}'  \subseteq \mathcal{M} \text{ and } l_{i} \in \langle L_{i} \rangle, i \in \mathcal{M}'.
\end{equation}
Since $\mathcal{S}' \subseteq \tilde{\mathcal{S}}' \cap \mathcal{S}$, it follows that $\mathcal{S}'$ can be expressed as
\begin{equation}
\nonumber
\mathcal{S}' = \big\{ (s_{i}',i)\in \mathcal{S} : s_{i}' \leq l_{i}', \; i \in \mathcal{M}' \big\},
\text{ for some } l_{i}' \leq l_{i}, i \in \mathcal{M}'.
\end{equation}
For $\mathcal{M}' = \{i\}$ (i.e. $|\mathcal{M}'| = 1$), we have
$f_{\bm{\mathrm{id}}}(\tilde{\mathcal{S}}') = \alpha_{ii}^{[l_{i}]} \geq
\alpha_{ii}^{[l_{i}']} = f_{\bm{\mathrm{id}}}(\mathcal{S}')$
which is due to $l_{i}' \leq l_{i}$ (see the order in \eqref{eq:strength_order}).
On the other hand, for $|\mathcal{M}'| \geq 2$, we have
\begin{align}
\nonumber
f_{\bm{\mathrm{id}}}(\tilde{\mathcal{S}}') & =
\min_{(i_{1},\ldots,i_{|\mathcal{M}'| }) \in \Sigma(\mathcal{M}')}
\sum_{j =1 }^{|\mathcal{M}'|} \alpha_{i_{j}i_{j}}^{[l_{i_{j}}]} - \alpha_{i_{j}i_{j-1}}^{[l_{i_{j}}]} \\
\label{eq:set_function_UB_2_2}
& \geq \min_{(i_{1},\ldots,i_{|\mathcal{M}'| }) \in \Sigma(\mathcal{M}')}
\sum_{j =1 }^{|\mathcal{M}'|} \alpha_{i_{j}i_{j}}^{[l_{i_{j}}']} - \alpha_{i_{j}i_{j-1}}^{[l_{i_{j}}']}  \\
\nonumber
& = f_{\bm{\mathrm{id}}}(\mathcal{S}')
\end{align}
where the inequality in \eqref{eq:set_function_UB_2_2} follows from $l_{i}' \leq l_{i}$  and the condition in \eqref{eq:Poly_condition_2}
of Theorem \ref{theorem:polyhedrality}.
\end{proof}
Next, we observe that if $\mathcal{S}' \subset \tilde{\mathcal{S}}'$ (strict inclusion), then any user in the non-empty set $\tilde{\mathcal{S}}' \setminus \mathcal{S}'$ is not in $\mathcal{S}$.
This holds as the contrary implies the existence of a set $\mathcal{S}''\in\mathcal{F}_{\bm{\mathrm{id}}}(\mathcal{S})$ such that
$\mathcal{S}' \subset \mathcal{S}'' \subseteq \tilde{\mathcal{S}}'$, hence contradicting the maximality of $\mathcal{S}'$.
It follows that $\mathbf{d}'(\tilde{\mathcal{S}}' \setminus \mathcal{S}') = 0$ (see the equalities in \eqref{eq:polyhedrality_region_S_alt}).
This observation together with Lemma \ref{lemma:polyhedrality_step_2} lead to
\begin{equation}
\nonumber
\mathbf{d}'(\tilde{\mathcal{S}}')  = \mathbf{d}'(\mathcal{S}') + \mathbf{d}'(\tilde{\mathcal{S}}' \setminus \mathcal{S}')
\leq f_{\bm{\mathrm{id}}}(\mathcal{S}')
\leq f_{\bm{\mathrm{id}}}(\tilde{\mathcal{S}}').
\end{equation}
The above holds for all $\tilde{\mathcal{S}}' \in \mathcal{F}_{\bm{\mathrm{id}}}(\tilde{\mathcal{S}})$
and therefore $\mathcal{P}_{\bm{\mathrm{id}}} (\mathcal{S}) \subseteq
\mathcal{P}_{\bm{\mathrm{id}}} (\tilde{\mathcal{S}})$, which completes this step.

\emph{Step 3}:
In this step we show that $ \mathcal{P}_{\bm{\mathrm{id}}}\big( \cup_{i \in \mathcal{M}} \mathcal{K}_{i} \big) \subseteq \mathcal{P}_{\bm{\mathrm{id}}}(\mathcal{K})$, for any $\mathcal{M} \subseteq \langle K \rangle$, by proving that
the set
$\mathcal{P}_{\bm{\mathrm{id}}}\big( \cup_{i \in \mathcal{M}} \mathcal{K}_{i} \big)$ is
monotonically increasing in $\mathcal{M}$, i.e. the following holds:
\begin{equation}
\label{eq:poly_region_id_monotonic}
\mathcal{P}_{\bm{\mathrm{id}}}\big( \cup_{i \in \mathcal{M}} \mathcal{K}_{i} \big) \subseteq
\mathcal{P}_{\bm{\mathrm{id}}}\big( \cup_{i \in \mathcal{M}\cup\{k\}} \mathcal{K}_{i} \big),
\text{ for any } \mathcal{M} \subseteq \langle K \rangle.
\end{equation}
We assume, without loss of generality, that $\mathcal{M} \subset \langle K \rangle$ and
$k \in \langle K \rangle \setminus \mathcal{M}$.
To demonstrate that \eqref{eq:poly_region_id_monotonic}  holds,
consider a GDoF tuple $\mathbf{d}' \in \mathcal{P}_{\bm{\mathrm{id}}}\big( \cup_{i \in \mathcal{M}} \mathcal{K}_{i} \big)$.
Since $\mathbf{d}' \geq \mathbf{0}$ and $\mathbf{d}'(\mathbf{s}) = 0$,
$ \forall \mathbf{s} \notin \cup_{i \in \mathcal{M} \cup \{k\}} \mathcal{K}_{i} $,
we show that $\mathbf{d}'$ satisfies the remaining inequalities that describe
$\mathcal{P}_{\bm{\mathrm{id}}}\big( \cup_{i \in \mathcal{M} \cup \{k\}} \mathcal{K}_{i} \big)$, i.e.
\begin{equation}
\label{polyhedralit_step_3_rem_inequality}
\mathbf{d}'\big( \mathcal{S}_{\bm{\mathrm{id}}} \big) \leq
f_{\bm{\mathrm{id}}} \big( \mathcal{S}_{\bm{\mathrm{id}}} \big), \ \forall
\mathcal{S}_{\bm{\mathrm{id}}} \in \mathcal{F}_{\bm{\mathrm{id}}} \big( \cup_{i \in \mathcal{M}\cup\{k\}} \mathcal{K}_{i} \big).
\end{equation}
To this end, we present the following useful lemma.
\begin{lemma}
\label{lemma:polyhedrality_step_3}
Consider $\mathcal{M} \subset \langle K \rangle$ and $k \in \langle K \rangle \setminus \mathcal{M}$.
Moreover, for each $i \in \mathcal{M} \cup \{k\}$, consider the set $\mathcal{S}_{i} = \big\{(s_{i},i) : s_{i} \in \langle l_{i} \rangle \big\}$,
where $l_{i} \in \langle L_{i} \rangle$.
The following inequality holds
\begin{equation}
\nonumber
f_{\bm{\mathrm{id}}} \big( \cup_{i \in \mathcal{M}} \mathcal{S}_{i} \big)   \leq
f_{\bm{\mathrm{id}}} \big( \cup_{i \in \mathcal{M}\cup\{k\}} \mathcal{S}_{i} \big).
\end{equation}
\end{lemma}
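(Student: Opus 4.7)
The plan is to prove the sharper per-sequence statement: for any cyclic sequence $(j_1,\ldots,j_m) \in \Sigma(\mathcal{M} \cup \{k\})$, removing the index $k$ yields a sequence in $\Sigma(\mathcal{M})$ whose value under \eqref{eq:set_function} is no larger than the original's. Minimizing both sides then gives the lemma immediately. I would first dispose of the trivial case $\mathcal{M} = \emptyset$, where the inequality reduces to $0 \leq \alpha_{kk}^{[l_k]}$.

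For the main case $|\mathcal{M}| \geq 1$, fix any $(j_1,\ldots,j_m) \in \Sigma(\mathcal{M} \cup \{k\})$ with $m \geq 2$ and let $p$ be the unique index with $j_p = k$. A term-by-term comparison between the cyclic sum on $(j_1,\ldots,j_m)$ and that on $(j_1,\ldots,j_{p-1},j_{p+1},\ldots,j_m)$ reveals that only two summands are affected: the $s = p$ summand $\alpha_{kk}^{[l_k]} - \alpha_{k j_{p-1}}^{[l_k]}$ disappears entirely, while in the $s = p+1$ summand the predecessor of $j_{p+1}$ changes from $k$ to $j_{p-1}$, so $-\alpha_{j_{p+1} k}^{[l_{j_{p+1}}]}$ is replaced by $-\alpha_{j_{p+1} j_{p-1}}^{[l_{j_{p+1}}]}$. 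In the boundary case $m = 2$, the reduced sequence collapses to a single cell and its value must be read from the single-cell branch of \eqref{eq:set_function} rather than the cyclic one; the net effect is that the replacement term is effectively absent.

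Assembling this bookkeeping, the difference between the original value and the reduced value takes the unified form
\begin{equation}
\nonumber
\alpha_{kk}^{[l_k]} - \alpha_{k j_{p-1}}^{[l_k]} - \alpha_{j_{p+1} k}^{[l_{j_{p+1}}]} + \alpha_{j_{p+1} j_{p-1}}^{[l_{j_{p+1}}]}\, \mathbbm{1}\big(j_{p+1} \neq j_{p-1}\big),
\end{equation}
and I would show this is nonnegative by invoking the IC-type TIN-convexity condition \eqref{eq:Poly_condition_1} with the choice $i = k$, $j = j_{p-1}$, and third user $(l_{j_{p+1}}, j_{p+1})$. The distinctness requirements $j_{p-1} \neq k$ and $j_{p+1} \neq k$ hold because indices along a cyclic sequence are distinct, and the indicator $\mathbbm{1}(k \neq j)$ in \eqref{eq:Poly_condition_1} translates to exactly $\mathbbm{1}(j_{p+1} \neq j_{p-1})$ above, vanishing precisely in the $m = 2$ case.

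The only mildly delicate point is the $m = 2$ boundary, where one branch of the piecewise definition \eqref{eq:set_function} switches to the other; pleasingly, the indicator in \eqref{eq:Poly_condition_1} is calibrated to absorb this boundary, so no parallel argument is required. With the per-sequence inequality in hand, taking the minimum over $\Sigma(\mathcal{M} \cup \{k\})$ on the right-hand side and using the fact that every such sequence has been matched to one in $\Sigma(\mathcal{M})$ with no larger value yields $f_{\bm{\mathrm{id}}}(\cup_{i \in \mathcal{M}} \mathcal{S}_i) \leq f_{\bm{\mathrm{id}}}(\cup_{i \in \mathcal{M} \cup \{k\}} \mathcal{S}_i)$.
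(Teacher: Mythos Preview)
Your proposal is correct and follows essentially the same route as the paper's proof: both remove $k$ from a cyclic sequence on $\mathcal{M}\cup\{k\}$, compute the resulting change in the cyclic sum, and bound it below by zero via condition~\eqref{eq:Poly_condition_1}. The only cosmetic difference is that you establish the comparison for every cyclic sequence (absorbing the $|\mathcal{M}|=1$ case into your unified indicator formula), whereas the paper works directly with the minimizing sequence and treats $|\mathcal{M}|=1$ as a separate case; the key inequality and its justification are identical.
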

\begin{proof}
For the case where $|\mathcal{M}| = 1$, it is not difficult to show that the above inequality holds.
In particular, for any pair of cells $i,k \in \langle  K \rangle$, $i \neq k $, the condition in \eqref{eq:Poly_condition_1}
of Theorem \ref{theorem:polyhedrality} implies the following inequality:
$\alpha_{ii}^{[l_{i}]} \leq \alpha_{ii}^{[l_{i}]} + \alpha_{kk}^{[l_{k}]} - (\alpha_{ik}^{[l_{i}]} + \alpha_{ki}^{[l_{k}]})$.
Therefore, we focus on the case where  $|\mathcal{M}| \geq 2$ in what follows while implicitly assuming that $K \geq 3$.

Let $(i_{1}^{\star},\ldots,i_{|\mathcal{M}\cup\{k\}|}^{\star}) \in \Sigma \big( \mathcal{M}\cup\{k\} \big)$
be a cyclic sequence that attains the minimum in the definition of
$f_{\bm{\mathrm{id}}} \big( \cup_{i \in \mathcal{M}\cup\{k\}} \mathcal{S}_{i} \big)$, i.e. one that satisfies
\begin{equation}
\label{eq:polyhedrality_step3_f_k}
f_{\bm{\mathrm{id}}} \big( \cup_{i \in \mathcal{M}\cup\{k\}} \mathcal{S}_{i} \big) =
\sum_{j =1 }^{|\mathcal{M}\cup\{k\}|} \alpha_{i_{j}^{\star}i_{j}^{\star}}^{[l_{i_{j}^{\star}}]} -
\alpha_{i_{j}^{\star}i_{j-1}^{\star}}^{[l_{i_{j}^{\star}}]}.
\end{equation}
Due to the cyclic nature, we may assume without loss of generality that $i_{|\mathcal{M}\cup\{k\}|}^{\star} = k$.
Moreover, we denote the index of the preceding cell $i_{|\mathcal{M}|}$ and the following cell $i_{1}$ by $k'$ and $k''$, respectively.
Now consider the cyclic sequence given by
$(i_{1}^{\star},\ldots,i_{|\mathcal{M}|}^{\star}) = (k'',i_{2}^{\star},\ldots,i_{|\mathcal{M}|-1}^{\star},k')$.
This sequence is clearly in $\Sigma \big( \mathcal{M}\big)$, from which we obtain an upper bound on
$f_{\bm{\mathrm{id}}} \big( \cup_{i \in \mathcal{M}} \mathcal{S}_{i} \big) $ given by
\begin{equation}
\label{eq:polyhedrality_step3_f}
f_{\bm{\mathrm{id}}} \big( \cup_{i \in \mathcal{M}} \mathcal{S}_{i} \big) \leq
\bigg[
\sum_{j =1 }^{|\mathcal{M}|} \alpha_{i_{j}^{\star}i_{j}^{\star}}^{[l_{i_{j}^{\star}}]} \bigg]
- \alpha_{k''k'}^{[l_{k''}]}
-\bigg[
\sum_{j =2 }^{|\mathcal{M}|}
\alpha_{i_{j}^{\star}i_{j-1}^{\star}}^{[l_{i_{j}^{\star}}]} \bigg].
\end{equation}
From \eqref{eq:polyhedrality_step3_f_k} and \eqref{eq:polyhedrality_step3_f}, we obtain
\begin{align}
\nonumber
f_{\bm{\mathrm{id}}} \big( \cup_{i \in \mathcal{M}\cup\{k\}} \mathcal{S}_{i} \big) -
f_{\bm{\mathrm{id}}} \big( \cup_{i \in \mathcal{M}} \mathcal{S}_{i} \big)
& \geq \alpha_{kk}^{[l_{k}]} - \bigg[ \alpha_{kk'}^{[l_{k}]} + \alpha_{k''k}^{[l_{k''}]} - \alpha_{k''k'}^{[l_{k''}]}\bigg] \\
\nonumber
& \geq \alpha_{kk}^{[l_{k}]} - \max_{k',(l_{k''},k''):k \neq k''}\bigg\{ \alpha_{kk'}^{[l_{k}]} + \alpha_{k''k}^{[l_{k''}]} -
\alpha_{k''k'}^{[l_{k''}]} \mathbbm{1} \big( k'' \neq k'\big) \bigg\} \\
\label{eq:polyhedrality_step3_f_f_k}
& \geq 0
\end{align}
where the inequality in \eqref{eq:polyhedrality_step3_f_f_k} follows from the condition in \eqref{eq:Poly_condition_1}
of Theorem \ref{theorem:polyhedrality}.
\end{proof}
Now to show \eqref{polyhedralit_step_3_rem_inequality}, we observe that any
$\mathcal{S}_{\bm{\mathrm{id}}} \in \mathcal{F}_{\bm{\mathrm{id}}} \big( \cup_{i \in \mathcal{M}\cup \{k\}} \mathcal{K}_{i} \big)$
can be expressed as
\begin{equation}
\nonumber
\mathcal{S}_{\bm{\mathrm{id}}} = \mathcal{S}_{\bm{\mathrm{id}}}' \cup \mathcal{S}_{\bm{\mathrm{id}}}'', \
\text{for some} \ \mathcal{S}_{\bm{\mathrm{id}}}' \in  \mathcal{F}_{\bm{\mathrm{id}}} \big( \cup_{i \in \mathcal{M}} \mathcal{K}_{i} \big)
\ \text{and} \
\mathcal{S}_{\bm{\mathrm{id}}}''
\in  \mathcal{F}_{\bm{\mathrm{id}}} \big( \mathcal{K}_{k} \big),
\end{equation}
where we highlight that $\mathcal{S}_{\bm{\mathrm{id}}}' $ or $\mathcal{S}_{\bm{\mathrm{id}}}''$
may be equal to $\emptyset$.
Since  $\mathbf{d}' \in \mathcal{P}_{\bm{\mathrm{id}}}\big( \cup_{i \in \mathcal{M}} \mathcal{K}_{i} \big)$,
we have $ \mathbf{d}'(\mathcal{S}_{\bm{\mathrm{id}}}'') = 0$.
Combining this with Lemma \ref{lemma:polyhedrality_step_3}, we obtain
\begin{align}
\nonumber
\mathbf{d}'(\mathcal{S}_{\bm{\mathrm{id}}}) & =  \mathbf{d}'(\mathcal{S}_{\bm{\mathrm{id}}}')
\leq f_{\bm{\mathrm{id}}} \big( \mathcal{S}_{\bm{\mathrm{id}}}' \big)
\leq f_{\bm{\mathrm{id}}} \big( \mathcal{S}_{\bm{\mathrm{id}}}  \big),
\end{align}
which in turn proves \eqref{eq:poly_region_id_monotonic}.
Therefore, we have $\mathcal{P}_{\bm{\mathrm{id}}}\big( \cup_{i \in \mathcal{M}} \mathcal{K}_{i} \big) \subseteq
\mathcal{P}_{\bm{\mathrm{id}}}\big( \mathcal{K} )$, for any $\mathcal{M} \subseteq \langle K \rangle$,
which completes this step and with it the proof of Theorem \ref{theorem:polyhedrality}.
\section{Proof of Optimality}
\label{sec:TIN-Optimality}
The TIN-optimality result in Theorem \ref{theorem:TIN_optimality} follows directly from the following outer bound.
\begin{theorem}
\label{theorem:outer_bound}
For the IMAC with input-output relationship in \eqref{eq:system model 2}, if the TIN-optimality conditions in \eqref{eq:TIN_condition_2}
and \eqref{eq:TIN_condition_1} hold, then the capacity region $\mathcal{C}$ is included in the set of rate tuples satisfying
\begin{align}
\label{eq:capacity_outer_1}
\sum_{s_{i}\in \langle l_{i} \rangle}R_{i}^{[s_{i}]} & \leq \log\left(1 + l_{i}P^{\alpha_{ii}^{[l_{i}]}} \right),
\; l_{i} \in \langle L_{i} \rangle, \forall i \in \langle K \rangle\\
\nonumber
\sum_{j \in \langle m \rangle } \sum_{s_{i_{j}} \in \langle l_{i_{j}} \rangle} R_{i_{j}}^{[s_{i_{j}}]} & \leq
m (l_{i_{j}} - 1) \log(l_{i_{j}}) + \sum_{j \in \langle m \rangle } \log\left( 1 +
(l_{i_{j+1}} + l_{i_{j}}) P^{\alpha_{i_{j}i_{j}}^{[l_{i_{j}}]}-\alpha_{i_{j}i_{j-1}}^{[l_{i_{j}}]}}  \right), \\
\label{eq:capacity_outer_2}
\forall l_{i_{j}} \in \langle L_{i_{j}} \rangle, \; & (i_{1},\ldots,i_{m}) \in \Sigma\big(\langle K \rangle\big), m \in \langle 2:K \rangle.
\end{align}
\end{theorem}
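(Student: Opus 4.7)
The plan is to establish the two families of bounds \eqref{eq:capacity_outer_1} and \eqref{eq:capacity_outer_2} separately. The first is a routine MAC-type argument, while the second requires a genie-aided converse on cyclic subnetworks of participating cells that generalizes the Etkin--Tse--Wang (ETW) construction from one transmitter per cell to arbitrarily many.

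For \eqref{eq:capacity_outer_1}, I would fix a cell $i$ and a level $l_{i}\in\langle L_{i}\rangle$ and reveal to Rx-$i$ all out-of-cell transmit sequences together with the messages $W_{i}^{[l_{i}+1]},\ldots,W_{i}^{[L_{i}]}$ as side information. After Fano's inequality the sum $\sum_{s=1}^{l_{i}} R_{i}^{[s]}$ is bounded by the capacity of an $l_{i}$-user scalar Gaussian MAC; the direct-link ordering \eqref{eq:strength_order} gives $\sum_{s=1}^{l_{i}} P^{\alpha_{ii}^{[s]}} \leq l_{i}P^{\alpha_{ii}^{[l_{i}]}}$, and the desired expression follows immediately.

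The cyclic bound \eqref{eq:capacity_outer_2} is the substantial part. Fix $(i_{1},\ldots,i_{m})\in\Sigma(\langle K\rangle)$ and $l_{i_{j}}\in\langle L_{i_{j}}\rangle$. First I deactivate every non-participating transmitter, either in cells outside $\{i_{1},\ldots,i_{m}\}$ or with in-cell index $s>l_{i_{j}}$, by revealing their transmitted signals to all receivers; this can only enlarge $\mathcal{C}$ and reduces the problem to a cyclic subnetwork with exactly $l_{i_{j}}$ users in cell $i_{j}$. For each $j\in\langle m\rangle$ I then introduce an ETW-style genie signal
\begin{equation}
\nonumber
S_{i_{j}}(t) \;=\; \sum_{s=1}^{l_{i_{j}}} \sqrt{P^{\alpha_{i_{j}i_{j-1}}^{[s]}}}\, e^{j\phi_{j,s}}\, X_{i_{j}}^{[s]}(t) \;+\; N_{i_{j}}(t),
\end{equation}
where $N_{i_{j}}(t)\sim\mathcal{N}_{\mathbb{C}}(0,1)$ is i.i.d.\ and independent of the channel, and the amplitudes mirror the interference channels from the in-cell users of cell $i_{j}$ into its cyclic predecessor Rx-$i_{j-1}$. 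I provide $S_{i_{j}}^{n}$ to Rx-$i_{j}$, apply Fano's inequality cell-by-cell, and expand $\sum_{j} I\bigl(\{W_{i_{j}}^{[s]}\}_{s=1}^{l_{i_{j}}}; Y_{i_{j}}^{n}, S_{i_{j}}^{n}\bigr)$ through the chain rule. The cyclic pairing of $S_{i_{j+1}}^{n}$ with the observation $Y_{i_{j}}^{n}$ is chosen so that the noise-and-interference entropies telescope around the cycle, leaving a Gaussian-maximized residual per cell that evaluates to $\log\bigl(1+(l_{i_{j+1}}+l_{i_{j}})P^{\alpha_{i_{j}i_{j}}^{[l_{i_{j}}]}-\alpha_{i_{j}i_{j-1}}^{[l_{i_{j}}]}}\bigr)$, with the multi-user MAC-type determinantal factors absorbed into the additive constants $(l_{i_{j}}-1)\log(l_{i_{j}})$.

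The principal obstacle lies in the telescoping step: verifying that the TIN-optimality conditions \eqref{eq:TIN_condition_2}--\eqref{eq:TIN_condition_1} are precisely strong enough to make the generalized \emph{useful-genie} property hold for every participating user of every cell. Concretely, one must show that $h(S_{i_{j}}^{n}\mid \text{inputs of cells other than }i_{j})$ does not exceed $h(Y_{i_{j-1}}^{n}\mid \text{inputs of cells other than }i_{j-1},i_{j})$ up to an $n$-independent constant, so that the genie provided to Rx-$i_{j}$ is never more informative than what Rx-$i_{j-1}$ already observes. The IC-type condition \eqref{eq:TIN_condition_1} handles the inter-cell balance as in \cite{Geng2015, Zhou2013}, while the intra-cell condition \eqref{eq:TIN_condition_2} is what guarantees this uniformly across all users of cell $i_{j}$, ensuring that no weaker in-cell user breaks the balance satisfied by the strongest. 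Once \eqref{eq:capacity_outer_1}--\eqref{eq:capacity_outer_2} are established, dividing by $\log P$ and letting $P\to\infty$ produces a polyhedral GDoF outer bound that matches $\mathcal{P}_{\bm{\mathrm{id}}}$ from Theorem \ref{theorem:TIN_region}, thereby delivering Theorem \ref{theorem:TIN_optimality} and the constant-gap claim of Remark \ref{remark:constant_gap}.
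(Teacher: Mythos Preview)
Your treatment of \eqref{eq:capacity_outer_1} is correct and matches the paper. The gap is in the cyclic bound, specifically in your choice of genie signal.

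You set $S_{i_{j}}(t)=\sum_{s=1}^{l_{i_{j}}}\sqrt{P^{\alpha_{i_{j}i_{j-1}}^{[s]}}}e^{j\phi_{j,s}}X_{i_{j}}^{[s]}(t)+N_{i_{j}}(t)$, i.e.\ you mirror the \emph{actual} interference seen at Rx-$i_{j-1}$. This does make the telescoping exact, but it destroys the step you gloss over as ``a Gaussian-maximized residual per cell that evaluates to $\log(1+\cdots P^{\alpha_{i_{j}i_{j}}^{[l_{i_{j}}]}-\alpha_{i_{j}i_{j-1}}^{[l_{i_{j}}]}})$.'' The conditional variance $\sigma^{2}_{Y_{i_{j}}|S_{i_{j}}}$ is controlled by how well the genie's linear combination is \emph{aligned} with the direct-link linear combination inside $Y_{i_{j}}$; your genie uses cross-link weights, which can be arbitrarily misaligned. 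Concretely, with $l_{i_{j}}=2$ users the Gaussian residual of the signal part equals $(a_{1}b_{2}-a_{2}b_{1})^{2}/(b_{1}^{2}+b_{2}^{2})$ where $a_{s}^{2}=P^{\alpha_{i_{j}i_{j}}^{[s]}}$ and $b_{s}^{2}=P^{\alpha_{i_{j}i_{j-1}}^{[s]}}$. Take $\alpha_{i_{j}i_{j}}^{[1]}=1$, $\alpha_{i_{j}i_{j}}^{[2]}=2$, $\alpha_{i_{j}i_{j-1}}^{[1]}=0.8$, $\alpha_{i_{j}i_{j-1}}^{[2]}=0.3$ (one can complete this to a network satisfying \eqref{eq:TIN_condition_2}--\eqref{eq:TIN_condition_1}). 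The target exponent is $2-0.3=1.7$, but the residual above scales as $P^{2}$, so your bound on $h(Y_{i_{j}}^{n}\mid S_{i_{j}}^{n})$ is too loose by a polynomial-in-$P$ factor and the GDoF outer bound does not close. The constants you hope to ``absorb'' into $(l_{i_{j}}-1)\log(l_{i_{j}})$ are not constants at all.

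The paper resolves the tension between telescoping and conditional-entropy control by \emph{not} mirroring the interference. It first partitions $\langle l_{i_{j}}\rangle$ into $\langle l_{i_{j}}\rangle'$ and $\langle l_{i_{j}}\rangle''$ (Lemma~\ref{lemma:user_partition}) and deletes the cross-links of users in $\langle l_{i_{j}}\rangle''$. The genie is then a \emph{single-scalar} multiple $g_{i_{j}}=h_{i_{j}i_{j-1}}^{[l_{i_{j}}]}/h_{i_{j}i_{j}}^{[l_{i_{j}}]}$ of the direct-signal combination over $\langle l_{i_{j}}\rangle'$ only. This alignment with $Y_{i_{j}}$ forces $\sigma^{2}_{Y|S}\lesssim P^{\alpha_{i_{j}i_{j}}^{[l_{i_{j}}]}-\alpha_{i_{j}i_{j-1}}^{[l_{i_{j}}]}}$ (the $\langle l_{i_{j}}\rangle''$ users and the incoming interference are dominated termwise via \eqref{eq:TIN_condition_2}--\eqref{eq:TIN_condition_1}). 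The price is that $S_{i_{j}}^{n}\neq U_{i_{j}}^{n}$, so telescoping leaves a residual $h(S_{i_{j}}^{n})-h(U_{i_{j}}^{n})$; this is where Lemma~\ref{lemma:entropy_diff} enters, bounding the mismatch by $n(l_{i_{j}}-1)\log(l_{i_{j}})$, with its hypothesis \eqref{eq:lemma_entropy_diff_cond_1} supplied by Lemma~\ref{lemma:user_partition} as a consequence of \eqref{eq:TIN_condition_2}. Your proposal correctly senses that \eqref{eq:TIN_condition_2} is the intra-cell ingredient, but the mechanism is this partition-plus-entropy-difference lemma, not a uniform ``useful-genie'' inequality for your $S_{i_{j}}$.
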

As noted in Remark \ref{remark:constant_gap}, the above result leads to a constant-gap characterization of
the capacity region when the TIN-optimality conditions hold.
The remainder of this section is dedicated to proving Theorem
\ref{theorem:outer_bound}.
To this end, we start by presenting two instrumental lemmas.
\subsection{Useful Lemmas}
The first lemma is a generalization of \cite[Lem. 8]{Gherekhloo2016} to an arbitrary number of input sequences.
\begin{lemma}
\label{lemma:entropy_diff}
Let $X_{1}^{n},\ldots,X_{l}^{n}$ be $l$ independent random sequences (input sequences) of length $n$ each, where
$X_{i}^{n} = X_{i}(1),\ldots,X_{i}(n)$, $i \in \langle l \rangle$, satisfies the power constraint
$\frac{1}{n}\sum_{t=1}^{n}\E \big[|X_{i}(t)|^{2}\big] \leq P_{i}$.
Moreover, let $Y_{a}^{n}$ and $Y_{a}^{n}$ be noisy output sequences given by
\begin{align}
Y_{a}(t) & = a_{1}X_{1}(t) + a_{2}X_{2}(t) + \cdots +  a_{l}X_{l}(t) + Z_{a}(t)  \\
Y_{b}(t) & = b_{1}X_{1}(t) + b_{2}X_{2}(t) + \cdots +  b_{l}X_{l}(t) + Z_{b}(t)
\end{align}
where $a_{i},b_{i} \in \mathbb{C}$, $\forall i \in \langle l \rangle$, are constants and $Z_{a}(t),Z_{b}(t) \sim\mathcal{N}_{\mathbb{C}}(0,1)$ are AWGN terms.
Given that
\begin{equation}
\label{eq:lemma_entropy_diff_cond_1}
1 \leq P_{i}|a_{i}|^{2} \leq \frac{ P_{i}|b_{i}|^{2} }{ P_{i+1}|b_{i+1}|^{2}  }, \; \forall i \in \langle l \rangle
\end{equation}
where $P_{l+1}|b_{l+1}|^{2} = 1$, the difference between the output differential entropies is bounded as
\begin{equation}
\label{eq:lemma_entropy_diff}
h(Y_{a}^{n}) - h(Y_{b}^{n})  \leq n(l-1) \log(l).
\end{equation}
\end{lemma}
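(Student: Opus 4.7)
The plan is to induct on the number of input sequences $l$, using the chain-rule decomposition
\begin{equation}
\nonumber
h(Y_a^n)-h(Y_b^n)=\bigl[I(X_1^n;Y_a^n)-I(X_1^n;Y_b^n)\bigr]+\bigl[h(\tilde Y_a^n)-h(\tilde Y_b^n)\bigr],
\end{equation}
where $\tilde Y_a^n\triangleq Y_a^n-a_1X_1^n=\sum_{i\geq 2}a_iX_i^n+Z_a^n$ and analogously for $\tilde Y_b^n$. This identity follows from $h(Y_{a/b}^n)-h(\tilde Y_{a/b}^n)=I(X_1^n;Y_{a/b}^n)$, which holds because $X_1^n$ is independent of $(X_2^n,\ldots,X_l^n,Z_a^n,Z_b^n)$. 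The residual bracket is itself an instance of the lemma for $l-1$ inputs: the hypothesis \eqref{eq:lemma_entropy_diff_cond_1} restricted to $i\geq 2$ carries over verbatim (with $P_{l+1}|b_{l+1}|^2=1$ still serving as the boundary), so the inductive hypothesis upper-bounds it by $n(l-2)\log(l-1)\leq n(l-2)\log l$.

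Summing the two contributions, it suffices to show $I(X_1^n;Y_a^n)-I(X_1^n;Y_b^n)\leq n\log l$. Telescoping \eqref{eq:lemma_entropy_diff_cond_1} gives the monotone chain
\begin{equation}
\nonumber
P_1|b_1|^2\geq P_2|b_2|^2\geq\cdots\geq P_l|b_l|^2\geq 1,\qquad |b_i|\geq|a_i|\ \forall i,
\end{equation}
so that $Y_b$ observes $X_1$ at an SINR of at least $P_1|a_1|^2/(l-1)$ (using $\sum_{i\geq 2}P_i|b_i|^2\leq (l-1)P_2|b_2|^2$ together with the hypothesis at $i=1$), while $Y_a$ observes $X_1$ at an SINR of at most $P_1|a_1|^2$. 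The plan is to upgrade this operational SINR ordering to an entropy-gap inequality in the spirit of \cite[Lem.~8]{Gherekhloo2016}: realise $Y_a^n$ in distribution as a scaled copy of $Y_b^n$ corrupted by an independent Gaussian (legitimate because $|a_1|\leq|b_1|$ and the noise variances match up to that scaling), absorb the interference mismatch $\sum_{i\geq 2}(a_i-(a_1/b_1)b_i)X_i^n$ into an effective noise whose per-symbol variance is controlled via the monotone chain, and then invoke Gaussian maximum entropy together with Jensen's inequality (concavity of $\log$) to pin the single-step loss at $\log l$.

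The base case $l=1$ is essentially immediate: the hypothesis collapses to $1\leq P_1|a_1|^2\leq P_1|b_1|^2$, whence $|a_1|\leq|b_1|$ and $Y_a^n$ is a stochastically degraded version of $Y_b^n$ (scale $Y_b^n$ by $a_1/b_1$ and add an independent Gaussian of variance $1-|a_1/b_1|^2$), so $h(Y_a^n)\leq h(Y_b^n)$, matching $n\cdot 0\cdot\log 1=0$. The hardest part of the argument is the absorption step in the inductive case: since the residual interference $\sum_{i\geq 2}(a_i-(a_1/b_1)b_i)X_i^n$ has an arbitrary non-Gaussian distribution, its second moment must be bounded uniformly. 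I expect that a constant multiple of $l\cdot(1+|a_1/b_1|^2)$ suffices, obtained by applying the AM--GM inequality to the products $P_i|a_i|^2\geq 1$ together with the chain $P_i|b_i|^2\geq P_i|a_i|^2\cdot P_{i+1}|b_{i+1}|^2$; this is precisely what produces the clean per-step constant $\log l$ and, after telescoping, the claimed $(l-1)\log l$.
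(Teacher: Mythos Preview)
Your inductive decomposition is valid, and the claim $I(X_1^n;Y_a^n)-I(X_1^n;Y_b^n)\leq n\log l$ that drives the induction is in fact true. The problem is your proposed proof of that claim. The ``absorption'' step fails on two counts.

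First, the variance bound you assert for the interference mismatch is false. The hypothesis \eqref{eq:lemma_entropy_diff_cond_1} gives only a \emph{lower} bound $P_i|a_i|^2\geq 1$; there is no uniform upper bound on $P_i|a_i|^2$ in terms of $l$. Concretely, take $l=2$, $P_1=P_2=1$, $|b_1|^2=10^6$, $|b_2|^2=10^3$, $|a_1|^2=|a_2|^2=10^3$: all conditions hold with equality, yet $P_2\,|a_2-(a_1/b_1)b_2|^2\approx 960$, not $O(l)$. So the mismatch cannot be absorbed into a noise of bounded variance.

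Second, even if the variance were bounded, you would still need to convert a second-moment bound on a \emph{non-Gaussian} additive term into a mutual-information gap for a \emph{non-Gaussian} $X_1^n$. ``Gaussian maximum entropy plus Jensen'' does not deliver this: the worst-case-noise lemma goes in the wrong direction here (Gaussian noise \emph{minimises} $I(X;\cdot)$), and you are not in a degraded-channel setting because the mismatch depends on the same $X_2,\ldots,X_l$ that already sit inside $Y_b$.

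The paper proves exactly the inequality you need, but by a different manoeuvre: scale $Y_b$ by $1/(\sqrt{P_2}\,b_2)$, which is legal as a degradation because $P_2|b_2|^2\geq 1$; then the key hypothesis $|a_1|^2\leq |b_1|^2/(P_2|b_2|^2)$ becomes a degraded-broadcast comparison $I(X_1^n;a_1X_1^n+Z^n)\leq I(X_1^n;\tfrac{b_1}{\sqrt{P_2}b_2}X_1^n+Z^n)$, and what is left over is a residual mutual information $I(X_2^n,\ldots,X_l^n;\cdot)$ that \emph{is} bounded by $n\log l$ via Gaussian maximum entropy. If you plug this into your induction you recover the result, but at that point the argument is the paper's --- the scaling-plus-degradedness trick is the whole idea, and your ``absorb the mismatch'' heuristic is not a substitute for it.
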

The next lemma gives a variant of the TIN-optimality condition in
\eqref{eq:TIN_condition_2}.
\begin{lemma}
\label{lemma:user_partition}
Consider $i,j \in \langle K \rangle$, such that $i \neq j$, and the set $\langle l_{i} \rangle$, where $l_{i} \in \langle L_{i} \rangle$.
Moreover, consider the partition of $\langle l_{i} \rangle$ given by
\begin{align}
\label{eq:partition_prime_2}
\langle l_{i} \rangle''_{j} & \triangleq \Big\{ s_{i}'' \in \langle l_{i} - 1\rangle :
\alpha_{ii}^{[l_{i}]} - \alpha_{ij}^{[l_{i}]} \geq \alpha_{i i}^{[s_{i}'']}  \Big\} \\
\label{eq:partition_prime_1}
\langle l_{i} \rangle'_{j} & \triangleq \langle l_{i} \rangle \setminus \langle l_{i} \rangle''_{j}.
\end{align}
Given that the TIN condition in \eqref{eq:TIN_condition_2} holds, then we have
\begin{equation}
\alpha_{ii}^{[l_{i}]} - \alpha_{ij}^{[l_{i}]} \geq  \alpha_{ii}^{[s_{i}']} - \alpha_{ij}^{[s_{i}']} + \alpha_{ij}^{[l_{i}']}, \; \forall
s_{i}',l_{i}' \in \langle l_{i} \rangle'_{j}\setminus\{l_{i}\}, \; s_{i}' < l_{i}'.
\end{equation}
\end{lemma}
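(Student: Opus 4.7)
The plan is to exploit the two-case equivalent form \eqref{eq:TIN_condition_2_2} of the TIN-optimality condition, which asserts that for every pair $l_i' < l_i$ at least one of a \emph{coarse} bound $\alpha_{ii}^{[l_i]} - \alpha_{ij}^{[l_i]} \geq \alpha_{ii}^{[l_i']}$ and a \emph{refined} bound $\alpha_{ii}^{[l_i]} - \alpha_{ij}^{[l_i]} \geq \alpha_{ii}^{[l_i']} - \alpha_{ij}^{[l_i']} + \alpha_{ij}^{[l_i]}$ must hold. By the very definition of the partition in \eqref{eq:partition_prime_2}--\eqref{eq:partition_prime_1}, every index $s \in \langle l_i \rangle'_j \setminus \{l_i\}$ is an index strictly below $l_i$ for which the coarse bound (with strong user $l_i$) fails, so the refined bound is forced for each such $s$.

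Fix $s_i' < l_i'$ with both in $\langle l_i \rangle'_j \setminus \{l_i\}$. First I would apply the forced refined bound to the pair $(s_i', l_i)$, which yields
\[
\alpha_{ii}^{[l_i]} - \alpha_{ij}^{[l_i]} \;\geq\; \alpha_{ii}^{[s_i']} - \alpha_{ij}^{[s_i']} + \alpha_{ij}^{[l_i]}. \qquad (\dagger)
\]
This is essentially the target, except that $\alpha_{ij}^{[l_i]}$ sits in the place of $\alpha_{ij}^{[l_i']}$. The remainder of the argument closes this gap by a short case split on which branch of \eqref{eq:TIN_condition_2_2} holds for the pair $(s_i', l_i')$.

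In the refined subcase, $\alpha_{ii}^{[l_i']} - \alpha_{ij}^{[l_i']} \geq \alpha_{ii}^{[s_i']} - \alpha_{ij}^{[s_i']} + \alpha_{ij}^{[l_i']}$, which I would combine with the refined bound for $(l_i', l_i)$ (again forced, because $l_i' \in \langle l_i \rangle'_j \setminus \{l_i\}$) by stacking the two inequalities to obtain the conclusion with $\alpha_{ij}^{[l_i']}$ on the right-hand side. In the coarse subcase, $\alpha_{ii}^{[l_i']} - \alpha_{ij}^{[l_i']} \geq \alpha_{ii}^{[s_i']}$ combined with $\alpha_{ii}^{[l_i']} \leq \alpha_{ii}^{[l_i]}$ (from \eqref{eq:strength_order}) gives $\alpha_{ij}^{[l_i']} \leq \alpha_{ii}^{[l_i]} - \alpha_{ii}^{[s_i']}$; but the membership $s_i' \in \langle l_i \rangle'_j \setminus \{l_i\}$ simultaneously gives $\alpha_{ii}^{[l_i]} - \alpha_{ii}^{[s_i']} < \alpha_{ij}^{[l_i]}$, so $\alpha_{ij}^{[l_i']} < \alpha_{ij}^{[l_i]}$. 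Plugging this strict comparison into $(\dagger)$ yields the claim.

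The main obstacle is the coarse subcase: the cross-link levels $\alpha_{ij}^{[\cdot]}$ carry no a priori monotonicity in the user index, so one cannot appeal to \eqref{eq:strength_order} to compare $\alpha_{ij}^{[l_i']}$ and $\alpha_{ij}^{[l_i]}$ directly. The needed comparison must be teased out indirectly by combining the defining inequality of $\langle l_i \rangle'_j$ with the coarse branch assumed in this subcase, as sketched above. Once this comparison is in hand the remainder is purely algebraic manipulation of the two branches of \eqref{eq:TIN_condition_2_2}.
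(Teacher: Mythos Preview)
Your proposal is correct and follows essentially the same approach as the paper: both arguments first force the refined branch of \eqref{eq:TIN_condition_2_2} for every pair $(s,l_i)$ with $s\in\langle l_i\rangle'_j\setminus\{l_i\}$ using the definition of the partition, and then handle the intermediate pair $(s_i',l_i')$ by a case split on which branch of \eqref{eq:TIN_condition_2_2} holds for it.

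The only organisational difference is how the coarse subcase is dispatched. The paper combines the coarse bound for $(s_i',l_i')$ with the forced refined bound for $(l_i',l_i)$ to reach $\alpha_{ii}^{[l_i]}-\alpha_{ij}^{[l_i]}\geq\alpha_{ii}^{[s_i']}$, contradicting $s_i'\in\langle l_i\rangle'_j$; hence the coarse subcase is shown to be \emph{impossible}, and the refined inequality in fact holds for every pair inside $\langle l_i\rangle'_j$. You instead keep the coarse subcase alive and resolve it directly, using the direct-link order \eqref{eq:strength_order} together with the membership of $s_i'$ to deduce $\alpha_{ij}^{[l_i']}<\alpha_{ij}^{[l_i]}$ and then substitute into $(\dagger)$. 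Both routes are valid; the paper's buys a slightly stronger intermediate statement (the refined branch holds throughout $\langle l_i\rangle'_j$), while yours avoids the contradiction detour at the cost of invoking \eqref{eq:strength_order}.
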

The proofs of Lemma \ref{lemma:entropy_diff}
and Lemma \ref{lemma:user_partition}
are given in Appendix \ref{appendix:proof_lemma:entropy_diff}
and Appendix  \ref{appendix:proof_lemma:user_partition}, respectively.
\subsection{Proof of Theorem \ref{theorem:outer_bound}}
\label{subsec:proof_theorem_outer_bound}
In the following, we use the notation of the channel in \eqref{eq:system model}
and the channel in \eqref{eq:system model 2} interchangeably for convenience.
While doing so, we assume that
\begin{equation}
\label{eq:converse_ch_gains}
1 \leq P_{k}^{[l_{k}]} \big|h_{ki}^{[l_{k}]}\big|^{2}  = P^{\alpha_{ki}^{[l_{k}]}}, \; \forall (l_{k},k) \in \mathcal{K}, i \in \langle K \rangle.
\end{equation}
For each cell $i$, \eqref{eq:capacity_outer_1} is a cut-set upper bound which
follows from the MAC capacity region \cite{Cover2012} and the order of channel strength levels in \eqref{eq:strength_order}. Hence, we focus on the cyclic bounds in \eqref{eq:capacity_outer_2}.
A modulo operation is implicitly used on receiver indices such that $i_{0} = i_{m}$ and $i_{m+1} = i_{1}$.

An arbitrary bound in \eqref{eq:capacity_outer_2} is identified by the two sequences
$(i_{1},\ldots,i_{m}) \in \Sigma\big(\langle K \rangle\big)$ and $(l_{i_{1}},\ldots,l_{i_{m}}) \in \langle L_{i_{1}} \rangle \times \cdots \times
\langle L_{i_{m}} \rangle $, describing the cyclic order of cells and the number of participating users from each cell, respectively.
The corresponding set of participating users is given by $\left\{(s_{i_{j}},i_{j}): s_{i_{j}} \in
\langle l_{i_{j}} \rangle,j \in \langle m \rangle \right\}$.
For every $j \in \langle m \rangle$, we partition $\langle l_{i_{j}} \rangle$ as in Lemma \ref{lemma:user_partition} into
\begin{align}
\label{eq:l_i_j_prime_subset_2}
\langle l_{i_{j}} \rangle'' & \triangleq \Big\{ s_{i_{j}}'' \in \langle l_{i_{j}} - 1 \rangle : \alpha_{i_{j}i_{j}}^{[l_{i_{j}}]} - \alpha_{i_{j}i_{j-1}}^{[l_{i_{j}}]} \geq  \alpha_{i_{j} i_{j}}^{[s_{i_{j}}'']}  \Big\}\\
\label{eq:l_i_j_prime_subset}
\langle l_{i_{j}} \rangle' & \triangleq \langle l_{i_{j}} \rangle \setminus \langle l_{i_{j}} \rangle''.
\end{align}
where the subscript $i_{j-1}$ is omitted from the subsets $\langle l_{i_{j}} \rangle''$ and $\langle l_{i_{j}} \rangle'$  for notational brevity.
Next, we go through the following steps:
\begin{itemize}
\item Eliminate all non-participating transmitters $(l_{i},i) \in \mathcal{K} \setminus \left\{(s_{i_{j}},i_{j}): s_{i_{j}} \in
\langle l_{i_{j}} \rangle,j \in \langle m \rangle \right\}$, all non-participating receivers
$i \in \langle K \rangle \setminus \{i_{1},\ldots,i_{m} \}$ and the corresponding messages.
\item For the remaining network, eliminate all interfering links except for links from
Tx-$(s_{i_{j}}',i_{j})$ to Rx-$i_{j-1}$, for every $ j \in \langle m \rangle$ and $s_{i_{j}}' \in \langle l_{i_{j}} \rangle'$.
\end{itemize}
We end up with a partially connected cyclic IMAC with input-output relationship given by
\begin{equation}
\label{eq:system model_cyclic}
Y_{i_{j}}(t) = \sum_{s_{i_{j}} \in \langle l_{i_{j}} \rangle} h_{i_{j}i_{j}}^{[s_{i_{j}}]} \tilde{X}_{i_{j}}^{[s_{i_{j}}]}(t) +
\overbrace{
\sum_{s_{i_{j+1}} \in  \langle l_{i_{j+1}} \rangle'}  \! \! \!  h_{i_{j+1}i_{j}}^{[s_{i_{j+1}}]} \tilde{X}_{i_{j+1}}^{[s_{i_{j+1}}]}(t) + Z_{i_{j}}(t)}^{U_{i_{j+1}}(t)}
\end{equation}
where $U_{i_{j+1}}(t)$ denotes the interference plus noise term at Rx-$i_{j}$.
Since none of the above two steps hurts the rates of the remaining messages, the channel in \eqref{eq:system model_cyclic}
is used for the outer bound.

Next, we define the side information intended to Rx-$i_{j}$ as
\begin{equation}
\label{eq:side_information_cyclic}
S_{i_{j}}(t) = g_{i_{j}} \sum_{s_{i_{j}} \in \langle l_{i_{j}} \rangle'}    h_{i_{j}i_{j}}^{[s_{i_{j}}]} \tilde{X}_{i_{j}}^{[s_{i_{j}}]}(t)
+ Z_{i_{j-1}}(t)
\end{equation}
where the gain $g_{i_{j}}$ is given by
\begin{equation}
g_{i_{j}} = \frac{h_{i_{j}i_{j-1}}^{[l_{i_{j}}]}}{h_{i_{j}i_{j}}^{[l_{i_{j}}]}}.
\end{equation}
The side information sequence $S_{i_{j}}^{n}$ is given to Rx-$i_{j}$ through a genie, which cannot hurt the rates.
Using Fano's inequality, we bound the sum rate of participating users associated with cell $i_{j}$ as
\begin{align}
\nonumber
n \sum_{s_{i_{j}} \in \langle l_{i_{j}} \rangle} R_{i_{j}}^{[s_{i_{j}}]} -  n \epsilon & \leq I
\big(W_{i_{j}}^{[1:l_{i_{j}}]} ; Y_{i_{j}}^{n}, S_{i_{j}}^{n}   \big) \\
\nonumber
&  = I\big(W_{i_{j}}^{[1:l_{i_{j}}]} ; S_{i_{j}}^{n}   \big) +
I\big(W_{i_{j}}^{[1:l_{i_{j}}]} ; Y_{i_{j}}^{n} | S_{i_{j}}^{n}   \big) \\
\nonumber
& = h\big(S_{i_{j}}^{n}\big) - h\big(S_{i_{j}}^{n}  | W_{i_{j}}^{[1:l_{i_{j}}]}\big) +
h\big(Y_{i_{j}}^{n}  | S_{i_{j}}^{n}\big) - h\big( Y_{i_{j}}^{n} |  S_{i_{j}}^{n} , W_{i_{j}}^{[1:l_{i_{j}}]} \big) \\
\label{eq:Fano_single_cell}
& = h\big(S_{i_{j}}^{n}\big) - h\big(Z_{i_{j-1}}^{n} \big) +
h\big(Y_{i_{j}}^{n}  | S_{i_{j}}^{n}\big) - h\big( U_{i_{j+1}}^{n} \big)
\end{align}
where $W_{i_{j}}^{[1:l_{i_{j}}]} \triangleq W_{i_{j}}^{[1]},\ldots,W_{i_{j}}^{[l_{i_{j}}]}$.
Taking the sum of bounds in  \eqref{eq:Fano_single_cell} for all $j \in \langle m \rangle$, we obtain a bound on the sum rate of all participating users as
\begin{align}
\nonumber
n \sum_{j \in \langle m \rangle}\sum_{s_{i_{j}} \in \langle l_{i_{j}} \rangle} R_{i_{j}}^{[s_{i_{j}}]} - mn \epsilon  & \leq
 \sum_{j \in \langle m \rangle} \left[ h\big(S_{i_{j}}^{n}\big)  - h\big(U_{i_{j}}^{n}\big) +
  h\big(Y_{i_{j}}^{n}  | S_{i_{j}}^{n}\big) - h\big(Z_{i_{j}}^{n} \big) \right] \\
 \label{eq:sum_rate_bound_1}
  & \leq m n (l_{i_{j}} - 1) \log(l_{i_{j}}) + \sum_{j \in \langle m \rangle} \left[ h\big(Y_{i_{j}}^{n}  | S_{i_{j}}^{n}\big) - h\big(Z_{i_{j}}^{n} \big) \right]
\end{align}
where \eqref{eq:sum_rate_bound_1} follows by bounding each $h\big(S_{i_{j}}^{n}\big)  - h\big(U_{i_{j}}^{n}\big)$ as explained next.
For all $i_{j}$ with $\langle l_{i_{j}} \rangle' = \{ l_{i_{j}} \}$, it is readily seen that $h\big(S_{i_{j}}^{n}\big)  - h\big(U_{i_{j}}^{n}\big) = 0$
as $S_{i_{j}}^{n} = U_{i_{j}}^{n}$.
Otherwise, for $i_{j}$ such that $\langle l_{i_{j}} \rangle'\setminus\{ l_{i_{j}} \} \neq \emptyset$, we apply Lemma \ref{lemma:entropy_diff}
by taking $S_{i_{j}}^{n}$ and $U_{i_{j}}^{n}$ as the corresponding output sequences.
It remains to verify that the condition in \eqref{eq:lemma_entropy_diff_cond_1} holds.
From Lemma \ref{lemma:user_partition}, the following condition holds
\begin{equation}
\label{eq:condition_TIN_S3_1}
\alpha_{i_{j}i_{j}}^{[l_{i_{j}}]} - \alpha_{i_{j}i_{j-1}}^{[l_{i_{j}}]} \geq  \alpha_{i_{j}i_{j}}^{[s_{i_{j}}]} - \alpha_{i_{j}i_{j-1}}^{[s_{i_{j}}]} + \alpha_{i_{j}i_{j-1}}^{[l_{i_{j}}']}, \; \forall s_{i_{j}},l_{i_{j}}' \in \langle l_{i_{j}} \rangle', \; s_{i_{j}} < l_{i_{j}}'.
\end{equation}
Moreover, from the definition of the partition in \eqref{eq:l_i_j_prime_subset_2} and \eqref{eq:l_i_j_prime_subset}, we have
\begin{equation}
\label{eq:condition_TIN_S3_1_2}
\alpha_{i_{j}i_{j}}^{[l_{i_{j}}]} - \alpha_{i_{j}i_{j-1}}^{[l_{i_{j}}]} <  \alpha_{i_{j}i_{j}}^{[s_{i_{j}}]},
\; \forall s_{i_{j}} \in \langle l_{i_{j}} \rangle' \setminus \{l_{i_{j}}\}.
\end{equation}
From \eqref{eq:converse_ch_gains}, the conditions in \eqref{eq:condition_TIN_S3_1} and \eqref{eq:condition_TIN_S3_1_2} can
be rewritten as
\begin{equation}
\label{eq:condition_TIN_S3}
0 <
|g_{i_{j}} |^{2}
P_{i_{j}}^{[s_{i_{j}}]}\big|h_{i_{j}i_{j}}^{[s_{i_{j}}]} \big|^{2}
 \leq
\frac{P_{i_{j}}^{[s_{i_{j}}]}\big|h_{i_{j}i_{j-1}}^{[s_{i_{j}}]} \big|^{2}}{P_{i_{j}}^{[l_{i_{j}}']}\big|h_{i_{j}i_{j-1}}^{[l_{i_{j}}']} \big|^{2}}
, \;  \forall s_{i_{j}},l_{i_{j}}' \in \langle l_{i_{j}} \rangle', \; s_{i_{j}} < l_{i_{j}}'.
\end{equation}
Note that \eqref{eq:condition_TIN_S3} implies the condition in \eqref{eq:lemma_entropy_diff_cond_1} of  Lemma \ref{lemma:entropy_diff},
from which we obtain the upper bound $h\big(S_{i_{j}}^{n}\big)  - h\big(U_{i_{j}}^{n}\big) \leq
n (l_{i_{j}} - 1) \log(l_{i_{j}})$, which holds for all $i_{j}$, $j \in \langle m \rangle$.

Now we turn our attention to $h\big(Y_{i_{j}}^{n}  | S_{i_{j}}^{n}\big) - h\big(Z_{i_{j}}^{n} \big)$ in \eqref{eq:sum_rate_bound_1}. For any $j \in \langle m \rangle$, we have
\begin{align}
\nonumber
h\big(Y_{i_{j}}^{n}  | S_{i_{j}}^{n}\big) - h\big(Z_{i_{j}}^{n} \big) & \leq
\sum_{t \in \langle n \rangle} \big[ h\big(Y_{i_{j}}(t)  | S_{i_{j}}(t)\big) - h\big(Z_{i_{j}}(t) \big)  \big] \\
\label{eq:conditional_entropy_Gaussian}
& \leq n h\big(Y_{i_{j}}^{\mathrm{G}}  | S_{i_{j}}^{\mathrm{G}}\big) - nh\big(Z_{i_{j}}\big) \\
\label{eq:conditional_entropy_Gaussian_2}
& = n \log \big( \sigma^{2}_{Y_{i_{j}}^{\mathrm{G}}  | S_{i_{j}}^{\mathrm{G}}} \big)
\end{align}
where $Y_{i_{j}}^{\mathrm{G}}$ and $S_{i_{j}}^{\mathrm{G}}$ are the outputs in \eqref{eq:system model_cyclic} and \eqref{eq:side_information_cyclic} respectively for a single use of the channel when the corresponding inputs are drawn from independent Gaussian distributions as
$\tilde{X}_{i}^{[l_{i}]}\sim \mathcal{N}_{\mathbb{C}}\big(0,P_{i}^{[l_{i}]}\big)$.
The inequality in \eqref{eq:conditional_entropy_Gaussian} follows by employing \cite[Lem. 1]{Annapureddy2009}
and the i.i.d. noise assumption, where $t$ is omitted from $Z_{i_{j}}$ for brevity.
The variance in \eqref{eq:conditional_entropy_Gaussian_2} is given by
\begin{equation}
\label{eq:cond_var}
\sigma^{2}_{Y_{i_{j}}^{\mathrm{G}}  | S_{i_{j}}^{\mathrm{G}}} \triangleq
\E \big[ |Y_{i_{j}}^{\mathrm{G}}|^{2} \big] - \E \big[ Y_{i_{j}}^{\mathrm{G}}S_{i_{j}}^{\mathrm{G}\ast} \big]
\big(\E \big[ |S_{i_{j}}^{\mathrm{G}}|^{2} \big]\big)^{-1}
\E \big[ S_{i_{j}}^{\mathrm{G}}Y_{i_{j}}^{\mathrm{G}\ast} \big].
\end{equation}
Next, we calculate each term in \eqref{eq:cond_var}.
We have
\begin{equation}
\E \big[ |Y_{i_{j}}^{\mathrm{G}}|^{2} \big]  =
\sum_{s_{i_{j}}' \in \langle l_{i_{j}} \rangle'}  \big| h_{i_{j}i_{j}}^{[s_{i_{j}}']} \big|^{2} P_{i_{j}}^{[s_{i_{j}}']} + \!
\sum_{s_{i_{j}}'' \in \langle l_{i_{j}} \rangle''}  \big| h_{i_{j}i_{j}}^{[s_{i_{j}}'']} \big|^{2} P_{i_{j}}^{[s_{i_{j}}'']} + \! \!
\sum_{s_{i_{j+1}} \in  \langle l_{i_{j+1}} \rangle'}   \big| h_{i_{j+1}i_{j}}^{[s_{i_{j+1}}]} \big|^{2} P_{i_{j+1}}^{[s_{i_{j+1}}]} + 1
\end{equation}
\begin{equation}
\E \big[ Y_{i_{j}}^{\mathrm{G}}S_{i_{j}}^{\mathrm{G}\ast} \big] = \Big( \E \big[ S_{i_{j}}^{\mathrm{G}}Y_{i_{j}}^{\mathrm{G}\ast} \big] \Big)^{\ast} =  g_{i_{j}}^{\ast}   \sum_{s_{i_{j}} \in \langle l_{i_{j}} \rangle' } \big| h_{i_{j}i_{j}}^{[s_{i_{j}}]} \big|^{2} P_{i_{j}}^{[s_{i_{j}}]}
\end{equation}
\begin{equation}
\E \big[ |S_{i_{j}}^{\mathrm{G}}|^{2} \big] = | g_{i_{j}} |^{2} \sum_{s_{i_{j}} \in \langle l_{i_{j}} \rangle' } \big| h_{i_{j}i_{j}}^{[s_{i_{j}}]} \big|^{2} P_{i_{j}}^{[s_{i_{j}}]} + 1.
\end{equation}
From the above, we obtain an upper bound for $\sigma^{2}_{Y_{i_{j}}^{\mathrm{G}}  | S_{i_{j}}^{\mathrm{G}}}$ as
\begin{align}
\nonumber
\sigma^{2}_{Y_{i_{j}}^{\mathrm{G}}  | S_{i_{j}}^{\mathrm{G}}} & =
1 + \sum_{s_{i_{j}}'' \in \langle l_{i_{j}} \rangle''} \! \! \big| h_{i_{j}i_{j}}^{[s_{i_{j}}'']} \big|^{2} P_{i_{j}}^{[s_{i_{j}}'']} + \! \! \! \! \! \!\sum_{s_{i_{j+1}} \in  \langle l_{i_{j+1}} \rangle'}  \! \!  \big| h_{i_{j+1}i_{j}}^{[s_{i_{j+1}}]} \big|^{2} P_{i_{j+1}}^{[s_{i_{j+1}}]}
+ \frac{\sum_{s_{i_{j}} \in \langle l_{i_{j}} \rangle' } \big| h_{i_{j}i_{j}}^{[s_{i_{j}}]} \big|^{2} P_{i_{j}}^{[s_{i_{j}}]}}{
| g_{i_{j}} |^{2}
\sum_{s_{i_{j}} \in \langle l_{i_{j}} \rangle' } \big| h_{i_{j}i_{j}}^{[s_{i_{j}}]} \big|^{2} P_{i_{j}}^{[s_{i_{j}}]} + 1} \\
\nonumber
& \leq 1 + \sum_{s_{i_{j}}'' \in \langle l_{i_{j}} \rangle''} \! \! \big| h_{i_{j}i_{j}}^{[s_{i_{j}}'']} \big|^{2} P_{i_{j}}^{[s_{i_{j}}'']} + \! \! \! \! \! \!\sum_{s_{i_{j+1}} \in  \langle l_{i_{j+1}} \rangle'}  \! \!  \big| h_{i_{j+1}i_{j}}^{[s_{i_{j+1}}]} \big|^{2} P_{i_{j+1}}^{[s_{i_{j+1}}]}
+ \big| \langle l_{i_{j}} \rangle' \big| \frac{\big| h_{i_{j}i_{j}}^{[l_{i_{j}}]} \big|^{2} P_{i_{j}}^{[l_{i_{j}}]}}
{\big| h_{i_{j}i_{j-1}}^{[l_{i_{j}}]} \big|^{2} P_{i_{j}}^{[l_{i_{j}}]}} \\
\label{eq:cond_var_UB}
& \leq 1 + (l_{i_{j}} + l_{i_{j+1}}) \frac{\big| h_{i_{j}i_{j}}^{[l_{i_{j}}]} \big|^{2} P_{i_{j}}^{[l_{i_{j}}]}}
{\big| h_{i_{j}i_{j-1}}^{[l_{i_{j}}]} \big|^{2} P_{i_{j}}^{[l_{i_{j}}]}}
\end{align}
where the inequality in \eqref{eq:cond_var_UB} follows by employing the TIN-optimality conditions in \eqref{eq:TIN_condition_2} and \eqref{eq:TIN_condition_1}.
In particular, the condition in \eqref{eq:TIN_condition_1} gives us
\begin{equation}
\alpha_{i_{j}i_{j}}^{[l_{i_{j}}]} \geq   \alpha_{i_{j}i_{j-1}}^{[l_{i_{j}}]}  +  \alpha_{i_{j+1}i_{j}}^{[s_{i_{j+1}}]}
\; \Leftrightarrow \;
\frac{\big| h_{i_{j}i_{j}}^{[l_{i_{j}}]} \big|^{2} P_{i_{j}}^{[l_{i_{j}}]}}
{\big| h_{i_{j}i_{j-1}}^{[l_{i_{j}}]} \big|^{2} P_{i_{j}}^{[l_{i_{j}}]}} \geq
\big| h_{i_{j+1}i_{j}}^{[s_{i_{j+1}}]} \big|^{2} P_{i_{j+1}}^{[s_{i_{j+1}}]}
\end{equation}
while from the condition in \eqref{eq:TIN_condition_2}, combined with the partition in \eqref{eq:l_i_j_prime_subset_2}, we obtain
\begin{equation}
\alpha_{i_{j}i_{j}}^{[l_{i_{j}}]}  - \alpha_{i_{j}i_{j-1}}^{[l_{i_{j}}]} \geq   \alpha_{i_{j}i_{j}}^{[s_{i_{j}}'']}
\; \Leftrightarrow \;
\frac{\big| h_{i_{j}i_{j}}^{[l_{i_{j}}]} \big|^{2} P_{i_{j}}^{[l_{i_{j}}]}}
{\big| h_{i_{j}i_{j-1}}^{[l_{i_{j}}]} \big|^{2} P_{i_{j}}^{[l_{i_{j}}]}} \geq
\big| h_{i_{j}i_{j}}^{[s_{i_{j}}'']} \big|^{2} P_{i_{j}}^{[s_{i_{j}}'']}, \; \forall s_{i_{j}}'' \in \langle l_{i_{j}} \rangle''.
\end{equation}
Plugging the upper bound in \eqref{eq:cond_var_UB} into \eqref{eq:conditional_entropy_Gaussian_2}, which in turn, is plugged into \eqref{eq:sum_rate_bound_1}, the bound in \eqref{eq:capacity_outer_2} is obtained. This completes the proof.
\section{Conclusion and Future Directions}
In this work, we considered the problem of TIN-optimality in the Gaussian IMAC,  motivated by uplink scenarios in cellular networks.
We proposed an adequate definition of TIN for cellular networks in which each cell carries out a power-controlled version of its
capacity achieving  strategy while treating all inter-cell interference as noise.
According to this definition, we derived a TIN-achievable GDoF region for the IMAC
through a novel application of the potential graph approach.
Then we identified two regimes of interest: 1) a TIN-convexity regime in which the proposed
TIN-achievable GDoF region is convex without the need for time-sharing, and
2) a TIN-optimality regime, contained in the TIN-convexity regime, in which the TIN-achievable GDoF region is optimal and
leads to a constant-gap characterization of the capacity region.

An interesting future direction is to employ the identified conditions and GDoF
characterizations to design efficient scheduling and power control  algorithms.
TIN-inspired scheduling algorithms for device-to-device communications,
 modeled by the regular $K$-user IC,  were proposed in \cite{Yi2016,Naderializadeh2014}.
Moreover,  a number of GDoF-based, low-complexity power allocation algorithms
for TIN in the regular IC were proposed in \cite{Geng2016,Yi2016,Geng2018}.
Finding similar efficient scheduling and power allocation algorithms for the IMAC (and cellular scenarios in general) is of
great practical importance.


Another interesting direction following this work is to consider downlink scenarios.
Such scenarios are modeled by the Gaussian interfering broadcast channel (IBC).
The TIN definition proposed here for cellular settings extends to the IBC, where superposition coding and successive decoding
can be employed in each cell while treating inter-cell interference as noise.
It is of interest to investigate the relationship between the IMAC and IBC under TIN and whether a form of uplink-downlink duality holds,
from which solving one problem leads directly to a solution for the other.
Some progress along these lines was recently reported in \cite{Joudeh2019}.
\appendix
\section{Numerical Evaluations in Simple Cellular Models}
\label{appendix:numerical_evaluations}
\begin{figure}
\centering
\includegraphics[width = 1.0\textwidth,trim={0.9cm 0cm 0cm 9.5cm},clip]{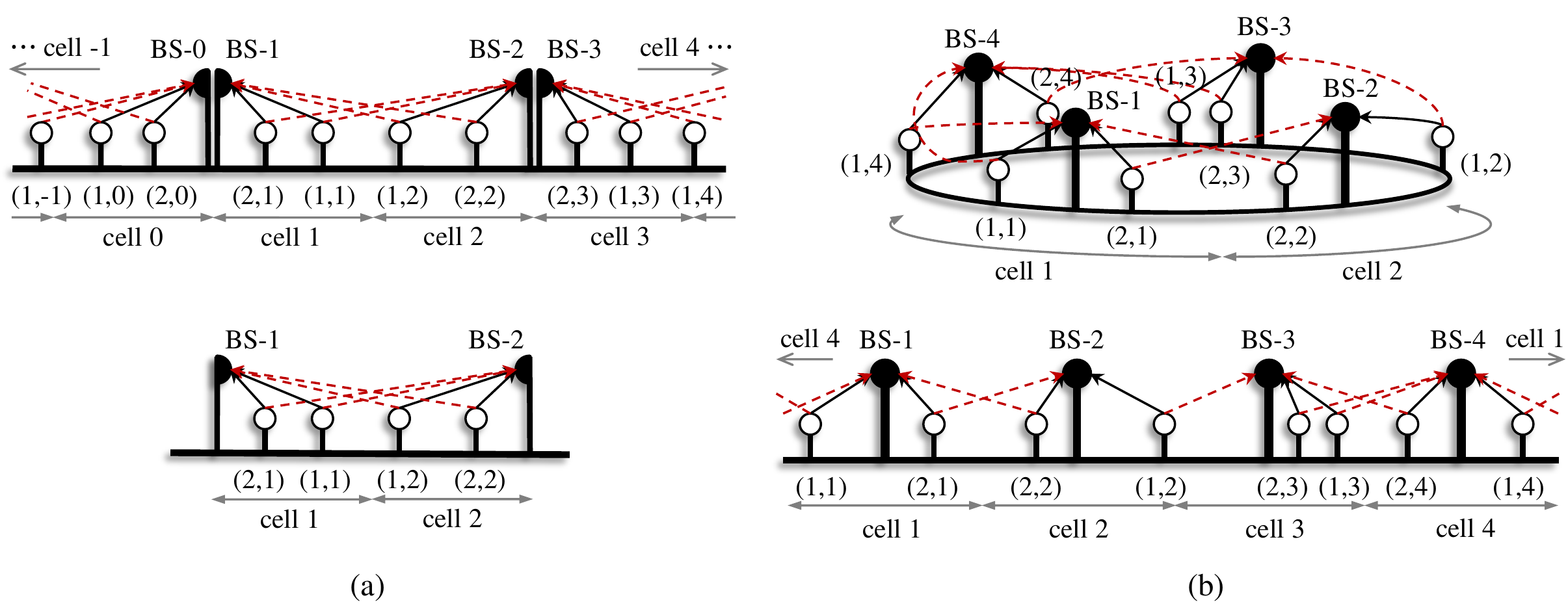}
\caption{\small Two simple cellular arrangements: a sectorized linear cell-array in (a) and a
circular cell-array in (b). Original arrangements are given in top figures, while bottom figures show
equivalent arrangements sufficient to test for TIN conditions. Users are denoted by their index tuples for brevity.}
\label{fig:cell_networks}
\end{figure}
In this appendix, we evaluate the probability that the TIN conditions, identified in Theorem \ref{theorem:polyhedrality}
and Theorem \ref{theorem:TIN_optimality}, are satisfied in simple cellular scenarios with fixed base station locations and random user locations.
For simplicity, we restrict our attention to the influence of distance-dependent path loss while neglecting shadowing and small-scale fading effects.
We consider the two following cellular arrangements, which are essentially variants of the classical and modified Wyner models \cite{Wyner1994,Somekh2007}:
\begin{enumerate}[(a)]
\item \emph{Sectorized linear cell-array}: In this model, cellular \emph{sites} are uniformly ordered in a linear array, where each site covers a segment of length $2r$ and is placed at the center of such segment. We assume sectorization where each site consists of two base stations, each with a directional antenna pointing in a distinct direction (left or right). Hence, a base station covers a cell (or sector) of length $r$ in its corresponding direction, e.g. Fig. \ref{fig:cell_networks}(a)-top.
\item \emph{Circular cell-array}: In this model, $K$ sites are uniformly ordered in a circular array. As in the above model, each site is
placed at the center of a $2r$ long segment. Unlike the above model however, here we do not assume sectorization. Instead, each site consists of one base station with an omnidirectional antenna covering a cell of length $2r$, e.g. Fig. \ref{fig:cell_networks}(b)-top.
\end{enumerate}
For both arrangements, $L$ users are randomly and independently placed in each cell
with locations drawn from a uniform distribution over the cell segment, while excluding a
segment of length $2r_{0}$ about the center of each site.
Under distance-dependent path loss, strength levels of different links are determined by the corresponding distances.
Therefore, the above arrangements enjoy the property that desired links are stronger than interfering links.
Moreover, it can be easily verified that for the purpose of checking GDoF-based TIN conditions in the above settings, there is no loss of generality in making the common assumption that interference is limited to adjacent cells.

In our numerical evaluations, we focus on $2$ cells for the sectorized linear model (see Fig. \ref{fig:cell_networks}(a)-bottom), as each pair of interfering cells can be treated as an independent network. For the circular model, we consider a network with $4$ cells as shown in Fig. \ref{fig:cell_networks}(b).
The distance-dependent path loss is modeled as $\mathrm{PL}(d) = 148.1 + 37.6\log_{10}(d)$ in dB,
where $d$ is the distance in kilometers.
Each user has a transmit power of $23$ dBm, while the base station noise floor is given by $-102$ dBm
(i.e. noise power spectral density: $-172$ dBm/Hz, receiver noise figure: $2$ dB, and transmission bandwidth: $10$ MHz).
We set $r_{0}$ to $35$ meters, while $r$ and $L$ are varied.
The results of our numerical evaluations are shown in Fig. \ref{fig:TIN_prob},
where each probability value is calculated from $10^{4}$ random user placements.
\begin{figure}
\centering
\includegraphics[width = 0.9\textwidth,trim={1.8cm 0cm 1.8cm 0cm},clip]{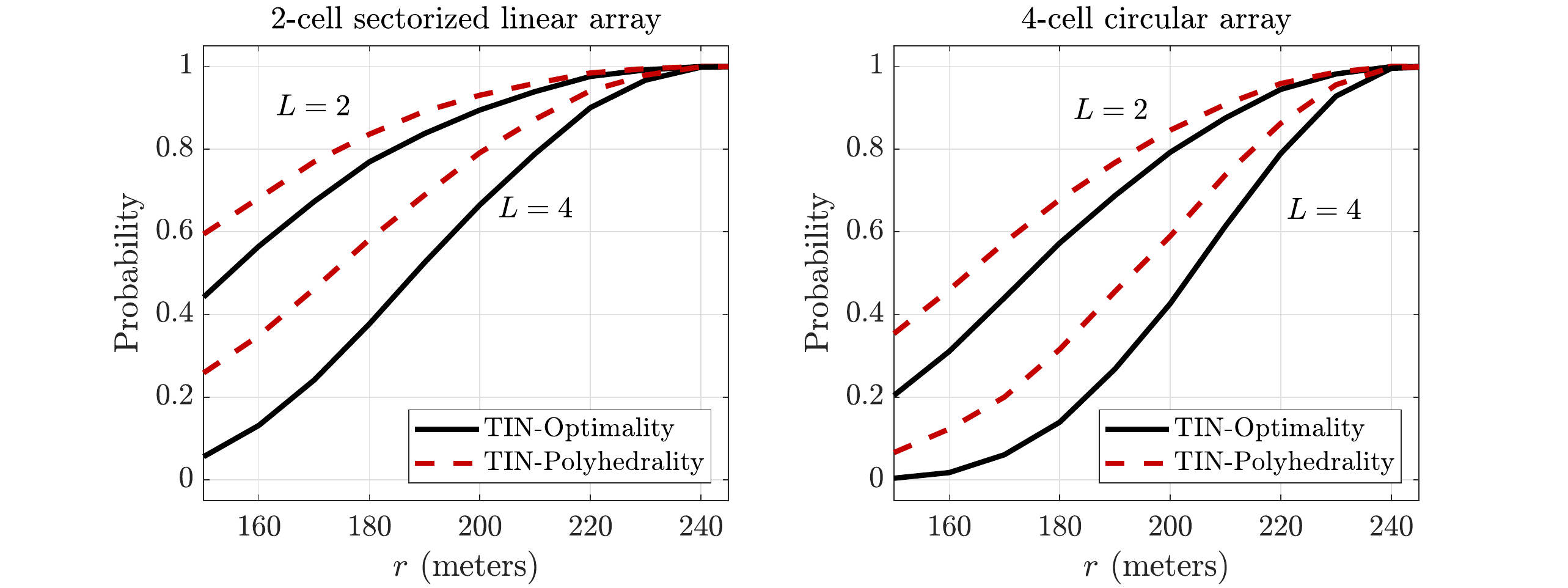}
\caption{\small Influence of cell size (determined by $r$) and number of users per cell (denoted by $L$) on the probabilities that the TIN-convexity conditions (Theorem \ref{theorem:polyhedrality}) and the TIN-optimality conditions (Theorem \ref{theorem:TIN_optimality}) hold
in a sectorized linear cell-array with 2 cells (left) and a circular cell-array with 4 cells (right).}
\label{fig:TIN_prob}
\end{figure}

As a direct consequence of Remark \ref{remark:TIN_poly_includes_TIN_opt}, the probability that the TIN-convexity conditions of Theorem \ref{theorem:polyhedrality} hold in a given setting is no less than the probability that the TIN-optimality conditions of Theorem \ref{theorem:TIN_optimality} hold in the same setting, which is clearly exhibited in Fig. \ref{fig:TIN_prob}.
Such probabilities decrease with an increased number of users per cell (i.e. $L$), which is not surprising as more users
induce more conditions to be satisfied.
We also observe that all probabilities increase with cell size (determined by site radius $r$).
This is due to the fact that as the distance between adjacent cells increases, the effects of inter-cell interference become less
pronounced, making the TIN-convexity and optimality conditions more likely to hold (e.g. set cross link strengths to small values in
\eqref{eq:TIN_condition_2} and \eqref{eq:TIN_condition_1}).
For example, under the adopted system parameters, the cell-edge SNR is about $0$ dB for
$r=243$ meters, enabling both sets of TIN conditions to hold with
probability $1$ as inter-cell interference remains below noise level.
The results in Fig. \ref{fig:TIN_prob}, albeit restricted to simple cellular models, show the potential broadness of the regimes for which the TIN conditions identified in Theorem \ref{theorem:polyhedrality} and Theorem \ref{theorem:TIN_optimality} will hold in more realistic cellular settings.
\section{Proof of Lemma \ref{lemma:redundant_circuits}}
\label{appendix:proof_lemma:redundant_circuits}
First, we observe that  for any multi-user circuit $\mathbf{c}(\mathbf{s}^{n})$,
where $\mathbf{s}^{n} = \big(\mathbf{s}_{1}^{n_{1}},\ldots,\mathbf{s}_{m}^{n_{m}} \big) \in \Sigma(\mathcal{K})$
and $n\geq2$,  the corresponding GDoF inequality obtained from the
non-negative length condition of Lemma \ref{lemma:non_negative_circuit_length}
is expressed in terms of the single-cell partition as
\begin{equation}
\label{eq:GDoF_ineq_without_u}
\sum_{j = 1}^{m} \sum_{s_{j} = 1}^{n_{j}} d(e_{j}^{s_{j}}) \leq
\sum_{j = 1}^{m} \sum_{s_{j} = 1}^{n_{j}} \Big[ \alpha(e_{j}^{s_{j}}) - w(e_{j}^{s_{j}}) \Big].
\end{equation}
Moreover, it is useful to observe that for intra-cell directed edges, i.e.
edges connecting pairs of users belonging to the same cell, we have
\begin{equation}
\label{eq:intra_cell_edge_w}
w(e_{j}^{s_{j}}) = \alpha(e_{j}^{s_{j}+1}) \mathbbm{1}_{\overline{\mathcal{E}_{1}'}}(e_{j}^{s_{j}}), \; \forall s_{j} \in \langle n_{j} - 1 \rangle, \; j \in \langle m \rangle
\end{equation}
which follows from \eqref{eq:edges_in_circuits}, \eqref{eq:edges_alpha} and \eqref{eq:edges_w}.

\emph{Necessity of \ref{cond:redundant_2}}:
To show this, consider a directed circuit $\mathbf{c}(\mathbf{s}^{n})$,
as expressed in \eqref{eq:circuit_fn_1}, and suppose that it  violates \ref{cond:redundant_2}.
For this to hold, we must have $n \geq 4$.
Moreover, we assume without loss of generality that $i_{1} = i_{k}$, for some $2 < k < n$, and that $l_{k}^{n_{k}} > l_{1}^{n_{1}}$ (otherwise we rename the indices).
The resulting GDoF inequality obtained from the non-negative length condition
for this circuit is given by \eqref{eq:GDoF_ineq_without_u}.
We show that the same set of users traversed by $\mathbf{c}(\mathbf{s}^{n})$ can be used to construct two smaller directed circuits with
GDoF inequalities that imply \eqref{eq:GDoF_ineq_without_u}.
Let us define
\begin{equation}
n_{1}^{\star} \triangleq \max \big\{s_{1} \in \langle n_{1} \rangle : l_{k}^{n_{k}} > l_{1}^{s_{1}} \big\}
\end{equation}
which exists since $l_{k}^{n_{k}} > l_{1}^{n_{1}}$.
The first constructed directed circuit is given by
\begin{align}
\label{eq:distinct_cells_sub_circuit_1}
\mathbf{c}' = \big(e_{1}^{n_{1}^{\star}},\ldots,e_{1}^{n_{1}},\ldots,e_{k}^{1},\ldots,e_{k}^{n_{k} - 1}, \tilde{e}_{k}^{n_{k}}\big)
\end{align}
where $\tilde{e}_{k}^{n_{k}} = \big((l_{k}^{n_{k}},i_{k}),(l_{1}^{n_{1}^{\star}},i_{1})\big)$.
$\mathbf{c}'$ is a valid circuit of $\mathcal{G}_{\mathrm{p}}$ which yields the inequality given by
\begin{multline}
\label{eq:distinct_cells_sub_GDoF_1}
\sum_{s_{1} = n_{1}^{\star}}^{n_{1}} d(e_{1}^{s_{1}}) +
\sum_{j = 2}^{k} \sum_{s_{j} = 1}^{n_{j}} d(e_{j}^{s_{j}}) \leq  \sum_{j = 2}^{k-1} \sum_{s_{j} = 1}^{n_{j}} \Big[ \alpha(e_{j}^{s_{j}}) - w(e_{j}^{s_{j}}) \Big] +  \sum_{s_{1} = n_{1}^{\star}}^{n_{1}} \Big[ \alpha(e_{1}^{s_{1}}) - w(e_{1}^{s_{1}}) \Big]
\\+\sum_{s_{k} = 1}^{n_{k} - 1} \Big[ \alpha(e_{k}^{s_{k}}) - w(e_{k}^{s_{k}}) \Big] + \Big[ \alpha(e_{k}^{n_{k}}) - \alpha(e_{1}^{n_{1}^{\star}}). \Big]
\end{multline}
In the above inequality, we have used $d(\tilde{e}_{k}^{n_{k}}) = d(e_{k}^{n_{k}})$ and
$\alpha(\tilde{e}_{k}^{n_{k}}) = \alpha(e_{k}^{n_{k}})$, where $e_{k}^{n_{k}}$ is traversed by the original directed circuit $\mathbf{c}(\mathbf{s}^{n})$, in addition to
$w(\tilde{e}_{k}^{n_{k}}) = \alpha(e_{1}^{n_{1}^{\star}}) $ which follows from $i_{k} = i_{1}$, $l_{k}^{n_{k}} > l_{1}^{n_{1}^{\star}}$ and
\eqref{eq:intra_cell_edge_w}.
The second directed circuit is given by
\begin{align}
\label{eq:distinct_cells_sub_circuit_2}
\mathbf{c}'' = \big(e_{1}^{1},\ldots,e_{1}^{n_{1}^{\star}-1},\tilde{e}_{1}^{n_{1}^{\star}},e_{k+1}^{1},\ldots,e_{k+1}^{n_{k+1}},\ldots,e_{m}^{1},\ldots,e_{m}^{n_{m}}\big)
\end{align}
where $\tilde{e}_{1}^{n_{1}^{\star}} = \big((l_{1}^{n_{1}^{\star}},i_{1}),(l_{k+1}^{1},i_{k+1}) \big)$.
This is also a valid directed circuit of $\mathcal{G}_{\mathrm{p}}$
and its corresponding GDoF inequality is given by
\begin{equation}
\label{eq:distinct_cells_sub_GDoF_2}
\sum_{s_{1} = 1}^{n_{1}^{\star}} d(e_{1}^{s_{1}}) + \sum_{j = k+1}^{m} \sum_{s_{j} = 1}^{n_{j}} d(e_{j}^{s_{j}}) \leq   \!
\Big[ \alpha(e_{1}^{n_{1}^{\star}}) - w(e_{k}^{n_{k}}) \Big] +
\sum_{s_{1} = 1}^{n_{1}^{\star}-1} \! \Big[ \alpha(e_{1}^{s_{1}}) - w(e_{1}^{s_{1}}) \Big]
+\sum_{j = k+1}^{m} \sum_{s_{j} = 1}^{n_{j}} \! \Big[ \alpha(e_{j}^{s_{j}}) - w(e_{j}^{s_{j}}) \Big]
\end{equation}
where we have used $d(\tilde{e}_{1}^{n_{1}^{\star}}) = d(e_{1}^{n_{1}^{\star}})$ and $\alpha(\tilde{e}_{1}^{n_{1}^{\star}}) = \alpha(e_{1}^{n_{1}^{\star}})$, in addition to $w(\tilde{e}_{1}^{n_{1}^{\star}}) = w(e_{k}^{n_{k}})$ which follows from $i_{k} = i_{1}$ and \eqref{eq:edges_w}.
By adding the inequalities in \eqref{eq:distinct_cells_sub_GDoF_1} and \eqref{eq:distinct_cells_sub_GDoF_2}, we obtain
\begin{equation}
\label{eq:distinct_cells_GDoF}
d(e_{1}^{n_{1}^{\star}}) + \sum_{j = 1}^{m} \sum_{s_{j} = 1}^{n_{j}} d(e_{j}^{s_{j}}) \leq
\sum_{j = 1}^{m} \sum_{s_{j} = 1}^{n_{j}} \Big[ \alpha(e_{j}^{s_{j}}) - w(e_{j}^{s_{j}}) \Big].
\end{equation}
Since $d(e_{1}^{n_{1}^{\star}}) \geq 0$, the inequality in \eqref{eq:distinct_cells_GDoF} implies the inequality in \eqref{eq:GDoF_ineq_without_u}, and hence $\mathbf{c}(\mathbf{s}^{n})$ is redundant compared to $\mathbf{c}'$ and $\mathbf{c}''$.
Note that users associated with cells $i_{1}$ and $i_{k}$ are now cyclicly adjacent in $\mathbf{c}'$ and constitute one single-cell subsequence, while  $i_{k}$ does not appear in $\mathbf{c}''$.
If any of $\mathbf{c}'$ or $\mathbf{c}''$ still violates \ref{cond:redundant_2}, we apply the above argument recursively until all resulting circuits satisfy  \ref{cond:redundant_2}.

\emph{Necessity of \ref{cond:redundant_3}}:
Now we proceed to show the necessity of \ref{cond:redundant_3} while assuming that the condition in \ref{cond:redundant_2} is satisfied.
Consider an arbitrary subset of users $\mathcal{S} \subseteq \mathcal{K}$, where $|\mathcal{S}| = n \geq 2$.
Each directed circuit $\mathbf{c}(\mathbf{s}^{n})$, induced by a cyclic sequence $\mathbf{s}^{n} \in \Sigma(\mathcal{S})$ spanning all users in $\mathcal{S}$,
gives a different inequality for the same sum-GDoF $\sum_{(l,i) \in \mathcal{S}}d_{i}^{[l]}$. Such inequalities take the form in \eqref{eq:GDoF_ineq_without_u}.
As a first step, we show that a necessary condition for the non-redundancy of $\mathbf{c}(\mathbf{s}^{n})$ is
\begin{equation}
\label{eq:circuit_order_condition_2_n}
l_{j}^{2} > l_{j}^{3} > \cdots > l_{j}^{n_{j}}, \ \forall j \in \langle m \rangle.
\end{equation}
That is, apart from the first user in each single-cell subsequence $\mathbf{s}^{n_{j}}_{j}$, all following users should be ordered in a descending manner.
Considering the right-hand-side of \eqref{eq:GDoF_ineq_without_u}, we have
\begin{align}
\label{eq:GDoF_conditions_order_1}
\sum_{j=1}^{m} \sum_{s_{j} = 1}^{n_{j}} \!  \Big [ \alpha(e_{j}^{s_{j}}) - w(e_{j}^{s_{j}}) \Big]
& = \sum_{j=1}^{m} \! \Big[ \alpha(e_{j}^{1}) - w(e_{j}^{n_{j}}) +  \alpha(e_{j}^{2})\mathbbm{1}_{\mathcal{E}_{1}'}(e_{j}^{1}) \Big]
 \! \! +  \! \! \sum_{j=1}^{m} \! \sum_{s_{j} = 2}^{n_{j} - 1}  \! \Big[ \alpha(e_{j}^{s_{j} + 1}) - w(e_{j}^{s_{j}}) \Big] \\
\label{eq:GDoF_conditions_order_2}
& \geq \sum_{j=1}^{m} \Big[ \alpha(e_{j}^{1}) - w(e_{j}^{n_{j}}) + \alpha(e_{j}^{2})\mathbbm{1}_{\mathcal{E}_{1}'}(e_{j}^{1}) \Big].
\end{align}
The equality in \eqref{eq:GDoF_conditions_order_1} uses $\alpha(e_{j}^{2}) - w(e_{j}^{1}) = \alpha(e_{j}^{2})\mathbbm{1}_{\mathcal{E}_{1}'}(e_{j}^{1})$, which is obtained from \eqref{eq:intra_cell_edge_w}.
Note that if $n_{j} = 1$ for some $j \in \langle m \rangle$, then $\alpha(e_{j}^{2})\mathbbm{1}_{\mathcal{E}_{1}'}(e_{j}^{1}) = 0$ by definition of $\mathbbm{1}_{\mathcal{E}_{1}'}(\cdot)$, and $j$ does not contribute to
the double summation on the right-hand side of \eqref{eq:GDoF_conditions_order_1}.
The inequality in \eqref{eq:GDoF_conditions_order_2} follows from $ 0\leq  w(e_{j}^{s_{j}}) \leq \alpha(e_{j}^{s_{j} + 1})$, $s_{j} \leq n_{j}-1$, as seen from \eqref{eq:intra_cell_edge_w}.
Note that  \eqref{eq:GDoF_conditions_order_2} holds with equality when \eqref{eq:circuit_order_condition_2_n} is satisfied, yielding a tighter GDoF inequality compared to when \eqref{eq:circuit_order_condition_2_n} is violated.

We proceed by focusing on cyclic sequences $\mathbf{s}^{n} \in \Sigma(\mathcal{S})$ that satisfy both \ref{cond:redundant_2} and  \eqref{eq:circuit_order_condition_2_n}.
The next step is to show that for any such sequence, if $l_{j}^{1} < l_{j}^{2}$ for some $j \in \langle m \rangle$, then the corresponding GDoF inequality is redundant.
Suppose, without loss of generality, that we have $\mathbf{s}^{n}  = (\mathbf{s}_{1}^{n_{1}},\ldots,\mathbf{s}_{m}^{n_{m}})$ with
$n_{1} \geq 2$ and $l_{1}^{1} < l_{1}^{2}$.
The GDoF inequality obtained from $\mathbf{c}(\mathbf{s}^{n})$ is given by
\begin{equation}
\label{eq:GDoF_ordered_cells_0}
\sum_{j=1}^{m} \sum_{s_{j} = 1}^{n_{j}} d(e_{j}^{s_{j}}) \leq
\Big[ \alpha(e_{1}^{1}) - w(e_{1}^{n_{1}})
+ \alpha(e_{1}^{2}) \Big] +
\sum_{j=2}^{m} \Big[ \alpha(e_{j}^{1}) - w(e_{j}^{n_{j}})
+ \alpha(e_{j}^{2})\mathbbm{1}_{\mathcal{E}_{1}'}(e_{j}^{1}) \Big]
\end{equation}
where we have used \eqref{eq:GDoF_conditions_order_2} in addition to  $l_{1}^{1} < l_{1}^{2}$.
We construct two smaller directed circuits from the users traversed by
$\mathbf{c}(\mathbf{s}^{n})$ and show that their corresponding GDoF inequalities imply \eqref{eq:GDoF_ordered_cells_0}.
Consider the directed circuit given by
\begin{align}
\label{eq:ordered_cells_sub_circuit_1}
\tilde{\mathbf{c}}' = \big(\tilde{e}_{1}^{1},e_{2}^{1},\ldots,e_{2}^{n_{2}},\ldots,e_{m}^{1},\ldots,e_{m}^{n_{m}}\big)
\end{align}
where $\tilde{e}_{1}^{1} = \big((l_{1}^{1},i_{1}),(l_{2}^{1},i_{2})\big)$.
This directed circuit is valid for $\mathcal{G}_{\mathrm{p}}$, satisfies \ref{cond:redundant_2} and  \eqref{eq:circuit_order_condition_2_n},
and yields the GDoF inequality given by
\begin{equation}
\label{eq:GDoF_ordered_cells_1}
d(e_{1}^{1}) + \sum_{j=2}^{m} \sum_{s_{j} = 1}^{n_{j}} d(e_{j}^{s_{j}}) \leq \Big[ \alpha(e_{1}^{1}) - w(e_{1}^{n_{1}}) \Big] + \sum_{j=2}^{m} \Big[ \alpha(e_{j}^{1}) - w(e_{j}^{n_{j}})
+ \alpha(e_{j}^{2})\mathbbm{1}_{\mathcal{E}_{1}'}(e_{j}^{1}) \Big].
\end{equation}
where $d(\tilde{e}_{1}^{1}) = d(e_{1}^{1})$, $\alpha(\tilde{e}_{1}^{1}) = \alpha(e_{1}^{1})$
and  $w(\tilde{e}_{1}^{1}) = w(e_{1}^{n_{1}})$ are used in \eqref{eq:GDoF_ordered_cells_1}.
Now consider a second directed circuits  given by
\begin{align}
\label{eq:ordered_cells_sub_circuit_2}
\tilde{\mathbf{c}}'' = \big(e_{1}^{2},\ldots,e_{1}^{n_{1} - 1},\tilde{e}_{1}^{n_{1}}\big)
\end{align}
where $\tilde{e}_{1}^{n_{1}} = \big((l_{1}^{n_{1}},i_{1}),(l_{1}^{2},i_{1})\big)$.
This is a single-cell circuit with users ordered in a descending manner. The resulting GDoF inequality is given by
\begin{equation}
\label{eq:GDoF_ordered_cells_2}
\sum_{s_{1} = 2}^{n_{1}} d(e_{1}^{s_{1}})  \leq \alpha(e_{1}^{2}).
\end{equation}
It is readily seen that the inequality in \eqref{eq:GDoF_ordered_cells_0} is retrieved by adding the inequalities in \eqref{eq:GDoF_ordered_cells_1} and \eqref{eq:GDoF_ordered_cells_2}, hence $\mathbf{c}(\mathbf{s}^{n})$ is redundant compared to $\tilde{\mathbf{c}}'$ and $\tilde{\mathbf{c}}''$.
If $l_{j}^{1} < l_{j}^{2}$ for some $j \in \langle 2:m \rangle$ in $\tilde{\mathbf{c}}'$, we apply the same steps above recursively,
hence showing that non-redundancy necessitates
\begin{equation}
\label{eq:circuit_order_condition_1_n}
l_{j}^{1}  > l_{j}^{2} >  \cdots > l_{j}^{n_{j}}, \ \forall j \in \langle m \rangle.
\end{equation}
We are left with directed circuits $\mathbf{c}(\mathbf{s}^{n}) \in \Sigma(\mathcal{S})$ that satisfy  \ref{cond:redundant_2} and \eqref{eq:circuit_order_condition_1_n}, for which the corresponding GDoF inequalities take the form
\begin{equation}
\label{eq:GDoF_ordered_cells_3}
\sum_{j=1}^{m} \sum_{s_{j} = 1}^{n_{j}} d(e_{j}^{s_{j}}) \leq \sum_{j=1}^{m} \Big[ \alpha(e_{j}^{1}) - w(e_{j}^{n_{j}})\Big].
\end{equation}
The final step is to show that by including all users in
$\tilde{\mathcal{S}} = \big\{ (l_{j},i_{j}) : l_{j} \in \langle l_{j}^{1} \rangle \setminus \{l_{j}^{1}, \ldots,l_{j}^{n_{j}} \} , j \in \langle m \rangle \big\}$, we obtain a GDoF inequality that implies \eqref{eq:GDoF_ordered_cells_3}.
In particular, consider the cyclic sequence $\tilde{\mathbf{s}} = \big(\tilde{\mathbf{s}}_{1}^{n_{1}},\ldots,\tilde{\mathbf{s}}_{m}^{n_{m}} \big) \in \Sigma(\mathcal{S} \cup \tilde{\mathcal{S}})$, obtained by augmenting each single-cell subsequence $\mathbf{s}_{j}^{n_{j}}$ in $\mathbf{s}^{n}$
as $\tilde{\mathbf{s}}_{j}^{n_{j}} = \big((l_{j}^{1},i_{j}),(l_{j}^{1} - 1,i_{j}),\ldots,(1,i_{j}) \big)$, $\forall j \in \langle m \rangle$.
The corresponding directed circuit is given by
\begin{equation}
\mathbf{c}(\tilde{\mathbf{s}}) = (\tilde{e}_{1}^{1},\ldots,\tilde{e}_{1}^{l_{1}^{1}},\ldots,\tilde{e}_{j}^{1},\ldots,\tilde{e}_{j}^{l_{j}^{1}})
\end{equation}
where edges are defined as in \eqref{eq:edges_in_circuits}, but with respect to the cyclic sequence $\tilde{\mathbf{s}}$.
From the non-negative circuit length condition,  $\mathbf{c}(\tilde{\mathbf{s}})$ yields the GDoF inequality given by
\begin{equation}
\label{eq:GDoF_ordered_cells_4}
\sum_{j=1}^{m} \sum_{s_{j} = 1}^{l_{j}^{1}} d(\tilde{e}_{j}^{s_{j}})\leq \sum_{j=1}^{m} \Big[ \alpha(\tilde{e}_{j}^{1})-
w(\tilde{e}_{j}^{l_{j}^{1}})\Big].
\end{equation}
Note that every user traversed by $\mathbf{c}(\mathbf{s}^{n})$ is also traversed by $\mathbf{c}(\tilde{\mathbf{s}})$,
which may also traverse additional users.
On the other hand, we have
$\alpha(\tilde{e}_{j}^{1})- w(\tilde{e}_{j}^{l_{j}^{1}}) = \alpha(e_{j}^{1})- w(e_{j}^{n_{j}})$, $\forall j \in \langle m \rangle$.
Therefore, \eqref{eq:GDoF_ordered_cells_4} implies \eqref{eq:GDoF_ordered_cells_3}, hence showing the necessity of \ref{cond:redundant_3}.
\section{Proof of Lemma \ref{lemma:entropy_diff}}
\label{appendix:proof_lemma:entropy_diff}
We start by finding an upper bound for $h(Y_{a}^{n}) - h(Y_{b}^{n}) $ through the following steps
\begin{align}
\nonumber
h(Y_{a}^{n}) - h(Y_{b}^{n}) & = h(Y_{a}^{n}) - h(Y_{b}^{n}) - h(Z_{a}^{n})  + h(Z_{b}^{n}) \\
\nonumber
& = I\big( X_{1}^{n},\ldots,X_{l}^{n} ; Y_{a}^{n} \big)  - I\big( X_{1}^{n},\ldots,X_{l}^{n} ; Y_{b}^{n} \big)  \\
\nonumber
& = \sum_{i = 1}^{l} \Big[ I\big( X_{i}^{n} ; Y_{a}^{n} | X_{1}^{n},\ldots,X_{i-1}^{n} \big) -
I\big( X_{i}^{n} ; Y_{b}^{n} | X_{1}^{n},\ldots,X_{i-1}^{n} \big) \Big] \\
\label{eq:lemma_entropy_diff_ub_2}
& = \sum_{i = 1}^{l} \Big[ I\big( X_{i}^{n} ; a_{i}X_{i}^{n} + \cdots + a_{l}X_{l}^{n} + Z_{a}^{n} \big) -
I\big( X_{i}^{n} ; b_{i}X_{i}^{n} + \cdots + b_{l}X_{l}^{n} + Z_{b}^{n}\big) \Big] \\
\label{eq:lemma_entropy_diff_ub_3}
& \leq \sum_{i = 1}^{l} \Big[ I\big( X_{i}^{n} ; a_{i}X_{i}^{n} + Z_{a}^{n} \big) -
I\big( X_{i}^{n} ; b_{i}X_{i}^{n} + \cdots + b_{l}X_{l}^{n} + Z_{b}^{n}\big) \Big]
\end{align}
where \eqref{eq:lemma_entropy_diff_ub_2} is due to the independence of all input sequences and noise, and \eqref{eq:lemma_entropy_diff_ub_3} follows from the data processing inequality \cite{Cover2012}.
Now we focus on the difference between the mutual information terms in \eqref{eq:lemma_entropy_diff_ub_3} for a given $i \in \langle l \rangle$.
Defining $\tilde{b}_{i+1} \triangleq \sqrt{P_{i+1}}b_{i+1}$, the mutual information term with the negative sign is bounded below as
\begin{align}
\label{eq:lemma_entropy_diff_ub_2_1}
I\big( X_{i}^{n} ; & b_{i}X_{i}^{n} + \cdots + b_{l}X_{l}^{n} + Z_{b}^{n}\big)  =
I\Big( X_{i}^{n} ; \frac{b_{i}}{\tilde{b}_{i+1}}X_{i}^{n} + \cdots + \frac{b_{l}}{{\tilde{b}_{i+1}}}X_{l}^{n} + \frac{1}{{\tilde{b}_{i+1}}}Z_{b}^{n}\Big)   \\
\label{eq:lemma_entropy_diff_ub_2_2}
&  \geq I\Big( X_{i}^{n} ; \frac{b_{i}}{\tilde{b}_{i+1}}X_{i}^{n} + \cdots + \frac{b_{l}}{{\tilde{b}_{i+1}}}X_{l}^{n} + Z_{b}^{n}\Big) \\
\nonumber
&  = h\Big(\frac{b_{i}}{\tilde{b}_{i+1}}X_{i}^{n} + \cdots + \frac{b_{l}}{{\tilde{b}_{i+1}}}X_{l}^{n} + Z_{b}^{n}\Big)
- h\Big(\frac{b_{i+1}}{\tilde{b}_{i+1}}X_{i+1}^{n} + \cdots + \frac{b_{l}}{{\tilde{b}_{i+1}}}X_{l}^{n} + Z_{b}^{n}\Big) \\
\label{eq:lemma_entropy_diff_ub_2_3}
& \geq
h\Big(\frac{b_{i}}{{\tilde{b}_{i+1}}}X_{i}^{n} + Z_{b}^{n}\Big)
- h\Big(\frac{b_{i+1}}{\tilde{b}_{i+1}}X_{i+1}^{n} + \cdots + \frac{b_{l}}{{\tilde{b}_{i+1}}}X_{l}^{n} + Z_{b}^{n}\Big) \\
\nonumber
& =
I\Big(X_{i}^{n} ; \frac{b_{i}}{{\tilde{b}_{i+1}}}X_{i}^{n} + Z_{b}^{n}\Big)
- I\Big(X_{i+1}^{n},\ldots,X_{l}^{n} ; \frac{b_{i+1}}{\tilde{b}_{i+1}}X_{i+1}^{n} + \cdots + \frac{b_{l}}{{\tilde{b}_{i+1}}}X_{l}^{n} + Z_{b}^{n}\Big) \\
\label{eq:lemma_entropy_diff_ub_2_4}
& \geq I\Big(X_{i}^{n} ; a_{i}X_{i}^{n} + Z_{b}^{n}\Big)
- I\Big(X_{i+1}^{n},\ldots,X_{l}^{n} ; \frac{b_{i+1}}{\tilde{b}_{i+1}}X_{i+1}^{n} + \cdots + \frac{b_{l}}{{\tilde{b}_{i+1}}}X_{l}^{n} + Z_{b}^{n}\Big).
\end{align}
The inequality in $\eqref{eq:lemma_entropy_diff_ub_2_2}$  follows from $|\tilde{b}_{i+1}|^{2} \geq 1$ (see \eqref{eq:lemma_entropy_diff_cond_1} and $P_{l+1}|b_{l+1}|^{2} = 1$), which makes  the output in \eqref{eq:lemma_entropy_diff_ub_2_1} less noisy compared to
the output in \eqref{eq:lemma_entropy_diff_ub_2_2}.
The inequality in \eqref{eq:lemma_entropy_diff_ub_2_3} follows by conditioning the differential entropy with the positive sign and the independence of input sequences and noise.
\eqref{eq:lemma_entropy_diff_ub_2_4} follows from $|a_{i}|^{2} \leq \frac{|b_{i}|^{2}}{|\tilde{b}_{i+1}|^{2}}$ in \eqref{eq:lemma_entropy_diff_cond_1};  this is similar to a Gaussian degraded broadcast channel with input $X_{i}^{n} $ and outputs $a_{i}X_{i}^{n} + Z_{b}^{n}$ and $\frac{b_{i}}{{\tilde{b}_{i+1}}}X_{i}^{n} + Z_{b}^{n}$ \cite{Cover2012}.

By combining the bounds in \eqref{eq:lemma_entropy_diff_ub_3} and \eqref{eq:lemma_entropy_diff_ub_2_4}, we proceed as follows
\begin{align}
\nonumber
h(Y_{a}^{n}) - h(Y_{b}^{n}) & \leq  \sum_{i =1}^{l} I\Big(X_{i+1}^{n},\ldots,X_{l}^{n} ; \frac{b_{i+1}}{\tilde{b}_{i+1}}X_{i+1}^{n} + \cdots + \frac{b_{l}}{{\tilde{b}_{i+1}}}X_{l}^{n} + Z_{b}^{n}\Big) \\
\nonumber
& = \sum_{i =1}^{l} \bigg[ h\bigg(\frac{b_{i+1}}{\tilde{b}_{i+1}}X_{i+1}^{n} + \cdots + \frac{b_{l}}{{\tilde{b}_{i+1}}}X_{l}^{n} + Z_{b}^{n}\bigg) -
h\big( Z_{b}^{n}\big) \bigg] \\
\label{eq:lemma_entropy_diff_ub_3_1}
& \leq  n\sum_{i =1}^{l} \log\bigg(1 + \sum_{j=i+1}^{l} \frac{P_{j}|b_{j}|^{2}}{P_{i+1}|b_{i+1}|^{2}} \bigg) \\
\label{eq:lemma_entropy_diff_ub_3_2}
& \leq  n\sum_{i=1}^{l}\log(i) .
\end{align}
where \eqref{eq:lemma_entropy_diff_ub_3_1} follows by a direct application of the inequality in \cite[(2.8)]{ElGamal2011}
and \eqref{eq:lemma_entropy_diff_ub_3_2} holds because $P_{j}|b_{j}|^{2} \leq P_{k}|b_{k}|^{2}, \forall j \geq k$ (see \eqref{eq:lemma_entropy_diff_cond_1}).
Finally, \eqref{eq:lemma_entropy_diff} follows from \eqref{eq:lemma_entropy_diff_ub_3_2}, which completes the proof.
\section{Proof of Lemma \ref{lemma:user_partition}}
\label{appendix:proof_lemma:user_partition}
From the TIN condition in \eqref{eq:TIN_condition_2} and the definition of the partition in \eqref{eq:partition_prime_2}  and \eqref{eq:partition_prime_1}, since $\alpha_{ii}^{[l_{i}]} - \alpha_{ij}^{[l_{i}]} < \alpha_{i i}^{[s_{i}']}$, $\forall
s_{i} \in \langle l_{i} \rangle'_{j}\setminus\{l_{i}\}$, then we must have
\begin{equation}
\label{eq:partition_lemma_1}
\alpha_{ii}^{[l_{i}]} - \alpha_{ij}^{[l_{i}]} \geq  \alpha_{ii}^{[s_{i}]} - \alpha_{ij}^{[s_{i}]} + \alpha_{ij}^{[l_{i}]}, \; \forall
s_{i} \in \langle l_{i} \rangle'_{j}\setminus\{l_{i}\}.
\end{equation}
As a first step of the proof, we show that \eqref{eq:partition_lemma_1} holds in a more general sense, such that
\begin{equation}
\label{eq:partition_lemma_2}
\alpha_{ii}^{[l_{i}']} - \alpha_{ij}^{[l_{i}']} \geq  \alpha_{ii}^{[s_{i}']} - \alpha_{ij}^{[s_{i}']} + \alpha_{ij}^{[l_{i}']}, \; \forall
s_{i}',l_{i}' \in \langle l_{i} \rangle'_{j} , \; s_{i}'<l_{i}'.
\end{equation}
Suppose that \eqref{eq:partition_lemma_2} does not hold for some $s_{i}' < l_{i}' < l_{i}$.
The TIN conditions in \eqref{eq:TIN_condition_2} dictates that we must have
$\alpha_{ii}^{[l_{i}']} - \alpha_{ij}^{[l_{i}']} \geq \alpha_{ii}^{[s_{i}']} $ instead.
Combining this with \eqref{eq:partition_lemma_1}, we obtain
\begin{align}
\nonumber
\alpha_{ii}^{[l_{i}]} - \alpha_{ij}^{[l_{i}]} & \geq \alpha_{ii}^{[l_{i}']} - \alpha_{ij}^{[l_{i}']} + \alpha_{ij}^{[l_{i}]} \\
\nonumber
& \geq \alpha_{ii}^{[s_{i}']}  + \alpha_{ij}^{[l_{i}]} \\
\label{eq:partition_lemma_3}
& \geq \alpha_{ii}^{[s_{i}']}
\end{align}
which yields a contradiction since $s_{i}' \notin \langle l_{i} \rangle_{j}''$, and hence \eqref{eq:partition_lemma_3} must not hold.
Therefore, \eqref{eq:partition_lemma_2} must hold and we have
\begin{align}
\nonumber
\alpha_{ii}^{[l_{i}]} - \alpha_{ij}^{[l_{i}]} & \geq \alpha_{ii}^{[l_{i}']} - \alpha_{ij}^{[l_{i}']} + \alpha_{ij}^{[l_{i}]} \\
\nonumber
& \geq \alpha_{ii}^{[s_{i}']} - \alpha_{ij}^{[s_{i}']} + \alpha_{ij}^{[l_{i}']} + \alpha_{ij}^{[l_{i}]} \\
\nonumber
& \geq \alpha_{ii}^{[s_{i}']} - \alpha_{ij}^{[s_{i}']} + \alpha_{ij}^{[l_{i}']}
\end{align}
which completes the proof.
\section*{Acknowledgement}
The authors wish to thank the anonymous reviewers for their valuable and timely comments.
The authors also wish to thank Mr. Enrico Piovano for his careful reading of an earlier draft of this paper.
H. Joudeh gratefully acknowledges helpful discussions with Prof. Syed A. Jafar regarding Lemma \ref{lemma:entropy_diff}.
\bibliographystyle{IEEEtran}
\bibliography{References}
\end{document}